\tikzset{
	itria/.style={
		draw,dashed,shape border uses incircle,
		isosceles triangle,shape border rotate=90,yshift=-3.2cm}
}
\newcommand{\tuple}[1]{\ensuremath{{\left\langle{#1}\right\rangle}}\xspace}
\newenvironment{boxedalgo}
  {\begin{center}\begin{boxedminipage}{1\textwidth}}
  {\end{boxedminipage}\end{center}}
\providecommand{\ie}{\text{i.e.}\xspace}
\providecommand{\etal}{\text{et al.}\xspace}
\providecommand{\wrt}{\text{w.r.t.}\xspace}
\newcommand{\naive}{\text{na\"ive}\xspace}
\providecommand{\namedref}[2]{\hyperref[#2]{#1~\ref*{#2}}\xspace}
\providecommand{\lemmaref}[1]{\namedref{Lemma}{lem:#1}}
\providecommand{\theoremref}[1]{\namedref{Theorem}{thm:#1}}
\providecommand{\corollaryref}[1]{\namedref{Corollary}{corol:#1}}
\providecommand{\sectionref}[1]{\namedref{Section}{sec:#1}}
\providecommand{\appendixref}[1]{\namedref{Appendix}{app:#1}}
\providecommand{\figureref}[1]{\namedref{Fig.}{fig:#1}}
\providecommand{\claimref}[1]{\namedref{Claim}{clm:#1}}
\providecommand{\zo}{\ensuremath{{\{0,1\}}}\xspace}
\providecommand{\poly}{\ensuremath{\mathrm{poly}}\xspace}
\providecommand{\eps}[0]{\ensuremath{\varepsilon}}
\let\epsilon\eps
\let\leq\leqslant
\let\geq\geqslant
\newcommand{\cX}{\ensuremath{{\mathcal X}}\xspace}
\providecommand{\defeq}[0]{\ensuremath{\coloneqq}\xspace}
\providecommand{\eqdef}[0]{\ensuremath{\eqqcolon}\xspace}
\DeclarePairedDelimiter\abs{\lvert}{\rvert}
\DeclareMathOperator*{\argmax}{arg\,max}
\providecommand{\p}[1]{\ensuremath{^{{\left(#1\right)}}}\xspace}
\tikzstyle{vertex}=[circle, draw, inner sep=0pt, minimum size=6pt]
\providecommand{\vertex}{\node[vertex]}
\tikzset{
	buffer/.style={
		draw,dashed,shape border uses incircle,
		isosceles triangle,shape border rotate=90,yshift=-2.4cm,xshift=0.04cm, inner sep=0pt}
}
\title{%
	Estimating Gaps in Martingales and Applications to Coin-Tossing: Constructions \& Hardness
}
\author{Hamidreza Amini Khorasgani}{Department of Computer Science, Purdue University, IN, USA}{haminikh@purdue.edu}{}{}
\author{Hemanta K. Maji}{Department of Computer Science, Purdue University, IN, USA}{hmaji@purdue.edu}{}{}
\author{Tamalika Mukherjee}{Department of Computer Science, Purdue University, IN, USA}{tmukherj@purdue.edu}{}{}
\authorrunning{H.\, Amini Khorasgani, H.\,K. Maji and T.\,Mukherjee}
\keywords{Discrete-time Martingale, Coin-tossing and Dice-rolling Protocols, Discrete Control Processes, Fair Computation, Black-box Separation}
\begin{document}
	
	\maketitle
	{%
		\thispagestyle{empty}%
		\begin{abstract}
Consider the representative task of designing a distributed coin-tossing protocol for $n$ processors such that the probability of heads is $X_0\in[0,1]$. 
This protocol should be robust to an adversary who can reset one processor to change the distribution of the final outcome. 
For $X_0=1/2$, in the information-theoretic setting, no adversary can deviate the probability of the outcome of the well-known Blum's ``majority protocol'' by more than $\frac1{\sqrt{2\pi n}}$, \ie, it is $\frac1{\sqrt{2\pi n}}$ insecure.

In this paper, we study discrete-time martingales $(X_0,X_1,\dotsc,X_n)$ such that $X_i\in[0,1]$, for all $i\in\{0,\dotsc,n\}$, and $X_n\in\zo$. 
These martingales are commonplace in modeling stochastic processes like coin-tossing protocols in the information-theoretic setting mentioned above. 
In particular, for any $X_0\in[0,1]$, we construct martingales that yield  $\frac12\sqrt{\frac{X_0(1-X_0)}{n}}$ insecure coin-tossing protocols. 
For $X_0=1/2$, our protocol requires only 40\% of the processors to achieve the same security as the majority protocol.

The technical heart of our paper is a new inductive technique that uses geometric transformations to precisely account for the large gaps in these martingales. 
For any $X_0\in[0,1]$, we show that there exists a stopping time $\tau$ such that
  $$\EX{\abs{X_\tau-X_{\tau-1}}}\geq \frac2{\sqrt{2n-1}}\cdot X_0(1-X_0)$$
The inductive technique simultaneously constructs martingales that demonstrate the optimality of our bound, \ie, a martingale where the gap corresponding to any stopping time is small. 
In particular, we construct optimal martingales such that {\em any} stopping time $\tau$ has 
  $$\EX{\abs{X_\tau-X_{\tau-1}}}\leq \frac1{\sqrt{n}}\cdot \sqrt{X_0(1-X_0)}$$ 
Our lower-bound holds for all $X_0\in[0,1]$; while the previous bound of Cleve and Impagliazzo (1993) exists only for positive constant $X_0$. 
Conceptually, our approach only employs elementary techniques to analyze these martingales and entirely circumvents the complex probabilistic tools inherent to the approaches of Cleve and Impagliazzo (1993) and Beimel, Haitner, Makriyannis, and Omri (2018).

By appropriately restricting the set of possible stopping-times, we present representative applications to constructing distributed coin-tossing/dice-rolling protocols, discrete control processes, fail-stop attacking coin-tossing/dice-rolling protocols, and black-box separations.
\end{abstract}%
		\newpage%
		\thispagestyle{empty}%
		\tableofcontents%
	}
	\newpage%
	
	\setcounter{page}{1}
	\section{Introduction}
\label{sec:intro} 

{\bfseries A Representative Motivating Application. }
Consider a distributed protocol for $n$ processors to toss a coin, where a processor $i$ broadcasts her message in round $i$. 
At the end of the protocol, all processors reconstruct the common outcome from the public transcript. 
When all processors are honest, the probability of the final outcome being 1 is $X_0$ and the probability of the final outcome being 0 is $1-X_0$, \ie, the final outcome is a {\em bias-$X_0$ coin}. 
Suppose there is an adversary who can (adaptively) choose to {\em restart} one of the processors after seeing her message (\ie, the {\em strong adaptive} corruptions model introduced by Goldwasser, Kalai, and Park~\cite{ICALP:GolKalPar15}); otherwise her presence is innocuous. 
Our objective is to design bias-$X_0$ coin-tossing protocols such that the adversary cannot change the distribution of the final outcomes significantly. 

{\em The Majority Protocol.} 
Against computationally unbounded adversaries, (essentially) the only known protocol is the well-known majority protocol~\cite{STOC:Blum83,ABCGM85,STOC:Cleve86} for $X_0=1/2$.
The majority protocol requests one uniformly random bit from each processor and the final outcome is the majority of these $n$ bits. 
An adversary can alter the probability of the final outcome being 1 by $\frac1{\sqrt{2\pi n}}$, \ie, the majority protocol is $\frac1{\sqrt{2\pi n}}$ insecure. 

{\em Our New Protocol.}
We shall prove a general martingale result in this paper that yields the following result as a corollary. 
For any $X_0\in[0,1]$, there exists an $n$-bit bias-$X_0$ coin-tossing protocol in the information-theoretic setting that is $\frac12\sqrt{\frac{X_0(1-X_0)}{n}}$ insecure.
In particular, for $X_0=1/2$, our protocol uses only 625 processors to reduce the insecurity to, say,  1\%; while the majority protocol requires 1592 processors.

{\bfseries General Formal Framework: Martingales.} 
Martingales are natural models for several stochastic processes. 
Intuitively, martingales correspond to a gradual release of information about an event. 
A priori, we know that the probability of the event is $X_0$. 
For instance, in a distributed $n$-party coin-tossing protocol the outcome being $1$ is the event of interest.   

A discrete-time martingale $(X_0,X_1,\dotsc,X_n)$ represents the gradual release of information about the event over $n$ time-steps.\footnote{%
  For the introduction, we do not explicitly mention the underlying filtration for brevity. 
  The proofs, however, clearly mention the associated filtrations.%
}
For intuition, we can assume that $X_i$ represents the probability that the outcome of the coin-tossing protocol is $1$ after the first $i$ parties have broadcast their messages. 
Martingales have the unique property that if one computes the expected value of $X_j$, for $j>i$, at the end of time-step $i$, it is identical to the value of $X_i$. 
In this paper we shall consider martingales where, at the end of time-step $n$, we know for sure whether the event of interest has occurred or not.
That is, we have $X_n\in\zo$. 

A {\em stopping time} $\tau$ represents a time step $\in \{1,2,\dotsc,n\}$ where we stop the evolution of the martingale. 
The test of whether to stop the martingale at time-step $i$ is a function only of the information revealed so far. 
Furthermore, this stopping time need {\em not} be a constant. 
That is, for example, different transcripts of the coin-tossing protocol potentially have different stopping times.

{\bfseries Our Martingale Problem Statement.} 
The inspiration of our approach is best motivated using a two-player game between, namely, the {\em martingale designer} and the {\em adversary}. 
Fix $n$ and $X_0$. 
The martingale designer presents a martingale $\cX = (X_0,X_1,\dotsc,X_n)$ to the adversary and the adversary finds a stopping time $\tau$ that maximizes the following quantity. 
  $$\EX{\abs{X_\tau-X_{\tau-1}}}$$
Intuitively, the adversary demonstrates the most severe {\em susceptibility} of the martingale by presenting the corresponding stopping time $\tau$ as a witness. 
The martingale designer's objective is to design martingales that have less susceptibility. 
Our paper uses a geometric approach to inductively provide tight bounds on the least susceptibility of martingales for all $n\geq 1$ and $X_0\in[0,1]$, that is, the following quantity.
$$C_n(X_0) \defeq \inf_{\cX} \; \sup_\tau \; \EX{\abs{X_\tau-X_{\tau-1}}}$$
This precise study of $C_n(X_0)$, for general $X_0\in[0,1]$, is motivated by natural applications in discrete process control as illustrated by the representative motivating problem. 
This paper, for representative applications of our results, considers $n$-processor distributed protocols and 2-party $n$-round protocols.
The stopping time witnessing the highest susceptibility shall translate into appropriate adversarial strategies. 
These adversarial strategies shall imply hardness of computation results. 

\subsection{Our Contributions}
\label{sec:contrib} 

We prove the following general martingale theorem.
\begin{theorem}
	\label{thm:gap-main} 
	Let $(X_0,X_1,\dotsc,X_n)$ be a discrete-time martingale such that $X_i\in[0,1]$, for all $i\in\{1,\dotsc,n\}$, and
	$X_n\in\{0,1\}$.
	Then, the following bound holds.
	$$\sup_{\text{stopping time }\tau} \EX{\abs{X_\tau-X_{\tau-1}}} \geq C_n(X_0),$$
	where $C_1(X) = 2X(1-X)$, and, for $n>1$, we obtain $C_n$ from $C_{n-1}$ recursively using the geometric transformation defined in \figureref{transform-def}.
	
	Furthermore, for all $n\geq1$ and $X_0\in[0,1]$, there exists a martingale $(X_0,\dotsc,X_n)$ (\wrt to the coordinate exposure filtration for $\zo^n$) such that for any stopping time $\tau$, it has $\EX{\abs{X_\tau-X_{\tau-1}}} = C_n(X_0)$. 
\end{theorem}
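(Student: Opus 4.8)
I would prove both halves of the theorem together by induction on $n$, letting the recursive definition of $C_n$ drop out of the argument. Fix a length-$n$ martingale $(X_0,\dots,X_n)$ and an arbitrary filtration. For any realisation of the first step, the tail $(X_1,\dots,X_n)$ is a length-$(n-1)$ martingale with initial value $X_1$; hence, having observed $\mathcal F_1$, the adversary may either stop at time $1$ (collecting $\abs{X_1-X_0}$) or run the best stopping time of the tail (collecting at least $C_{n-1}(X_1)$ by the inductive lower bound). Splicing these two options together gives one legitimate stopping time — the only point to check is measurability of the time-$1$ decision — so
$$\sup_{\tau}\EX{\abs{X_\tau-X_{\tau-1}}}\ \ge\ \EX{\max\!\big(\abs{X_1-X_0},\,C_{n-1}(X_1)\big)}.$$
Writing $g_{X_0}(x)\defeq\max(\abs{x-X_0},C_{n-1}(x))$ and using $\EX{X_1}=X_0$, Jensen's inequality applied to the lower convex envelope $\widehat{g}_{X_0}$ yields $\EX{g_{X_0}(X_1)}\ge\widehat{g}_{X_0}(X_0)$, and I take $C_n(X_0)\defeq\widehat{g}_{X_0}(X_0)$; geometrically this is the transformation of \figureref{transform-def} (superimpose the ``tent'' $\abs{x-X_0}$ on the graph of $C_{n-1}$, take the pointwise maximum, pass to its lower convex hull, and read off the height above $X_0$). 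For $n=1$ the only stopping time is $\tau=1$ and every admissible martingale has $\EX{\abs{X_1-X_0}}=2X_0(1-X_0)=C_1(X_0)$, which simultaneously settles the base case of both statements.

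To make this concrete and to prepare the optimal martingale, I would carry through the induction the invariant that $C_{n-1}$ is concave on $[0,1]$ with $C_{n-1}(0)=C_{n-1}(1)=0$ and $C_{n-1}>0$ on $(0,1)$ (true for $C_1(X)=2X(1-X)$). Under this invariant, for interior $X_0$ the difference $C_{n-1}(x)-\abs{x-X_0}$ is concave on each of $[0,X_0]$ and $[X_0,1]$ and, from the boundary values, changes sign exactly once on each; so the equation $C_{n-1}(x)=\abs{x-X_0}$ has a unique root $p=p(X_0)\in(0,X_0)$ and a unique root $q=q(X_0)\in(X_0,1)$, $g_{X_0}$ coincides with the tent on $[0,p]\cup[q,1]$ and with $C_{n-1}$ on $[p,q]$, and a short convexity check shows its convex hull equals the tent near the endpoints and the chord through $(p,C_{n-1}(p))$ and $(q,C_{n-1}(q))$ on $[p,q]$. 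Evaluating that chord at $X_0$ gives the closed form
$$C_n(X_0)\ =\ \frac{2\,(X_0-p)(q-X_0)}{q-p},$$
and one checks $C_n$ again satisfies the invariant (boundary values and positivity are immediate; concavity of $C_n$ is the one step that needs a genuine, if elementary, geometric argument). For $X_0\in\{0,1\}$ everything collapses to the constant martingale and $C_n=0$.

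For the ``furthermore'' part I would build the optimal martingale on $\zo^n$ with the coordinate-exposure filtration, recursively. For interior $X_0$, expose the first coordinate so that $X_1=p$ with probability $\tfrac{q-X_0}{q-p}$ and $X_1=q$ with probability $\tfrac{X_0-p}{q-p}$ (so $\EX{X_1}=X_0$); on the branch $\{X_1=p\}$ run, on the remaining $n-1$ coordinates, the inductively-constructed optimal length-$(n-1)$ martingale with initial value $p$, and likewise with $q$ (formally, take the measure $\mu=\tfrac{q-X_0}{q-p}(\delta_0\otimes\mu_p)+\tfrac{X_0-p}{q-p}(\delta_1\otimes\mu_q)$ on $\zo^n$, with $X_i$ the $\mu$-conditional probability, given the first $i$ coordinates, that the process ends in state $1$). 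The key fact is that $p$ and $q$ are \emph{crossover} points, so $C_{n-1}(p)=X_0-p=\abs{p-X_0}$ and $C_{n-1}(q)=q-X_0=\abs{q-X_0}$: on each branch, ``stop now'' and ``continue optimally'' contribute exactly the same. Now let $\tau$ be any stopping time. Since $\{\tau=1\}\in\mathcal F_1=\sigma(\text{first coordinate})$, either $\tau=1$ on all of $\{X_1=p\}$, in which case $\EX{\abs{X_\tau-X_{\tau-1}}\mid X_1=p}=X_0-p$, or $\tau\ge 2$ there, in which case $\tau-1$ restricted to that branch is a stopping time of the length-$(n-1)$ sub-martingale and by the inductive exactness the conditional expectation again equals $C_{n-1}(p)=X_0-p$; symmetrically the conditional value on $\{X_1=q\}$ is $q-X_0$. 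Averaging,
$$\EX{\abs{X_\tau-X_{\tau-1}}}\ =\ \frac{q-X_0}{q-p}\,(X_0-p)+\frac{X_0-p}{q-p}\,(q-X_0)\ =\ \frac{2(X_0-p)(q-X_0)}{q-p}\ =\ C_n(X_0),$$
independently of $\tau$, which closes the induction and also shows the lower bound is attained.

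The genuinely delicate point is the middle step: propagating the structural invariant — above all the concavity of $C_n$ — through the geometric transformation, since this is exactly what guarantees unique crossover points, the stated shape of the convex hull, the closed form, and the crossover identity $C_{n-1}(p)=\abs{p-X_0}$ that drives the exactness argument. The remaining measure-theoretic bookkeeping (splicing adaptive stopping times in the lower bound, assembling the nested measure on $\zo^n$) is routine by comparison.
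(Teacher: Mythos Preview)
Your proposal is correct and follows essentially the same inductive route as the paper: the paper's convex-hull argument on the points $Z^{(j)}=(x^{(j)},\max\{|x^{(j)}-x|,C_{n-1}(x^{(j)})\})$ is exactly your Jensen/lower-convex-envelope step, your crossover points $p,q$ are the paper's $x_S(x),x_L(x)$, and your chord value $\tfrac{2(X_0-p)(q-X_0)}{q-p}$ is precisely the harmonic mean in the paper's transformation $T$. You have also correctly isolated the one nontrivial technical obligation---propagating concavity of $C_n$ through the transformation---which the paper handles in its \claimref{preserve-concavity}.
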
 

Intuitively, given a martingale, an adversary can identify a stopping time where the expected gap in the martingale is at least $C_n(X_0)$.
Moreover, there exists a martingale that realizes the lower-bound in the tightest manner, \ie, all stopping times $\tau$ have identical susceptibility. 

Next, we estimate the value of the function $C_n(X)$. 
\begin{lemma}
	\label{lem:gap-lower-upper}
	For $n\geq 1$ and $X\in[0,1]$, we have 
	$$\frac2{\sqrt{2n-1}}X(1-X) \eqdef L_n(X) \leq C_n(X) \leq U_n(X)\defeq \frac1{\sqrt{n}}\sqrt{X(1-X)}$$
\end{lemma}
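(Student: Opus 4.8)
The plan is to prove both inequalities by induction on $n$, piggybacking on the recursive definition of $C_n$ from $C_{n-1}$ via the geometric transformation in \figureref{transform-def}. Both $L_n$ and $U_n$ are symmetric about $X=1/2$, vanish at the endpoints, and satisfy the correct base case: at $n=1$ we have $L_1(X)=\frac{2}{\sqrt1}X(1-X)=2X(1-X)=C_1(X)$, so the lower bound is tight there, while $U_1(X)=\sqrt{X(1-X)}\geq 2X(1-X)=C_1(X)$ since $\sqrt{X(1-X)}\leq 1/2$ on $[0,1]$. So the real work is the inductive step, and the key is to understand how the geometric transformation acts on a candidate bounding function.

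First I would extract from \figureref{transform-def} an explicit analytic description of the transformation: given a function $f=C_{n-1}$, the value $(\mathrm{transform}\,f)(X_0)$ should be obtained by taking, at the point $X_0$, the minimum over admissible one-step splits $X_0=p\cdot a+(1-p)\cdot b$ (with $a,b\in[0,1]$) of the quantity $p(1-p)|a-b|$-type immediate gap combined with $p\,f(a)+(1-p)\,f(b)$ — i.e. a one-step Bellman-type recursion dressed up geometrically (the geometry being that $C_n$ is read off as the lower envelope of chords/tangents of a curve built from $C_{n-1}$). Once this is pinned down, the lower bound follows by showing $L_n$ is a \emph{sub}-solution: for every admissible split of $X_0$, the combined quantity is $\geq L_n(X_0)$, hence the infimum is too, hence $C_n(X_0)\geq L_n(X_0)$ given $C_{n-1}\geq L_{n-1}$ by induction. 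Symmetrically, the upper bound follows by exhibiting, for each $X_0$, one specific split witnessing that the transformation applied to $U_{n-1}$ yields something $\leq U_n(X_0)$, so $C_n(X_0)\leq U_n(X_0)$.

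Concretely, for the upper bound I expect the extremal split to be the symmetric one around the curve, and the inequality to reduce to a one-variable calculus fact of the shape $\max_{t}\big[\text{immediate gap}(t)+\sqrt{(n-1)^{-1}}\cdot(\text{averaged }\sqrt{\cdot})\big]\leq n^{-1/2}\sqrt{X_0(1-X_0)}$; this is where the constant $1/\sqrt n$ is forced, and it should come out of a clean AM–GM or Cauchy–Schwarz estimate once the $\sqrt{X(1-X)}$ ansatz is plugged in. For the lower bound, the analogous computation with the ansatz $\frac{2}{\sqrt{2n-1}}X(1-X)$ should show $\frac{2}{\sqrt{2(n-1)-1}}$ propagates to $\frac{2}{\sqrt{2n-1}}$ under the transformation — note $2(n-1)-1 = 2n-3$ and $2n-1$ differ by exactly $2$, which is the telltale sign that a single application of the recursion shifts the index by one in the "$2n-1$" parametrization, so the recursion should close exactly rather than merely approximately.

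The main obstacle will be verifying that the geometric transformation really does respect these two explicit families monotonically — that is, proving the transformation is order-preserving (so that $C_{n-1}\geq L_{n-1}$ implies $\mathrm{transform}(C_{n-1})\geq \mathrm{transform}(L_{n-1})$, and likewise for the upper side) and then that $\mathrm{transform}(L_{n-1})\geq L_n$ and $\mathrm{transform}(U_{n-1})\leq U_n$ pointwise. The order-preservation should be essentially immediate from the Bellman/envelope form (infima of sums of evaluations are monotone in the evaluated function). The genuinely delicate part is the two pointwise one-step inequalities, which after reduction become single-variable optimizations; I would handle the lower-bound one by checking a sub-solution inequality for \emph{all} admissible splits (no optimization needed, just an inequality that must hold identically, likely provable by reducing to a perfect-square nonnegativity), and the upper-bound one by plugging in the guessed optimal split and simplifying. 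A secondary subtlety is making sure the filtration-specific "furthermore" clause of \theoremref{gap-main} is not needed here — the lemma only bounds the function $C_n$, so I would only invoke the recursive definition of $C_n$ and the base value $C_1(X)=2X(1-X)$, nothing about the extremal martingale's construction.
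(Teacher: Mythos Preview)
Your inductive skeleton matches the paper's exactly (\figureref{summary-curves}): prove $T$ is order-preserving (\claimref{above-to-above}), then check the one-step inequalities $T(\text{lower})\succcurlyeq\text{next lower}$ (\claimref{our-induct}) and $\text{next upper}\succcurlyeq T(\text{upper})$ (\claimref{our-induct-upper}). However, your description of $T$ is wrong in a way that would derail the actual computations. The transformation is not a Bellman recursion with a ``$p(1-p)|a-b|$-type immediate gap combined with $p\,f(a)+(1-p)\,f(b)$''. The Bellman form behind \theoremref{gap-main} is $\inf\sum_j p_j\max\{|x_j-x|,\,f(x_j)\}$ over splits $\sum_j p_j x_j=x$ --- a $\max$, not a sum --- and $T$ itself is its closed-form optimum: from $(x,0)$ draw the two rays of slope $\pm 1$, intersect the curve at heights $y_1,y_2$, and set $T(f)(x)=\mathrm{H.M.}(y_1,y_2)$. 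The one-step inequalities are then explicit harmonic-mean manipulations that reduce to RMS--AM on the lower side and HM--GM on the upper side, not the AM--GM or Cauchy--Schwarz you anticipate. Order-preservation is also not ``essentially immediate'' from an envelope form: \claimref{above-to-above} genuinely uses concavity of the curves.

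One further difference: the paper does not use $L_n$ directly on the lower side. It inserts an intermediate family $G_n=a_nX(1-X)$ with $a_1=2$ and $a_{n+1}=2(\sqrt{a_n^2+1}-1)/a_n$, proves $T(G_n)\succcurlyeq G_{n+1}$ for this exact recursion, and then separately bounds $a_n\geq 2/\sqrt{2n-1}$ (\lemmaref{lowerbound-an}). Your guess that the $2n-1$ parametrization ``closes exactly'' is off: the recursion on $b_n=1/a_n^2$ gives only $b_{n+1}\leq b_n+\tfrac12$, an inequality. Your direct route $T(L_{n-1})\succcurlyeq L_n$ does in fact hold and could replace the $G_n$ detour, but it still passes through the same RMS--AM step plus one extra concavity-of-$\sqrt{\cdot}$ estimate, so it is not materially simpler.
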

As a representative example, consider the case of $n=3$ and $X_0=1/2$. 
\figureref{3-majority} presents the martingale corresponding to the 3-round majority protocol and highlights the stopping time witnessing the susceptibility of 0.3750. 
\figureref{3-opt} presents the optimal 3-round coin-tossing protocol's martingale that has susceptibility of 0.2407. 

\begin{figure}
\begin{center}
\begin{tikzpicture}
[->,>=stealth,level/.style={sibling distance = 4cm/#1,
	level distance = 1cm}
] 
\node [] {0.5}
child{ node [] {0.25} 
	child{ node [fill=gray!50!] {0} 
		child{ node [] {0}}
		child{ node [] {0}}
	}
	child{ node [] {0.5}
		child{ node [fill=gray!50!] {0}  
		}
		child{ node [fill=gray!50!] {1} 
		}
	}                          
}
child{ node [] {0.75}
	child{ node [] {0.5} 
		child{ node [fill=gray!50!] {0} 
		}
		child{ node [fill=gray!50!] {1} 
		}
	}
	child{ node [fill=gray!50!] {1}
		child{ node [] {1}}
		child{ node [] {1}}
	}
}
; 
\end{tikzpicture}
\end{center}
\caption{Majority Protocol Tree of depth three. 
  The optimal score in the majority tree of depth three is $0.3750$ and the corresponding stopping time is highlighted in gray.%
}
\label{fig:3-majority}
\end{figure}
\begin{figure}
\begin{center}
\begin{tikzpicture}[->,>=stealth,level/.style={sibling distance = 4cm/#1,
	level distance = 1cm}] 
\node [] {0.5}
child{ node [] {0.2593} 
	child{ node [] {0.0921} 
		child{ node [] {0}}
		child{ node [] {1}}
	}
	child{ node [] {0.6884}
		child{ node [] {0}}
		child{ node [] {1}}
	}                            
}
child{ node [] {0.7407}
	child{ node [] {0.3116} 
		child{ node [] {0}}
		child{ node [] {1}}
	}
	child{ node [] {0.9079}
		child{ node [] {0}}
		child{ node [] {1}}
	}
}
; 
\end{tikzpicture}
\end{center}
\caption{Optimal depth-3 protocol tree for $X_0=1/2$. 
The optimal score is $0.2407$. 
Observe that any stopping time achieves this score.}
\label{fig:3-opt}
\end{figure} 

In the sequel, we highlight applications of \theoremref{gap-main} to protocol constructions and hardness of computation results using these estimates. 

\begin{remark}[Protocol Constructions]
	The optimal martingales naturally translate into $n$-bit distributed coin-tossing and multi-faceted dice rolling protocols. 
	\begin{enumerate}
		\item \corollaryref{fair}: 
		For all $X_0\in[0,1]$, there exists an $n$-bit distributed bias-$X_0$ coin-tossing protocol for $n$ processors with the following security guarantee. 
		Any (computationally unbounded) adversary who follows the protocol honestly and resets at most one of the processors during the execution of the protocol can change the probability of an outcome by at most $\frac1{2\sqrt n}\sqrt{X_0(1-X_0)}$. 
	\end{enumerate}
\end{remark}

\begin{remark}[Hardness of Computation Results]
	The lower-bound on the maximum susceptibility helps demonstrate hardness of computation results. 
	For $X_0=1/2$, Cleve and Impagliazzo~\cite{Cleve93martingales} proved that one encounters $\abs{X_\tau-X_{\tau-1}}\geq \frac1{32\sqrt n}$ with probability $\frac15$. 
	In other words, their bound guarantees that the expected gap in the martingale is at least $\frac1{160\sqrt n}$, which is significantly smaller than our bound $\frac1{2\sqrt{2 n}}$. 
	Hardness of computation results relying on \cite{Cleve93martingales} (and its extensions) work only for constant $0<X_0<1$.%
	\footnote{%
		Cleve and Impagliazzo set their problem as an optimization problem that trades off two conflicting objective functions. 
		These objective functions have exponential dependence on $X_0(1-X_0)$. 
		Consequently, if $X_0=1/\poly(n)$ or $X_0=1-1/\poly(n)$, then their lower bounds are extremely weak.
	}
	However, our lower-bound holds for all $X_0\in[0,1]$; for example, even when $1/\poly(n) \leq X_0  \leq 1-1/\poly(n)$. 
	Consequently, we extend existing hardness of computation results using our more general lower-bound. 
	\begin{enumerate}
		\item \theoremref{unfairness} extends the fail-stop attack of \cite{Cleve93martingales} on 2-party bias-$X_0$ coin-tossing protocols (in the information-theoretic commitment hybrid). 
		For any $X_0\in[0,1]$, a fail-stop adversary can change the probability of the final outcome of any 2-party bias-$X_0$ coin-tossing protocol by $\geq \frac{\sqrt 2}{12\sqrt{n+1}}X_0(1-X_0)$. 
		This result is useful to demonstrate black-box separations results. 
		\item \corollaryref{bb-sep} extends the black-box separation results of \cite{TCC:DLMM11,TCC:HaiOmrZar13,TCC:DacMahMal14} separating (appropriate restrictions of) 2-party bias-$X_0$ coin tossing protocols from one-way functions. 
		We illustrate a representative new result that follows as a consequence of \corollaryref{bb-sep}.  
		For constant $X_0\in(0,1)$, \cite{TCC:DLMM11,TCC:HaiOmrZar13,TCC:DacMahMal14} rely on (the extensions of) \cite{Cleve93martingales} to show that it is highly unlikely that there exist 2-party bias-$X_0$ coin tossing protocols using one-way functions in a black-box manner achieving  $o(1/\sqrt n)$ {\em unfairness}~\cite{EC:GorKat10}. 
		Note that when $X_0=1/n$, there are secure 2-party coin tossing protocols with  $1/2n$ unfairness (based on \corollaryref{fair}) even in the information-theoretic setting. 
		Previous results cannot determine the limits to the unfairness of 2-party bias-$1/n$ fair coin-tossing protocols that use one-way functions in a black-box manner. 
		Our black-box separation result (refer to \corollaryref{bb-sep}) implies that it is highly unlikely to construct bias-$1/n$ coin using one-way functions in a black-box manner with $< \frac{\sqrt2}{12\cdot n^{3/2}}$ unfairness. 
		\item \corollaryref{control} and \corollaryref{control2} extend Cleve and Impagliazzo's~\cite{Cleve93martingales} result on influencing discrete control processes to arbitrary $X_0\in[0,1]$. 
	\end{enumerate}
\end{remark}

\subsection{Prior Approaches to the General Martingale Problem} 
\label{sec:prior} 

Azuma-Hoeffding inequality~\cite{Azuma1967,Hoeffding63} states that if $\abs{X_i-X_{i-1}} = o(1/\sqrt n)$, for all $i\in\{1,\dotsc,n\}$, then, essentially, $\abs{X_n-X_0}=o(1)$ with probability 1.
That is, the final information $X_n$ remains close to the a priori information $X_0$. 
However, in our problem statement, we have $X_n\in\zo$. 
In particular, this constraint implies that the final information $X_n$ is significantly different from the a priori information $X_0$.
So, the initial constraint ``for all $i\in\{1,\dotsc,n\}$ we have $\abs{X_i-X_{i-1}} = o(1/\sqrt n)$'' must be violated. 
What is the probability of this violation? 

For $X_0=1/2$, Cleve and Impagliazzo~\cite{Cleve93martingales} proved that there exists a  round $i$ such that $\abs{X_i-X_{i-1}}\geq \frac1{32\sqrt n}$ with probability $1/5$. 
We emphasize that the round $i$ is a random variable and not a constant. 
However, the definition of the ``big jump'' and the ``probability to encounter big jumps'' both are exponentially small function of $X_0$.
So, the approach of Cleve and Impagliazzo is only applicable to constant $X_0\in(0,1)$.
Recently, in an independent work, Beimel~\etal~\cite{BeimelHMO17} demonstrate an identical bound for {\em weak martingales} (that have some additional properties), which is used to model multi-party coin-tossing protocols. 

For the upper-bound, on the other hand, Doob's martingale corresponding to the majority protocol is the only known martingale for $X_0=1/2$ with a small {\em maximum susceptibility}. 
In general, to achieve arbitrary $X_0\in[0,1]$, one considers coin tossing protocols where the outcome is $1$ if the total number of heads in $n$ uniformly random coins surpasses an appropriate threshold.

	\section{Preliminaries}
\label{sec:prelim}

We denote the {\em arithmetic mean} of two numbers $x$ and $y$ as $\mathrm{A.M.}(x,y)\defeq (x+y)/{2}$. 
The {\em geometric mean} of these two numbers is denoted by $\mathrm{G.M.}(x,y)\defeq \sqrt{x\cdot y}$ and their {\em harmonic mean} is denoted by $\mathrm{H.M.}(x,y)\defeq \left(\left(x^{-1}+y^{-1}\right)/2\right)^{-1}=2xy/(x+y)$. 

{\bfseries Martingales and Related Definitions.}
The {\em conditional expectation} of a random variable $X$ with respect to an event $\mathcal{E}$ denoted by  $\EX{X|\mathcal{E}}$, is defined as $\EX{X\cdot\1{\mathcal E}}/\probX{\mathcal{E}}$. 
For a discrete random variable $Y$, the conditional expectation of $X$ with respect to $Y$, denoted by $\EX{X|Y}$, is a random variable that takes value $\EX{X|Y=y}$ with probability $\probX{Y=y}$, where $\EX{X|Y=y}$ denotes the conditional expectation of $X$ with respect to the event $\{\omega\in \Omega|Y(\omega)=y\}$. 

Let $\Omega=\Omega_1\times\Omega_2\times\dotsi\times\Omega_n$ denote a sample space and $(E_1,E_2,\dotsc,E_n)$ be a joint distribution defined over $\Omega$ such that 
for each $i\in\{1,\dotsc,n\}$, $E_i$ is a random variable over $\Omega_i$. 
Let $X=\{X_i\}_{i=0}^{n}$ be a sequence of random variables defined over $\Omega$. 
We say that $X_j$ is $E_1,\dotsc,E_j$ measurable if there exists a function $g_j\colon \Omega_1\times\Omega_2\times\dotsi\times\Omega_j\to\mathbb{R} $ such that $X_j=g_j(E_1,\dotsc,E_j)$.
Let $X=\{X_i\}_{i=0}^{n}$ be a discrete-time martingale sequence with respect to the sequence $E=\{E_i\}_{i=1}^{n}$.
This statement implies that for each $i\in\{0,1,\dotsc,n\}$, we have 
$$\EX{X_{i+1}|E_1,E_2,\dotsc,E_i}=X_i$$
Note that the definition of martingale implies $X_i$ to be $E_1,\dotsc,E_i$ measurable for each $i\in\{1,\dotsc,n\}$ and $X_0$ to be constant.  
In the sequel, we shall use $\{ X=\{X_i\}_{i=0}^{n},E=\{E_i\}^n_{i=1} \}$ to denote a martingale sequence where for each $i=1,\dotsc,n$, $X_i\in [0,1]$, and $X_n\in \{0,1\}$. 
However, for brevity, we use $\left(X_0,X_1,\dotsc,X_n\right)$ to denote a martingale. 
Given a function $f\colon\Omega_1\times\Omega_2\times\dotsi\times\Omega_n\to\mathbb{R}$, if we define the random variable $Z_i\defeq \EX{f(E_1,\dotsc,E_n)|E_1,\dotsc,E_i}$, for each $i\in \{0,1,\dotsc,n\}$, then the sequence $Z=\{Z_i\}_{i=0}^{n}$ is a martingale with respect to $\{E_i\}_{i=1}^{n}$. 
This martingale is called the {\em Doob's martingale}.

The random variable $\tau\colon\Omega \rightarrow \{0,1,\dotsc, n\}$ is called a stopping time if for each $k\in\{1,2,\dotsc,n\}$, the occurrence or non-occurrence of the event $\{\tau\leq k\} \defeq \{\omega\in \Omega| \tau(\omega)\leq k\}$ depends only on the values of random variables $E_1,E_2,\dotsc,E_k$.
Equivalently, the random variable $\1{\tau \leq k}$ is $E_1,\dotsc, E_k$ measurable. 
Let $\mathcal{S}(X,E)$ denote the set of all stopping time random variables over the  martingale sequence  $\{ X=\{X_i\}_{i=0}^{n},E=\{E_i\}^n_{i=1} \}$. 
For $ \ell \in \{1,2\}$, we define the {\em score} of a martingale sequence $(X,E)$ with respect to a stopping time $\tau$ in the $L_\ell$-norm as the following quantity.
$$\mathrm{score}_\ell(X,E,\tau) \defeq \EX{ \abs{X_\tau - X_{\tau-1}}^\ell }$$
We define the {\em max stopping time} as the stopping time that maximizes the score 
$$\tau_{\max}(X,E,\ell) \defeq \argmax_{\tau\in \mathcal{S}(X,E)} \mathrm{score}_\ell(X,E,\tau),$$ 
and the (corresponding) {\em $\mathrm{max\textnormal{-}score}$} as 
$$\mathrm{max\text{-}score}_\ell(X,E)\defeq\EX{|X_{\tau_{\max}}-X_{\tau_{\max-1}}|^\ell}$$

Let $A_n(x^{*})$ denote the set of all discrete time martingales $\{X=\{X_i\}_{i=0}^n,E=\{E_i\}_{i=1}^n\}$ such that $X_0=x^{*}$ and $X_n \in \{0,1\}$. 
We define {\em optimal score} as 
$$\mathrm{opt}_n(x^{*},\ell) \defeq \inf_{(X,E)\in A_n(x^{*})}\mathrm{max\textnormal{-}score}_\ell(X,E)$$
\begin{figure}
\begin{center}
\begin{tikzpicture}[]\footnotesize 
	\coordinate (A) at (0,6);
	\coordinate (B) at ( 4,0);
	\coordinate (C) at (-4,0);
	\draw[name path=b1, line width=1pt] (A) -- (B) -- (C) -- cycle;
	\draw[-] (A) -- (-0.5,5) node[right] {}; 
	\draw[-] (-0.5,5) -- (0,4) node[right] {};
	\draw[-,dashed] (0,4) -- (-1,3) node[right] {};
	\draw[-] (-1,3) -- (0.5,2) node[right] {};	
	\node [circle ,draw,inner sep=0pt, minimum size=6pt] (v) at (0.7, 2) {$x_i$};
	\node [circle,draw,inner sep=0pt, minimum size=6pt] (z) at (-1, 1) [label=below:{$x^{(1)}$}] {};
	\node [circle,draw,inner sep=0pt, minimum size=6pt] (x) at (0.5, 1) [label=below:{$x^{(2)}$}] {};
	\vertex (y) at (1,1) [draw=none]{$\ldots$};
	\node [circle,draw,inner sep=0pt, minimum size=6pt] (w) at (1.5, 1) [label=below:{$x^{(t)} $}] {};
	\node at (0.1,5.5){$e_1$};
	\node at (0.1,4.5){$e_2$};
	\node at (0,2.7){$e_i$};
	\node at (-0.6,1.5){$p^{(1)}$};
	\node at (0.3,1.5){$p^{(2)}$};
	\node at (1.6,1.5){$p^{(t)}$};
	\draw[-] (v) -- (x) node[right] {};
	\draw[-] (v) -- (w) node[right] {};
	\draw[-] (v) -- (z) node[right] {};
\end{tikzpicture}
	\caption{Interpreting a general martingale as a tree.}
	\vspace*{-0.5cm}%
	\label{fig:prelim-mart-tree}
\end{center}
\end{figure}

{\bfseries Representing a Martingale as a Tree.}	
We interpret a discrete time martingale sequence $X=\{X_i\}_{i=0}^{n}$ defined over a sample space $\Omega=\Omega_1\times\dotsi\times\Omega_n$ as a tree of depth $n$ (see \figureref{prelim-mart-tree}). 
For $i=0,\dotsc,n$, any node at depth $i$ has $\abs{\Omega_{i+1}}$ children. 
In fact, for each $i$, the edge between a node at depth $i$ and a child at depth $(i+1)$ corresponds to a possible outcome that $E_{i+1}$ can take from the set $\Omega_{i+1}=\{x\p1,\dotsc,x\p t\}$.

Each node $v$ at depth $i$ is represented by a unique path from root to $v$ like $(e_1,e_2,\dots,e_{i})$, which corresponds to the event $\{\omega\in \Omega|E_1(\omega)=e_1,\dots,E_{i}(\omega)=e_{i}\}$. 
Specifically, each path from root to a leaf in this tree, represents a unique outcome in the sample space $\Omega$.

Any subset of nodes in a tree that has the property that none of them is an ancestor of any other, is called an {\em anti-chain}. 
If we use our tree-based notation to represent a node $v$, \ie,  the sequence of edges $e_1,\dots,e_i$ corresponding to the path from root to $v$, then any prefix-free subset of nodes is an anti-chain. 
Any anti-chain that is not a proper subset of another anti-chain is called a {\em maximal anti-chain}. 
A stopping time in a martingale corresponds to a {\em unique} maximal anti-chain in the martingale tree.%

\noindent{\bfseries Geometric Definitions and Relations.}
Consider curves $C$  and $D$ defined by the zeroes of $Y=f(X)$ and $Y=g(X)$, respectively, where $X\in[0,1]$. 
We restrict to curves $C$ and $D$ such that each one of them have exactly one intersection with $X=x$, for any $x\in [0,1]$.
Refer to \figureref{prelim-good-ex} for intuition. 
Then, we say $C$ is \emph{above} $D$, represented by $C \succcurlyeq D $, if,  for each $x\in [0,1]$, we have $f(x) \geq g(x)$.
\begin{figure}
\begin{center}%
\begin{tikzpicture}[>=stealth,scale=4,domain=0:1]
	\coordinate (O) at (0,0); 
	\draw[->] (O) to ++(1.1,0); 
	\draw[->] (O) to ++(0,0.6); 
	
	\draw[] plot (\x,{2*\x*(1-\x)}) node[right,anchor=south west] {$C$};
	\draw[] plot (\x,{\x*(1-\x)}) node[right,anchor=north east] {$D$};
	
	\coordinate (X) at (0.3,0); 
	\draw[dotted] (X) to ++(0,.6); 
	\node [anchor=north] at (X) {$X=x$}; 
	
	\node[circle,inner sep=1pt,fill=black] at (0.3,.42) {};
	\node[circle,inner sep=1pt,fill=black] at (0.3,.21) {};
\end{tikzpicture} 
\caption{Intuition for a curve $C$ being above another curve $D$, represented by $C \succcurlyeq D $.}
\label{fig:prelim-good-ex}
\end{center}
\end{figure}

	\section{Large Gaps in Martingales: A Geometric Approach}
\label{sec:large-gap-l1}

This section presents a high-level overview of our proof strategy. 
In the sequel, we shall assume that we are working with discrete-time martingales $(X_0,X_1,\dotsc,X_n)$ such that $X_n\in\zo$.

Given a martingale $(X_0,\dotsc,X_n)$, its {\em susceptibility} is represented by the following quantity
  $$\sup_{\text{stopping time }\tau} \EX{\abs{X_\tau-X_{\tau-1}}}$$
Intuitively, if a martingale has high susceptibility, then it has a stopping time such that the gap in the martingale while encountering the stopping time is large. 
Our objective is to characterize the {\em least susceptibility} that a martingale $(X_0,\dotsc,X_n)$ can achieve. 
More formally, given $n$ and $X_0$, characterize
  $$C_n(X_0) \defeq \inf_{(X_0,\dotsc,X_n)} \sup_{\text{stopping time }\tau} \EX{\abs{X_\tau-X_{\tau-1}}}$$
Our approach is to proceed by induction on $n$ to exactly characterize the curve $C_n(X)$, and our argument naturally constructs the best martingale that achieves $C_n(X_0)$. 
\begin{enumerate}
\item We know that the base case is $C_1(X) = 2X(1-X)$ (see~\figureref{transform-induct-bc} for this argument). 
\item Given the curve $C_{n-1}(X)$, we identify a geometric transformation $T$ (see~\figureref{transform-def}) that defines the curve $C_n(X)$ from the curve $C_{n-1}(X)$. 
  \sectionref{gap-main} summarizes the proof of this inductive step that crucially relies on the geometric interpretation of the problem, which is one of our primary technical contributions. 
  Furthermore, for any $n\geq 1$, there exist martingales such that its susceptibility is $C_n(X_0)$. 
\item Finally, \sectionref{gap-lower-upper-proof}
proves that the curve $C_n(X)$  lies above the curve $L_n(X) \defeq \frac2{\sqrt{2n-1}}X(1-X)$ and below the curve $U_n(X) \defeq \frac1{\sqrt n} \sqrt{X(1-X)}$. 
\end{enumerate}

\subsection{Proof of \theoremref{gap-main}}
\label{sec:gap-main} 

Our objective is the following. 
\begin{enumerate}
\item Given an arbitrary martingale $(X,E)$, find the maximum stopping time in this martingale, \ie, the stopping time $\tau_{\max}(X,E,1)$. 
\item For any depth $n$ and bias $X_0$, construct a martingale that achieves the max-score. 
  We refer to this martingale as the {\em optimal} martingale. 
  A priori, this martingale need not be unique. 
  However, we shall see that for each $X_0$, it is (essentially) a unique martingale. 
\end{enumerate} 
We emphasize that even if we are only interested in the exact value of $C_n(X_0)$ for $X_0=1/2$, it is unavoidable to characterize $C_{n-1}(X)$, for all values of $X\in[0,1]$. 
Because, in a martingale $(X_0=1/2,X_1,\dotsc,X_n)$, the value of $X_1$ can be arbitrary. 
So, without a precise characterization of the value $C_{n-1}(X_1)$, it is not evident how to calculate the value of $C_n(X_0=1/2)$. 
Furthermore, understanding $C_n(X_0)$, for all $X_0\in[0,1]$, yields entirely new applications for our result.

{\bfseries Base Case of $n=1$.} 
For a martingale $(X_0,X_1)$ of depth $n=1$, we have $X_1\in\zo$. 
Thus, without loss of generality, we assume that $E_1$ takes only two values (see~\figureref{transform-induct-bc}). 
Then, it is easy to verify that the max-score is always equal to $2X_0(1-X_0)$. 
This score is witnessed by the stopping time $\tau=1$. 
So, we conclude that 
  $\mathrm{opt}_1(X_0,1) = C_1(X_0) = 2X_0(1-X_0) $
\begin{figure}
\begin{center}
	\begin{tikzpicture}[auto,>=stealth, scale=1]
	\node[circle, minimum size=5mm, draw] (v) at (1,2) {$X_0$}; 
	\node[rectangle, minimum size=5mm, fill=gray!50!] (z) at (0,1) {$0$}; 
	\node[rectangle, minimum size=5mm, fill=gray!50!] (x) at (2,1) {$1$}; 
	\draw (v) to node[swap] {$1-X_0$} (z); 
	\draw (v) to node {$X_0$} (x);
\end{tikzpicture}
\caption{%
  Base Case for \theoremref{gap-main}. 
  Note $C_1(X_0) = \inf_{(X_0,X_1)} \sup_\tau \EX{ \abs{X_\tau-X_{\tau-1}} }$.
  The optimal stopping time is shaded and its score is $X_0\cdot\abs{1-X_0}+(1-X_0)\cdot\abs{0-X_0}$.%
  }
\label{fig:transform-induct-bc}
\end{center}
\end{figure}	


{\bfseries Inductive Step. $n=2$ (For Intuition).}
For simplicity, let us consider finite martingales, \ie, the sample space $\Omega_i$ of the random variable $E_i$ is finite. 
Suppose that the root $X_0=x$ in the corresponding martingale tree has $t$ children with values $x\p1, x\p2, \dotsc,x\p t$, and the probability of choosing the $j$-th child is $p\p j$, where $j\in\{1,\dotsc,t\}$ (see~\figureref{jump2-induct}). 
\begin{figure}
\begin{center}
\begin{tikzpicture}[sibling distance=1.5cm, level 2/.style={sibling distance =2cm}] 
\node [circle,draw,inner sep=0.5pt, minimum size=6mm]{$x$}
child{ node[circle,draw,inner sep=0.5pt, minimum size=6mm] {$x^{(1)}$} edge from parent node[above,draw=none] {$p^{(1)}$}
	}
child{ node[circle, draw = none] {$\ldots$} edge from parent[draw=none]
	{ node[draw = none] {} } 
}
child{ node[circle,draw,inner sep=0.5pt, minimum size=5mm] {$x^{(j)}$} 
	 edge from parent node[left,draw=none] {$p^{(j)}$}%
		{node[buffer]{${MS}_j $}  }%
}
child{ node[circle, draw = none] {$\ldots$}  edge from parent[draw=none]
	{ node[draw = none] {} } 
}
child{ node[circle,draw,inner sep=0.5pt, minimum size=6pt] {$x^{(t)}$}   edge from parent node[above,draw=none] {$p^{(t)}$}
}
;
\end{tikzpicture}
\caption{Inductive step for \theoremref{gap-main}. 
  ${MS}_j$ represents the max-score of the sub-tree of depth $n-1$ whose rooted at $x^{(j)}$. 
  For simplicity, the subtree of $x^{(j)}$ is only shown here.} 
\label{fig:jump2-induct}
\end{center}
\end{figure} 

Given a martingale $(X_0,X_1,X_2)$, the adversary's objective is to find the stopping time $\tau$ that maximizes the score $\EX{\abs{X_\tau - X_{\tau-1}}}$. 
If the adversary chooses to stop at $\tau=0$, then the score $\EX{\abs{X_\tau - X_{\tau-1}}}=0$, which is not a good strategy. 
So, for each $j$, the adversary chooses whether to stop at the child $x\p j$, or continue to a stopping time in the sub-tree rooted at $x\p j$. 
The adversary chooses the stopping time based on which of these two strategies yield a better score. 
If the adversary stops the martingale at child $j$, then the contribution of this decision to the score is $p\p j\abs{x\p j-x}$. 
On the other hand, if she does not stop at child $j$, then the contribution from the sub-tree is guaranteed to be $p\p jC_1(x\p j)$. 
Overall, from the $j$-th child, an adversary obtains a score that is at least $p\p j\max\left\{\abs{x\p j-x},C_1(x\p j)\right\}$. 
\begin{figure}
\begin{center}
\begin{tikzpicture}[domain=0:1,auto,>=stealth,scale=7.5] 
\coordinate (O) at (0,0); 
\draw[->,name path=xaxis] (0,0) to (1.1,0) node[below] {$X$-axis}; 
\draw[->,name path=yaxis] (0,0) to (0,.7) node[above] {$Y$-axis} coordinate (ytop); 
\draw[->,name path=ypaxis,draw=none] (1,0) to (1,.75) coordinate (yptop);

\coordinate (X) at (.3,0); 
\node at (X) {\textbullet}; 
\node[anchor=north west] (x)  at ($(X) +(-0.05,0)$) {$X=(x,0)$};  
\draw[dashed, name path=xvert] (X) to ++(90:0.6) coordinate (top); 

\draw[dotted,name path=curve,thick] plot (\x,{2*\x*(1-\x)}) node[anchor=south west] {$C_1$}; 
\draw [name intersections={of=curve and xvert, by=m}] coordinate (mid) at (m);

\draw[dashed,->,name path=ell0,draw=none] (X) to ++(135:0.5);
\draw[dashed,->,name path=ell1,draw=none] (X) to ++(45:1.1);

\draw [name intersections={of=curve and ell0, by=P1}] node at (P1) {\textbullet};
\node[anchor=east] at (P1) {$P_1$};
\draw [name intersections={of=yaxis and ell0, by=L}] node at (L) {\textbullet};
\node[anchor=east] at (L) {$L$};

\draw[dotted,thick] (X) to (L);
\draw[thick] (P1) to (L); 

\draw [name intersections={of=curve and ell1, by=P2}] node at (P2) {\textbullet};
\node[anchor=west] at (P2) {$P_2$};
\draw [name intersections={of=ypaxis and ell1, by=R}] node at (R) {\textbullet};
\node[anchor=west] at (R) {$R$};

\draw[dotted,thick] (X) to (R);
\draw[thick] (P2) to (R); 

\draw[dashed,name path=p1p2] (P1) to (P2); 
\draw [name intersections={of=p1p2 and xvert, by=Q}] node at (Q) {\textbullet};
\node [anchor=north west] at (Q) {$Q$};

\begin{pgfonlayer}{background}
\begin{scope}
\clip (P1) rectangle (mid); 
\draw[thick,fill=white] plot (\x,{2*\x*(1-\x)});
\end{scope}

\begin{scope}
\clip (P2) rectangle (top); 
\draw[thick,fill=white] plot (\x,{2*\x*(1-\x)});
\end{scope} 
\end{pgfonlayer}

\draw (X) ++(-0.05,0) arc [start angle=180, end angle=135, radius=0.05];
\draw (X) ++(180-20:0.1) node {$\pi/4$};
\draw (X) ++(0.05,0) arc [start angle=0, end angle=45, radius=0.05]; %
\draw (X) ++(20:0.1) node {$\pi/4$};

\foreach \x/\y/\i/\where in {0.075/0.225/1/south,
                             0.5/0.5/2/south, 
                             0.8/0.5/3/west,
                             0.9/0.6/4/west}{
  \node (x\i) at (\x,\y) {$\circ$}; \node[anchor=\where] at (x\i) {$Z^{(\i)}$};
}
\draw[dashed,pattern=dots, pattern color=gray] (x1.center) to (x2.center) to (x4.center) to (x3.center) to cycle; 
\node (x01) at (0.075,0) {$\circ$}; \node[anchor=south] at (x01) {$x^{(1)}$}; 
\node (x02) at (.5,0) {$\circ$}; \node[anchor=south west] at (x02) {$x^{(2)}$};
\node (x03) at (.8,0) {$\circ$}; \node[anchor=south] at (x03) {$x^{(3)}$};
\node (x04) at (.9,0) {$\circ$}; \node[anchor=south] at (x04) {$x^{(4)}$};
\draw[name path=temp,draw=none] (x1.center) to (x3.center);
\draw[name path=ntemp,draw=none] (x1.center) to (x2.center); 
\draw [name intersections={of=temp and xvert, by=Qp}] node at (Qp) {$\otimes$} node [anchor=west] at (Qp) {$Q'$};
\draw [name intersections={of=ntemp and xvert, by=Qp}] node at (Qp) {$\otimes$} node [anchor= south east] at (Qp) {$Q''$};

\end{tikzpicture}
\end{center}
\caption{Intuitive summary of the inductive step for $n=2$.}
\label{fig:induct-c1c2}
\end{figure}  

Let $h\p j\defeq \max\left\{\abs{x\p j-x},C_1(x\p j)\right\}$. 
We represent the points $Z\p j=(x\p j,h\p j)$ in a two dimensional plane. 
Then, clearly all these points lie on the solid curve defined by $\max\left\{\abs{X-x},C_1(X)\right\}$, see~\figureref{induct-c1c2}.

Since $(X,E)$ is a martingale, we have $x=\sum_{j=1}^t p\p jx\p j$ and the adversary's strategy for finding $\tau_{\max}$ gives us $\mathrm{max\textnormal{-}score}_1(X,E)= \sum_{j=1}^tp\p jh\p j$. 
This observation implies that the coordinate $(x,\mathrm{max\textnormal{-}score}_1(X,E)) = \sum_{j=1}^tp\p jZ\p j$. 
So, the point in the plane giving the adversary the maximum score for a tree of depth $n=2$ with bias $X_0=x$ lies in the {\em intersection} of the convex hull of the points $Z\p1,\dotsc,Z\p t$, and the line $X=x$. 
Let us consider the martingale defined in \figureref{induct-c1c2} as a concrete example. 
Here $t=4$, and the points $Z\p1, Z\p2, Z\p3, Z\p4$ lie on $\max\left\{\abs{X-x},C_1(X)\right\}$.
The martingale designer specifies the probabilities $p^{(1)}, p^{(2)}, p^{(3)}$, and $p^{(4)}$, such that $p^{(1)}x^{(1)} + \dotsi + p^{(4)}x^{(4)}=x$. 
These probabilities are not represented in \figureref{induct-c1c2}. 
Note that the point $\left( p^{(1)}x^{(1)} + \dots + p^{(4)}x^{(4)}, p^{(1)}h^{(1)} +\dots + p^{(4)}h^{(4)} \right)$ representing the score of the adversary is the point $p^{(1)}Z^{(1)}+\dots+p^{(4)}Z^{(4)}$. 
This point lies inside the convex hull of the points $Z^{(1)}, \dots ,Z^{(4)}$ and on the line $X = p^{(1)}x^{(1)} + \dots + p^{(4)}x^{(4)}=x$. 
The exact location depends on $p^{(1)},\dots,p^{(4)}$.

The point $Q'$ is the point with minimum height. 
Observe that the height of the point $Q'$ is at least the height of the point $Q$. 
So, in any martingale, the adversary shall find a stopping time that scores more than (the height of) the point $Q$. 

On the other hand, the martingale designer's objective is to reduce the score that an adversary can achieve. 
So, the martingale designer chooses $t=2$, and the two points $Z\p1=P_1$ and $Z\p2=P_2$ to construct the optimum martingale. 
We apply this method for each $x\in[0,1]$ to find the corresponding point $Q$. 
That is, the {\em locus of the point} $Q$, for $x\in[0,1]$, yields the curve $C_2(X)$. 

We claim that the height of the point $Q$ is the {\em harmonic-mean} of the heights of the points $P_1$ and $P_2$. 
This claim follows from elementary geometric facts. 
Let $h_1$ represent the height of the point $P_1$, and $h_2$ represent the height of the point $P_2$. 
Observe that the distance of $x - x_S(x) = h_1$ (because the line $\ell_1$ has slope $\pi - \pi/4$). 
Similarly, the distance of $x_L(x)-x=h_2$ (because the line $\ell_2$ has slope $\pi/4$). 
So, using properties of similar triangles, the height of $Q$ turns out to be
  $$ h_1 + \frac{h_1}{h_1+h_2}\cdot(h_2-h_1) = \frac{2h_1h_2}{h_1+h_2}.$$

This property inspires the definition of the geometric transformation $T$, see~\figureref{transform-def}.
Applying $T$ on the curve $C_1(X)$ yields the curve $C_2(X)$ for which we have $C_2(x) = \mathrm{opt}_2(x,1)$. 
\begin{figure}%
\begin{boxedminipage}{\linewidth}
\input{fig-transform-def} 
\begin{center}
\begin{tikzpicture}[domain=0:1,scale=6,auto,>=stealth] 
\coordinate (O) at (0,0); 
\draw[->,name path=xaxis] (0,0) to (1.1,0) node[anchor=north east] {$X$-axis}; 
\draw[->,name path=yaxis] (0,0) to (0,.5) node[anchor=south west] {$Y$-axis}; 

\draw[dotted,name path=curve,thick] plot (\x,{2*\x*(1-\x)}) node[right,anchor=south west] {$C$}; 

\coordinate (X) at (.3,0); 

\node[anchor=north] (x) at (X) {$(x,0)$};  

\draw[dashed,->,name path=ell0] (X) to ++(135:0.4) node [right] {$\ell_1$};
\draw[dashed,->,name path=ell1] (X) to ++(45:0.7) node [right] {$\ell_2$}; 

\draw [name intersections={of=curve and ell0, by=P0}] node at (P0) {\textbullet};
\node[anchor=east] at (P0) {$P_1$};

\draw [name intersections={of=curve and ell1, by=P1}] node at (P1) {\textbullet};
\node[anchor=west] at (P1) {$P_2$};

\draw[dashed,name path=p0p1] (P0) to (P1); 
\draw[name path=vert,draw=none] (X) to +(90:0.5); 

\draw [name intersections={of=p0p1 and vert, by=Q}] node at (Q) {\textbullet};
\node [anchor=south] at (Q) {$Q$}; 
\draw[dotted,thick] (X) to (Q); 

\draw (X) ++(-0.05,0) arc [start angle=180, end angle=135, radius=0.05];
\draw (X) ++(180-20:0.1) node {$\pi/4$};
\draw (X) ++(0.05,0) arc [start angle=0, end angle=45, radius=0.05]; %
\draw (X) ++(20:0.1) node {$\pi/4$};

\draw[dotted,thick] (P0) to (P0 |- O); 
\node[anchor=north] at (P0 |- O) {$x_S(x)$};
\draw[dotted,thick] (P1) to (P1 |- O); 
\node[anchor=north] at (P1 |- O) {$x_L(x)$};
\end{tikzpicture}
\end{center}
\end{boxedminipage}
\caption{Definition of transform of a curve $C$, represented by $T(C)$. 
  The locus of the point $Q$ (in the right figure) defines the curve $T(C)$.}
\label{fig:transform-def}
\end{figure}

{\bfseries General Inductive Step.} 
Note that a similar approach works for general $n=d\geq 2$. 
Fix $X_0$ and $n=d\geq 2$. 
We assume that the adversary can compute $C_{d-1}(X_1)$, for any $X_1\in[0,1]$. 

Suppose the root in the corresponding martingale tree has $t$ children with values $x\p1, x\p2,\dotsc,x\p t$, and the probability of choosing the $j$-th child is $p\p j$ (see~\figureref{jump2-induct}).
Let $(X\p j,E\p j)$ represent the martingale associated with the sub-tree rooted at $x\p j$. 

\begin{figure}
\begin{center}
\begin{tikzpicture}[domain=0:1,auto,>=stealth,scale=7.5] 
\coordinate (O) at (0,0); 
\draw[->,name path=xaxis] (0,0) to (1.1,0) node[below] {$X$-axis}; 
\draw[->,name path=yaxis] (0,0) to (0,.7) node[above] {$Y$-axis} coordinate (ytop); 
\draw[->,name path=ypaxis,draw=none] (1,0) to (1,.75) coordinate (yptop);

\coordinate (X) at (.3,0); 
\node at (X) {\textbullet}; 
\node[anchor=north west] (x)  at ($(X) +(-0.05,0)$) {$X=(x,0)$};  
\draw[dashed, name path=xvert] (X) to ++(90:0.6) coordinate (top); 

\draw[dotted,name path=curve,thick] plot (\x,{2*\x*(1-\x)}) node[anchor=south west] {$C_d$}; 
\draw [name intersections={of=curve and xvert, by=m}] coordinate (mid) at (m);

\draw[dashed,->,name path=ell0,draw=none] (X) to ++(135:0.5);
\draw[dashed,->,name path=ell1,draw=none] (X) to ++(45:1.1);

\draw [name intersections={of=curve and ell0, by=P0}] node at (P0) {\textbullet};
\node[anchor=east] at (P0) {$P_1$};
\draw [name intersections={of=yaxis and ell0, by=L}] node at (L) {\textbullet};
\node[anchor=east] at (L) {$L$};

\draw[dotted,thick] (X) to (L);
\draw[thick] (P0) to (L); 

\draw [name intersections={of=curve and ell1, by=P1}] node at (P1) {\textbullet};
\node[anchor=west] at (P1) {$P_2$};
\draw [name intersections={of=ypaxis and ell1, by=R}] node at (R) {\textbullet};
\node[anchor=west] at (R) {$R$};

\draw[dotted,thick] (X) to (R);
\draw[thick] (P1) to (R); 

\draw[dashed,name path=p0p1] (P0) to (P1); 
\draw [name intersections={of=p0p1 and xvert, by=Q}] node at (Q) {\textbullet};
\node [anchor=north west] at (Q) {$Q$};

\begin{pgfonlayer}{background}
\draw [fill=gray!50!,draw=none] (ytop) to (L) to (P0) to (mid) to (P1) to (R) to (yptop) to cycle; 

\begin{scope}
\clip (P0) rectangle (mid); 
\draw[thick,fill=white] plot (\x,{2*\x*(1-\x)});
\end{scope}

\begin{scope}
\clip (P1) rectangle (top); 
\draw[thick,fill=white] plot (\x,{2*\x*(1-\x)});
\end{scope} 
\end{pgfonlayer}

\draw (X) ++(-0.05,0) arc [start angle=180, end angle=135, radius=0.05];
\draw (X) ++(180-20:0.1) node {$\pi/4$};
\draw (X) ++(0.05,0) arc [start angle=0, end angle=45, radius=0.05]; %
\draw (X) ++(20:0.1) node {$\pi/4$};

\foreach \x/\y/\i/\where in {0.1/0.3/1/south east, 
                             0.6/0.5/2/south west, 
                             0.7/0.6/3/west, 
                             0.5/0.65/4/south, 
                             0.15/0.5/5/east, 
                             0.5/.55/6/east,
                             0.2/0.45/7/west}{
  \node (x\i) at (\x,\y) {$\circ$}; \node[anchor=\where] at (x\i) {$Z^{(\i)}$};
}
\draw[dashed,pattern=dots, pattern color=gray] (x1.center) to (x2.center) to (x3.center) to (x4.center) to (x5.center) to cycle; 

\draw[name path=temp,draw=none] (x1.center) to (x2.center); 
\draw [name intersections={of=temp and xvert, by=Qp}] node at (Qp) {$\otimes$} node [anchor=north west] at (Qp) {$Q'$};

\end{tikzpicture}
\end{center}
\caption{Intuitive Summary of the inductive argument. 
  Our objective is to pick the set of points $\{Z\p 1, Z\p2\dotsc\}$ in the gray region to minimize the length of the intercept $XQ'$ of their (lower) convex hull on the line $X=x$. 
  Clearly, the unique optimal solution corresponds to including both $P_1$ and $P_2$ in the set.%
}
\label{fig:induct}
\end{figure}

For any $j\in\{1,\dotsc,t\}$, the adversary can choose to stop at the child $j$. 
This decision will contribute $\abs{x\p j-x}$ to the score with weight $p\p j$. 
On the other hand, if she continues to the subtree rooted at $x\p j$, she will get at least a contribution of $\mathrm{max\text{-}score}_1(X\p j,E\p j)$ with weight $p\p j$. 
Therefore, the adversary can obtain the following contribution to her score
  $$p\p j\max\left\{\abs{x\p j-x},C_{d-1}(x\p j)\right\}$$

Similar to the case of $n=2$, we define the points $Z\p1,\dotsc,Z\p t$. 
For $n> 2$, however, there is one difference from the $n=2$ case. 
The point $Z\p j$ need not {\em lie on the solid curve}, but it can lie on or above it, \ie, they lie in the gray area of \figureref{induct}. 
This phenomenon is attributable to a suboptimal martingale designer producing martingales with suboptimal scores, \ie, {\em strictly above} the solid curve. 
For $n=1$, it happens to be the case that, there is (effectively) only one martingale that the martingale designer can design (the optimal tree).  
The adversary obtains a score that is at least the height of the point $Q'$, which is at least the height of $Q$.
On the other hand, the martingale designer can choose $t=2$, and $Z\p 1=P_1$ and $Z\p2=P_2$ to define the optimum martingale. 
Again, the locus of the point $Q$ is defined by the curve $T(C_{d-1})$.

{\bfseries Conclusion.} 
So, by induction, we have proved that $C_n(X) = T^{n-1}(C_1(X))$. 
Additionally, note that, during induction, in the optimum martingale, we always have $\abs{x\p0-x}=C_{n-1}(x\p0)$ and $\abs{x\p1-x}=C_{n-1}(x\p1)$. 
Intuitively, the decision to stop at $x\p j$ or continue to the subtree rooted at $x\p j$ has identical consequence. 
So, by induction, {\em all stopping times} in the optimum martingale have score $C_n(x)$. 

\appendixref{geometric-proof-l1} provides a more technical proof. 


\subsection{Estimation of $C_n(X)$ : Proof of \lemmaref{gap-lower-upper}}
\label{sec:gap-lower-upper-proof} 
In this section, we prove \lemmaref{gap-lower-upper}, which tightly estimates the curve $C_n$.

Recall that we defined $L_n(X) = \frac2{\sqrt{2n-1}} X(1-X)$ and $U_n(X) = \frac1{\sqrt n}\sqrt{X(1-X)}$. 
Our objective is to inductively prove that $U_n \succcurlyeq C_n \succcurlyeq L_n$. 
To this end, we define the curve $G_n\defeq a_nX(1-X)$ where $a_1=2$ and $a_{n+1}=2\left(\frac{\sqrt{a_n^2+1}-1}{a_n}\right)$. Notice that $G_1(X)=L_1(X)$ for all $X\in [0,1]$. Moreover, it follows from \lemmaref{lowerbound-an} that $a_n\geq \frac{2}{\sqrt{2n-1}}$, and so $G_n \succcurlyeq L_n$. Observe that since we do not have a closed form for $G_n$, we use $L_n$ as a lower bound.

\begin{proof}
Since $G_n \succcurlyeq L_n$, it is sufficient to prove by induction that $U_n \succcurlyeq C_n \succcurlyeq G_n$.

{\bfseries Base Case of $n=1$.} 
Since, $C_1(X)=G_1(X) = 2X(1-X)$, it is obvious that $C_1\succcurlyeq G_1$. 
Moreover, we know that $U_1(X) = \sqrt{X(1-X)}$. 
It is easy to verify that $U_1(X) \geq C_1(X)$ for all $X\in [0,1]$ which is equivalent to $U_1 \succcurlyeq C_1$.

{\bfseries Inductive Argument.}
 
Suppose we have $U_n\succcurlyeq C_n\succcurlyeq G_n$. 
Then, we have $T(U_n)\succcurlyeq T(C_n)\succcurlyeq T(G_n)$ (by \claimref{above-to-above}).
Note that $C_{n+1}=T(C_n)$. 
We shall prove that $T(G_n)\succcurlyeq G_{n+1}$, and $U_{n+1}\succcurlyeq T(U_n)$ (refer to \claimref{our-induct} and \claimref{our-induct-upper}) respectively. 
Consequently, it follows that $U_{n+1}\succcurlyeq C_{n+1}\succcurlyeq G_{n+1}$. 
\figureref{summary-curves} pictorially summarizes this argument.

\end{proof}

\begin{figure}[H]
\begin{center}
\begin{tikzpicture}[>=stealth, auto, yscale=0.8, xscale=1.2]\footnotesize
\node (ci) at (0,0) {$C_i$}; 
\node (li) at (3,0) {$G_i$}; 
  \node at ($(ci)!0.5!(li)$) {$\succcurlyeq$}; 
\node (Tli) at (3,-2) {$T(G_i)$}; 
  \draw[->] (li) to node {$T$} (Tli); 
\node (li+1) at (3,-4) {$G_{i+1}$}; 
  \node (a) at ($(Tli)!0.5!(li+1)$) {\rotatebox[origin=c]{-90}{$\succcurlyeq$}};
  \node[anchor=west] at (a.east) {\claimref{our-induct}}; 
\node (Tci) at (0,-2) {$T(C_i)$}; 
  \draw[->] (ci) to node {$T$} (Tci);
\node (ci+1) at (0,-4) {$C_{i+1}$};
  \node at ($(Tci)!0.5!(ci+1)$) {\rotatebox[origin=c]{-90}{$=$}};
\node (ui) at (-3,0) {$U_i$}; 
  \node at ($(ui)!0.5!(ci)$) {$\succcurlyeq$}; 
\node (Tui) at (-3,-2) {$T(U_i)$}; 
  \draw[->] (ui) to node {$T$} (Tui); 
\node (ui+1) at (-3,-4) {$U_{i+1}$}; 
  \node (b) at ($(Tui)!0.5!(ui+1)$) {\rotatebox[origin=c]{90}{$\succcurlyeq$}};
  \node[anchor=east] at (b.west) {\claimref{our-induct-upper}};

  \node (c) at ($(Tui)!0.5!(Tci)$) {$\succcurlyeq$};
    \node[anchor=south] at (c.north) {\claimref{above-to-above}}; 
  \node (d) at ($(Tci)!0.5!(Tli)$) {$\succcurlyeq$};
    \node[anchor=south] at (d.north) {\claimref{above-to-above}}; 

\draw[->,dashed,rounded corners=20pt] (ci+1.north east) to (Tci.south east) to (Tli.south west) to (li+1.north west); 
\draw[->,dashed,rounded corners=20pt] (ui+1.north east) to (Tui.south east) to (Tci.south west) to (ci+1.north west); 
\end{tikzpicture}
\end{center}
\caption{The outline of the inductive proof demonstrating that if the curves $U_i$ and $G_i$ sandwich the curve $C_i$, then the curves $U_{i+1}$ and $G_{i+1}$ sandwich the curve $C_{i+1}$.
Recall that the notation ``$A \succcurlyeq B$'' implies that the curve $A$ lies on-or-above the curve $B$.%
}
\label{fig:summary-curves} 
\end{figure}

\begin{claim}
	\label{clm:above-to-above} 
	Let $C$ and $D$ be concave downward curves in the domain $X\in[0,1]$, and both curves $C$ and $D$ are above the axis $Y=0$ and contain the points $(0,0)$ and $(1,0)$. 
	Let $C$ and $D$ be curves such that $C\succcurlyeq D$ in the domain $X\in[0,1]$, then the curve $T(C)\succcurlyeq T(D)$. 
\end{claim}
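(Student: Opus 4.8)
The goal is to show that the geometric transform $T$ is monotone with respect to the ``above'' relation $\succcurlyeq$. The plan is to work directly from the definition of $T$ in \figureref{transform-def}: fix an arbitrary $x\in[0,1]$, and compare the height $g_C(x)$ of $T(C)$ at $x$ with the height $g_D(x)$ of $T(D)$ at $x$. Recall that $g_C(x) = \mathrm{H.M.}(f(x_S^C(x)), f(x_L^C(x)))$, where $x_S^C(x)$ solves $X + f(X) = x$ and $x_L^C(x)$ solves $X - f(X) = x$ (here $f$ defines $C$), and similarly for $D$ with $f$ replaced by the defining function of $D$. Since $\mathrm{H.M.}$ is monotone increasing in each of its arguments, it suffices to show two inequalities: the ``left-intersection'' height for $C$ dominates that for $D$, and likewise for the ``right-intersection'' height.

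**Key steps.** First I would set up the geometric picture: $x_S^C(x)$ is the $X$-coordinate of the intersection of $C$ with the line $\ell_1$ through $(x,0)$ of slope $-1$ (i.e. going up-and-to-the-left at angle $3\pi/4$), and $x_L^C(x)$ is the intersection of $C$ with the line $\ell_2$ through $(x,0)$ of slope $+1$. The heights $y^{(1)}_C = f(x_S^C(x))$ and $y^{(2)}_C = f(x_L^C(x))$ are exactly the vertical distances from the $X$-axis to these intersection points, which by the $45^\circ$ slopes equal $x - x_S^C(x)$ and $x_L^C(x) - x$ respectively. Second, I would argue the monotonicity of each intersection height along a fixed line as the curve moves up. Consider the line $\ell_1$ (up-and-to-the-left). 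As we slide along $\ell_1$ away from $(x,0)$, the height increases monotonically. Since $C \succcurlyeq D$, at the point where $\ell_1$ meets $D$ — call its height $y^{(1)}_D$ — the curve $C$ is on-or-above $D$, hence on-or-above $\ell_1$'s value there; combined with $C$ starting below $\ell_1$ near $X=0$ (both curves hit $(0,0)$, while $\ell_1$ has positive height there for $x>0$) and concavity/the single-intersection property, the intersection of $\ell_1$ with $C$ must occur at a point at least as high as its intersection with $D$. So $y^{(1)}_C \ge y^{(1)}_D$. The symmetric argument on $\ell_2$ gives $y^{(2)}_C \ge y^{(2)}_D$. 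Third, I would invoke monotonicity of the harmonic mean: $g_C(x) = \mathrm{H.M.}(y^{(1)}_C, y^{(2)}_C) \ge \mathrm{H.M.}(y^{(1)}_D, y^{(2)}_D) = g_D(x)$. Since $x$ was arbitrary, $T(C) \succcurlyeq T(D)$.

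**Main obstacle.** The delicate part is step two — carefully justifying that the intersection of a fixed downward-sloping (resp. upward-sloping) line with the higher curve occurs at a point of larger height. This is intuitively clear from a picture, but a clean argument needs the hypotheses that both curves are concave downward, pass through $(0,0)$ and $(1,0)$, and lie above $Y=0$, so that each line $\ell_1,\ell_2$ meets each curve in exactly one relevant point in $[0,1]$ and the curve transitions from below the line to above it (or vice versa) exactly once. I would phrase this as: along $\ell_1$, parametrized so that height is increasing, the function ``(curve height) minus (line height)'' starts positive near $X=0$? — actually it starts at $0$ at the origin and the line is above the curve for small $X$, so the difference is negative, then becomes nonnegative at the $D$-intersection since $C\succcurlyeq D$ there — hence the $C$-intersection (first sign change) is at least as far along $\ell_1$, i.e. at least as high. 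One must handle the endpoint/degenerate cases ($x=0$ or $x=1$, where $f(x)=g(x)=0$ and both heights vanish) separately, but there $T(C)$ and $T(D)$ both have height $0$ so the inequality is trivial. I expect the write-up to lean on the single-intersection property already assumed for curves in the statement, so that ``the'' intersection is well-defined and the sign-change argument is rigorous.
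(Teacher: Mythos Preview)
Your proposal is correct and close in spirit to the paper's argument, but the final combining step differs. You compare the two intersection heights separately --- showing $y^{(1)}_C \ge y^{(1)}_D$ along the left ray and $y^{(2)}_C \ge y^{(2)}_D$ along the right ray via a sign-change argument --- and then invoke monotonicity of the harmonic mean. The paper instead works with the \emph{chord} characterization of $T$: $T(C)(x)$ is the height of the segment $L_CR_C$ (joining the two ray-intersections) over $X=x$. It first observes that two distinct concave curves with $C\succcurlyeq D$ sharing the endpoints $(0,0)$ and $(1,0)$ cannot meet again in $(0,1)$; from this it argues that the chords $L_CR_C$ and $L_DR_D$ cannot cross (else the curves would), hence one chord lies entirely above the other, and the intercept at $X=x$ inherits the ordering.

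What each buys: your route is more self-contained --- it does not need the global ``no interior intersection'' lemma, and the sign-change along a single ray is an entirely local argument. The paper's route is slightly slicker geometrically (one picture, one non-crossing observation) and sidesteps any explicit appeal to properties of $\mathrm{H.M.}$. Either way, the delicate uniqueness point you flagged (that each ray meets each curve exactly once in the relevant range) is handled by concavity: $f(X)-(x-X)$ is concave, negative at $X=0$, nonnegative at $X=x$, hence has a unique root in $[0,x]$, and symmetrically on the right.
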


\begin{proof}
	See \figureref{above-to-above-pic}.\\
	Observe that if the curves $C$ and $D$ are identical, then the result holds. 
	So, let us assume that $C$ and $D$ are not identical. 
	Note that if we have two distinct concave curves $C$ and $D$ such that $C \succcurlyeq D$ then these two curves cannot intersect at any additional point in the domain $(0,1)$. 
	Fix $x\in(0,1)$. 
	Let $Q_C=(x,y_C)$ be the intersection of the curve $T(C)$ with the line $X=x$. 
	Similarly, let $y_D$ be the intersection of the curve $T(D)$ with the line $X=x$. 
	Let $P$ be the point $(x,0)$. 
	Let $\ell_0$ be the ray starting at $P$ with slope $135$-degrees. 
	Let $\ell_1$ be the ray starting at $P$ with slope $45$-degrees. 
	Let $\ell_0$ intersect the curves $D$ and $C$ at $L_{D}$ and $L_{C}$, respectively. 
	And, let $\ell_1$ intersect the curves $D$ and $C$ at $R_D$ and $R_C$, respectively. 
	Observe in the triangles $\Delta PL_CR_C$ and $\Delta PL_DR_D$ the line segment $L_CR_C$ does not intersect with the line segment $L_DR_D$. 
	Otherwise, if the line segments $L_CR_C$ intersects with $L_DR_D$, then the distinct concave curves $C$ and $D$ intersect at some point with X-coordinate in $(0,1)$ as well (a contradiction). \newline
	Therefore, we have $L_CR_C \succcurlyeq L_DR_D$. 
	Note that $y_C$ is the intersection of $L_CR_C$ with $X=x$, and $y_D$ is the intersection of $L_DR_D$ with $X=x$. 
	So, we have $y_C \succcurlyeq y_D$. 
\end{proof}

\begin{figure}[H]
  
\begin{center}
\begin{tikzpicture}[>=stealth,auto,scale=5,domain=0:1] \footnotesize
  \coordinate (O) at (0,0); 
  \coordinate (X) at (0.3,0); 
  
  \draw [name path=vert,dashed] (X) to ++(0,0.6); 
  
  \draw[->] (O) to ++(1.1,0); 
  \draw[->] (O) to ++(0,.6);
  
  \draw[name path=C] plot (\x,{2*\x*(1-\x)}) node[right,anchor=south west] {$C$};
  
  \draw[name path=D] plot (\x,{\x*(1-\x)}) node[right,anchor=south east] {$D~~~$};
  
  \draw[name path=ell0,->] (X) to ++(135:0.4) node[anchor=south west] {$\ell_0$}; 
  
  \draw[name path=ell1,->] (X) to ++(45:0.7) node[anchor=east] {$\ell_1$}; 
  
  \draw (X) ++(-0.05,0) arc [start angle=180, end angle=135, radius=0.05];
  \draw (X) ++(180-20:0.1) node {$\pi/4$};
  \draw (X) ++(0.05,0) arc [start angle=0, end angle=45, radius=0.05]; %
  \draw (X) ++(20:0.1) node {$\pi/4$};
  
  \draw [name intersections={of=C and ell0, by=LC}] node at (LC) {\textbullet} node at (LC) [anchor=east] {$L_C$};
  \draw [name intersections={of=D and ell0, by=LD}] node at (LD) {\textbullet} node at (LD) [anchor=north] {$L_D$};
  
  \draw [name intersections={of=C and ell1, by=RC}] node at (RC) {\textbullet} node at (RC) [anchor=west] {$R_C$};
  \draw [name intersections={of=D and ell1, by=RD}] node at (RD) {\textbullet} node at (RD) [anchor=north west] {$R_D$};
  
  \draw[name path=LCRC,dashed] (LC) to (RC);
  \draw[name path=LDRD,dashed] (LD) to (RD); 
  
  \draw [name intersections={of=LCRC and vert, by=QC}] node at (QC) {\textbullet} node at (QC) [anchor=south west] {$Q_C$};
  \draw [name intersections={of=LDRD and vert, by=QD}] node at (QD) {\textbullet} node at (QD) [anchor=north west] {$Q_D$};

\end{tikzpicture}
\end{center}
\caption{%
    Summary of the proof of \claimref{above-to-above}.%
    }
\label{fig:above-to-above-pic}
\end{figure}

In the following claim, we show that the transformation of a curve whose characteristics are specified below, will be ``above'' the curve itself. 

\begin{claim}
	\label{clm:our-induct} 
	Let $F_n$ be the curve above $Y=0$ defined by the zeros of the equation $Y = f_n X(1-X)$,
	where $f_1>0$ and $f_{n+1}=2\left(\frac{\sqrt{f_n^2+1}-1}{f_n}\right)$ for all $n\geq 1$.
	Then, we have $T(F_n) \succcurlyeq F_{n+1}$. 
\end{claim}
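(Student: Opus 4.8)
The plan is to fix $x\in(0,1)$ and compute the height of the point $Q$ in the definition of $T(F_n)$ explicitly, then compare it with $F_{n+1}(x) = f_{n+1}\,x(1-x)$. Write $f=f_n$ for brevity and set $F_n(X) = fX(1-X)$. First I would solve the two defining equations for the transform. The point $P_1$ lies on $\ell_1$ (the $135$-degree ray from $(x,0)$) and on $C=F_n$, so its $X$-coordinate $x_S$ satisfies $X + fX(1-X) = x$; similarly the $X$-coordinate $x_L$ of $P_2$ satisfies $X - fX(1-X) = x$. These are quadratics in $X$, and I would pick the roots lying in $[0,1]$. The heights are $h_1 = f\,x_S(1-x_S)$ and $h_2 = f\,x_L(1-x_L)$; by the $135^\circ/45^\circ$ geometry (as in the $n=2$ discussion) we also have $h_1 = x - x_S$ and $h_2 = x_L - x$, which is the cleaner way to extract them: from $X + fX(1-X)=x$ we get $x - x_S = f x_S(1-x_S) = h_1$, and from $X - fX(1-X)=x$ we get $x_L - x = h_2$. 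Then $g(x) = \mathrm{H.M.}(h_1,h_2) = \dfrac{2h_1h_2}{h_1+h_2}$ is the height of $Q$, and I want to show $g(x) \ge f_{n+1}\,x(1-x)$.

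Next I would carry out the algebra. Using $h_1 = x - x_S$ with $x_S = x - h_1$ substituted into $h_1 = f x_S(1-x_S)$ gives a quadratic relating $h_1$ to $x$; solving, $h_1 = \dfrac{(1+f) - \sqrt{(1+f)^2 - 4f x}}{2f}$ after simplification (choosing the root with $h_1 \to 0$ as $x\to 0$), and analogously $h_2 = \dfrac{\sqrt{(1-f)^2 + 4fx} - (1-f)}{2f}$ — I would double-check the discriminants and sign choices so that $h_1,h_2 \ge 0$ on $[0,1]$ (this uses $0 < f$, and monotonicity/boundedness of $F_n$, which hold by induction from $f_1>0$ and the recurrence; one should note $f_n \le 2$ so that $F_n$ stays a well-behaved concave curve through $(0,0),(1,0)$, i.e. the hypotheses of \claimref{above-to-above} persist). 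Then I substitute into $\mathrm{H.M.}(h_1,h_2)$ and simplify. The key observation is that the formula $f_{n+1} = 2\bigl(\sqrt{f_n^2+1}-1\bigr)/f_n$ is exactly what pops out when one evaluates the harmonic mean at a convenient point or, better, identifies that $g(x)/\bigl(x(1-x)\bigr)$ is minimized as $x$ ranges over $(0,1)$ and the minimum equals $f_{n+1}$; at the endpoints or at $x=1/2$ one reads off the extremal value. So the real content is: (i) derive closed forms for $h_1(x),h_2(x)$; (ii) form $g(x)=2h_1h_2/(h_1+h_2)$; (iii) show $g(x) \ge f_{n+1}x(1-x)$ for all $x\in(0,1)$, with the constant $f_{n+1}$ being tight (attained in the limit, consistent with the ``$\succcurlyeq$'' and not ``$=$'').

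I expect the main obstacle to be step (iii): after substituting the closed forms, $g(x)/(x(1-x))$ is a somewhat unwieldy expression in $\sqrt{(1+f)^2-4fx}$ and $\sqrt{(1-f)^2+4fx}$, and proving it is bounded below by $f_{n+1}$ uniformly in $x$ requires either a clean factorization or a monotonicity argument in $x$. I would try to reduce clutter by the substitution $x = \tfrac12(1 - s)$ with $s\in(-1,1)$ (so $x(1-x) = \tfrac14(1-s^2)$), which symmetrizes the two square roots into $\sqrt{f^2 + (1-s)/\text{stuff}}$-type terms, or alternatively parametrize directly by $h_1$ (equivalently by $x_S$) and express everything rationally, turning the inequality into a polynomial inequality that can be checked by showing its defining polynomial has no sign change on the relevant interval. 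A fallback, if the direct inequality resists, is to prove the two endpoint facts ($g(x)/(x(1-x)) \to f_{n+1}$ as $x\to 0^+$ and as $x \to 1^-$, by L'Hôpital-type expansion of the square roots) together with convexity/monotonicity of $g(x)/(x(1-x))$ in $x$, which would pin the infimum to the endpoints and hence give $g(x) \ge f_{n+1}x(1-x)$ throughout. Either way, once the pointwise inequality $g(x) \ge f_{n+1}x(1-x)$ is established for every $x\in[0,1]$, that is precisely the statement $T(F_n)\succcurlyeq F_{n+1}$, completing the proof.
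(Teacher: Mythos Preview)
Your plan is correct and is essentially the paper's approach: solve the two quadratics for $x_S,x_L$, write $h_1=x-x_S$ and $h_2=x_L-x$, symmetrize via $x=\tfrac12-z$, and prove $\mathrm{H.M.}(h_1,h_2)\ge f_{n+1}\,x(1-x)$ pointwise. For the step you flag as the obstacle, the paper's clean route is worth knowing so you do not resort to the endpoint/convexity fallback: after the substitution $x=\tfrac12-z$ one finds that both $h_1$ and $h_2$ share the factor $\dfrac{1-4z^2}{2f_n}$, with the remaining factors equal to $\dfrac{f_n^2}{\sqrt{f_n^2+1\pm 4zf_n}+(1\pm 2zf_n)}$; dividing through by the common factor turns the harmonic-mean inequality into the arithmetic-mean statement
\[
\tfrac12\Bigl(\sqrt{f_n^2+1+4zf_n}+\sqrt{f_n^2+1-4zf_n}\Bigr)\le \sqrt{f_n^2+1},
\]
which is immediate from concavity of $\sqrt{\cdot}$ (equivalently RMS--AM).
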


\begin{proof}
	For each $k$, the curve $F_k$ is a concave downward curve that contains the points $(0,0)$ and $(1,0)$, so based on \claimref{preserve-concavity}, for each $k$, the curve $T(F_k)$ is also concave downward and contains the points $(0,0)$ and $(1,0)$.

	Let us fix $x\in[0,1]$ and let $x_0\in [0,1]$ denotes the smaller root of the two roots of the equation $x^* + f_n x^*(1-x^*) = x$ and let $y_0$ be the value $f_nx_0(1-x_0)$.
	Moreover, let $x_1\in [0,1]$ denotes the larger root of the two roots of the equation $x^* - f_n x^*(1-x^*) = x$ and let $y_1$ be the value $f_nx_1(1-x_1)$. 
	So, we have 
	$$x_0 = \frac{(f_n+1) - \sqrt{(f_n+1)^2 -4xf_n}}{2f_n} \; ,$$
	and
	\begin{align*}
		y_0 &= \frac{\left(f_n+1 - \sqrt{(f_n+1)^2 -4xf_n}\right)\left(f_n-1 + \sqrt{(f_n+1)^2 -4xf_n}\right)}{4f_n}\\
		&= \frac{f_n^2 - 1 - (f_n+1)^2 + 4xf_n + 2\sqrt{(f_n+1)^2 -4xf_n}}{4f_n}\\
		&= \frac{(2x-1)f_n -1  + \sqrt{(f_n+1)^2 -4xf_n}}{2f_n}
	\end{align*}
	and since $F_n(x)=F_n(1-x)$ (i.e. $F_n$ is a symmetric curve around $\frac12$), $y_1$ can be found by replacing $x$ with $1-x$ in the 
	formula that we found for $y_0$. 
	\begin{align*}
		y_1 &= \frac{(2(1-x)-1)f_n -1  + \sqrt{(f_n+1)^2 -4(1-x)f_n}}{2f_n}\\
		&= \frac{(1-2x)f_n -1  + \sqrt{(f_n-1)^2 +4xf_n}}{2f_n}
	\end{align*}
	
	To prove the claim, it suffices to show that the harmonic mean of $y_0$ and $y_1$ is at least equal to $f_{n+1}\cdot x(1-x) $.
	We make the substitution $x=1/2 - z$ and we need to consider only $z\in[0,1/2]$ because as mentioned earlier the curves are symmetric around the line $X=1/2$. 
	From this substitution, we get
	\begin{align*}
		y_0 &= \frac{-2zf_n -1 + \sqrt{f_n^2+1 +4zf_n}}{2f_n}\\
		&= \frac{\left( f_n^2+1 +4zf_n\right) - (1+2zf_n)^2}{2f_n\left(\sqrt{f_n^2+1 +4zf_n} + (1+2zf_n)\right)}\\
		&= \frac{(1-4z^2)}{2f_n}\cdot\frac{f_n^2}{\sqrt{f_n^2+1 +4zf_n} + (1+2zf_n)}
		\end{align*}
		And also, 
		\begin{align*}
		y_1 &= \frac{2zf_n -1 + \sqrt{f_n^2+1 -4zf_n}}{2f_n}\\
		&= \frac{(1-4z^2)}{2f_n}\cdot\frac{f_n^2}{\sqrt{f_n^2+1 -4zf_n} + (1-2zf_n)}\\
	\end{align*}
	Let us define, 
	\begin{align*}
		\ell &\defeq f_{n+1}x(1-x) 
	\end{align*}
	Notice that $\ell = \frac{(1-4z^2)}{2f_n}\left(\sqrt{f_n^2+1} -1 \right)$.
	So, we need to prove the following
	\begin{align*}
		\mathrm{H.M.}\left(y_0,y_1\right) &\geq \ell\\
		\mathrm{H.M.}\left(\frac{2f_n}{1-4z^2}\cdot y_0,\frac{2f_n}{1-4z^2}\cdot y_1\right) &\geq \frac{2f_n}{1-4z^2}\cdot \ell\\
		\mathrm{H.M.}\left(\frac{f_n^2}{\sqrt{f_n^2+1 +4zf_n} + (1+2zf_n)},\frac{f_n^2}{\sqrt{f_n^2+1 -4zf_n} + (1-2zf_n)}\right) &\geq \sqrt{f_n^2+1}-1\\
		\mathrm{A.M.}\left(\frac{\sqrt{f_n^2+1 +4zf_n} + (1+2zf_n)}{f_n^2},\frac{\sqrt{f_n^2+1 -4zf_n} + (1-2zf_n)}{f_n^2}\right) &\leq \frac1{\sqrt{f_n^2+1}-1}\\
		\mathrm{A.M.}\left({\sqrt{f_n^2+1 +4zf_n} },{\sqrt{f_n^2+1 -4zf_n} }\right) +1&\leq {\sqrt{f_n^2+1}+1} \; .
	\end{align*}
	We used the fact that for any $a,b$, $\mathrm{H.M.}(a,b)=\frac{2ab}{a+b}
	=\frac1{\mathrm{A.M.}(a,b)}$.
	The last inequality is due to RMS-AM inequality. 
\end{proof}

In the following claim, we show that the geometric transformation $T$ preserves some characteristics of the curve it is transforming - specifically if the original curve was concave downward and symmetric around $\frac12$ then the new curve obtained will also retain these properties.
\begin{claim}
	\label{clm:preserve-concavity}
	Suppose the curve $C$ which is concave downward in the interval $X\in [0,1]$ and symmetric around $\frac12$, and the points $(0,0)$ and $(1,0)$ lie on it - is given. Suppose the curve $F$ is a curve defined by applying transformation $T$, defined in \figureref{transform-def}, on curve $C$. Then, $F$ has the same properties i.e. $F$ is also concave downward, symmetric around $\frac12$, and contains the points $(0,0)$ and $(1,0)$.   
\end{claim}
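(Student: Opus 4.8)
The plan is to establish the three asserted properties of $F=T(C)$ one at a time, working from the explicit geometric description of the transform in \figureref{transform-def}. Write $C$ as the graph of $Y=f(X)$ on $[0,1]$, and for $x\in[0,1]$ let $P_1(x)=(x-y^{(1)},\,y^{(1)})$ and $P_2(x)=(x+y^{(2)},\,y^{(2)})$ be the intersections of $C$ with the $135^\circ$ and the $45^\circ$ rays emanating from $(x,0)$; thus $y^{(1)}=f(x_S(x))$ with $x_S(x)+f(x_S(x))=x$ and $y^{(2)}=f(x_L(x))$ with $x_L(x)-f(x_L(x))=x$, and — as computed just before \figureref{transform-def} — $F(x)=\mathrm{H.M.}(y^{(1)},y^{(2)})$ is the height at which the chord $P_1(x)P_2(x)$ meets the vertical line $X=x$.

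The symmetry and endpoint claims are short. For symmetry, let $\sigma(X,Y)=(1-X,Y)$; since $C$ is symmetric about $X=\tfrac12$ we have $\sigma(C)=C$, and $\sigma$ sends the $135^\circ$ ray from $(x,0)$ to the $45^\circ$ ray from $(1-x,0)$ (and vice versa), so $\sigma(P_1(x))=P_2(1-x)$ and $\sigma(P_2(x))=P_1(1-x)$. Reading off heights gives $y^{(1)}(x)=y^{(2)}(1-x)$ and $y^{(2)}(x)=y^{(1)}(1-x)$, whence $F(1-x)=\mathrm{H.M.}(y^{(2)}(x),y^{(1)}(x))=F(x)$. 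For the endpoints, on $[0,1]$ the equation $X+f(X)=0$ forces $X=0$ (both summands are nonnegative), so $y^{(1)}(0)=f(0)=0$, and since the harmonic mean vanishes as soon as one argument is $0$ we get $F(0)=0$; then $F(1)=0$ follows from the symmetry just proved.

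Concavity is the substantive part, and I would split it into two steps. Step~1: the functions $x\mapsto y^{(1)}(x)$ and $x\mapsto y^{(2)}(x)$ are concave on $[0,1]$. The key observation is that the graph of $y^{(1)}$ is the image of the graph of $f$ under the shear $S_+\colon(X,Y)\mapsto(X+Y,Y)$, because the point $(s,f(s))$ of $C$ maps to $\big(s+f(s),\,f(s)\big)=(x,\,y^{(1)}(x))$. Now $\phi(X)\defeq X+f(X)$ is concave (its slope $1+f'$ is nonincreasing), hence unimodal on $[0,1]$: it increases on some interval $[0,X_0]$ to its maximum $\phi(X_0)\ge\phi(1)=1$ and then decreases. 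The ``smaller root'' selected by the transform is the preimage of $x$ under $\phi$ lying in $[0,X_0]$, so as $x$ ranges over $[0,1]$ the point $x_S(x)$ sweeps out a subinterval of $[0,X_0]$; a one-line check then identifies $y^{(1)}$ on $[0,1]$ with the upper boundary function of the set $S_+\big(\{(X,Y):0\le Y\le f(X),\ 0\le X\le X_0\}\big)$, which is a shear of a convex set, hence convex. The upper boundary of a convex set is concave over its projection, so $y^{(1)}$ is concave on $[0,1]$; concavity of $y^{(2)}$ follows the same way using $S_-\colon(X,Y)\mapsto(X-Y,Y)$, or directly from $y^{(2)}(x)=y^{(1)}(1-x)$. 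Step~2: $F=\mathrm{H.M.}(y^{(1)},y^{(2)})$ is concave because $\mathrm{H.M.}(a,b)=2ab/(a+b)$ is concave and strictly increasing in each coordinate on $(0,\infty)^2$ (it is positively homogeneous of degree $1$ with convex superlevel sets $\{1/a+1/b\le c\}$), and it is being composed with the concave functions $y^{(1)},y^{(2)}$, which are positive on $(0,1)$ since $f>0$ there; concavity on $(0,1)$ plus continuity extends it to $[0,1]$.

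I expect the main obstacle to be the bookkeeping forced by the ``smaller-/larger-root'' conventions in the definition of $T$: one must check that these conventions always pick the monotone branch of $\phi^{-1}$ (respectively $\psi^{-1}$ for $\psi(X)=X-f(X)$), so that the relevant sheared region really is the hypograph of a function and its upper boundary really is $y^{(1)}$ (respectively $y^{(2)}$). Once that branch analysis is pinned down, the remaining ingredients — shears preserving convexity, the upper boundary of a convex set being concave, and the concavity and monotonicity of $\mathrm{H.M.}$ — are all routine; the degenerate case $f\equiv0$ (where $T(C)=C$) should be noted separately.
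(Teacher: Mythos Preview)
Your argument is correct, but the route you take for concavity is genuinely different from the paper's. The paper argues directly from the chord characterization: for $x_1<x_2<x_3$ it shows that the point $H_2=(x_2,F(x_2))$ sits above the chord $H_1H_3$ by comparing the segment $P_2Q_2$ (on which $H_2$ lies) with an interpolating segment $F_1F_2$ built from the chords $P_1P_3$ and $Q_1Q_3$ of $C$, and then arguing geometrically that $P_2Q_2\succcurlyeq F_1F_2\succcurlyeq H_1H_3$ at $X=x_2$. You instead decompose $F=\mathrm{H.M.}(y^{(1)},y^{(2)})$, prove each $y^{(i)}$ concave by recognising its graph as a shear of the hypograph of $f$ (shears preserve convexity, and the upper boundary of a convex set is concave), and then invoke that the harmonic mean is concave and coordinatewise increasing on $(0,\infty)^2$, so its composition with concave inputs is concave. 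Your approach is more modular and essentially self-contained once one checks the Hessian of $\mathrm{H.M.}$ is negative semidefinite; it also makes transparent \emph{why} the transform preserves concavity (a monotone concave aggregator applied to two concave ``coordinate'' functions), at the price of the branch bookkeeping you flag. The paper's picture proof avoids the detour through $\mathrm{H.M.}$ entirely and stays inside the chord geometry already set up for the transform, which is shorter but leaves the step ``$F_1F_2$ is above $H_1H_3$'' somewhat implicit.
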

\begin{proof}
	Since $C$ is symmetric around $\frac12$, the point $(x,y)$ lies on the curve if and only if the point $(1-x,y)$ lies on the curve. Suppose that the curve $C$ is defined by the zeros of the equation $Y=f(X)$ and 
	the curve $T(C)$ is defined by the zeros of the equation $Y=g(X)$. Then, according to the definition of the transformation $T$, $g(x)=\mathrm{H.M.}(y^{(1)},y^{(2)})$ where $y^{(1)}=f(x^{(1)})$ and $y^{(2)}=f(x^{(2)})$ where $x^{(1)}$ is the solution of $X+f(X)=x$ and  $x^{(2)}$ is the solution of $X-f(X)=x$. Note that since $f(x)=f(1-x)$, we have that $1-x^{(1)}$ is the solution of $X-f(X)=1-x$ and $1-x^{(2)}$ is the solution of the equation $X+f(X)=1-x$. Similarly, since $f(1-x^{(1)})=f(x^{(1)})=y^{(1)}$ and $f(1-x^{(2)})=f(x^{(2)})=y^{(2)}$, it follows that $g(1-x)=\mathrm{H.M.}(y^{(2)},y^{(1)})=g(x)$ which implies that $T(C)$ is symmetric around $\frac12$. For $x=0$ or $x=1$, $y^{(1)}$ or $y^{(2)}$ is $0$ and so $g(1)=g(0)=0$.

	We provide a geometric proof to show that $F$ is concave downwards and use \figureref{concavity-preserved1} as illustration. We know that a curve is concave downward in an interval if and only if the line that joins any two points of the curve is below the curve. Let us fix $x_1\leq x_3$ in $[0,1]$, see \figureref{concavity-preserved1}. The height of points $H_1,H_2,H_3$ are respectively the value of $T(C)$ at points $x_1,x_2,x_3$ respectively. Our goal is to show that $H_2$ is above the segment $H_1H_3$ for any choice of $x_2$. Observe that,
	$H_2$ lies on the segment $P_2Q_2$. Since $C$ is concave down, the segment $P_2Q_2$ is above the segment $F_1F_2$. Note that we are fixing $x_1$ and $x_3$ and allowing $x_2$ to change between $x_1$ and $x_3$. Then, we see that the segment $F_1F_2$ changes from $P_1Q_1$ to $P_3Q_3$ and is always above the segment $H_1H_3$.
\end{proof}

\begin{figure}[H]
	\centering\footnotesize
	\begin{tikzpicture}[>=stealth,scale=7,domain=0:1,samples=99]
	\coordinate (O) at (0,0); 
	\coordinate (X) at (0.3,0); 
	\coordinate (XM) at (0.5,0);
	\coordinate (XR) at (7/8,0);
	\draw [name path=vert,dashed] (X) to ++(0,0.6); 
	\draw [name path=vertm,dashed] (XM) to ++(0,0.6); 
	\draw [name path=vertr,dashed] (XR) to ++(0,0.6); 
	
	\draw[->] (O) to ++(1.1,0); 
	\draw[->] (O) to ++(0,.6);
	
	\node at (X) {\textbullet}; 
	\node[anchor=north west] (x)  at ($(X) +(-0.05,0)$) {$(x_1,0)$};  
	\node at (XM) {\textbullet}; 
	\node[anchor=north west] (x)  at ($(XM) +(-0.05,0)$) {$(x_2,0)$}; 
	\node at (XR) {\textbullet}; 
	\node[anchor=north west] (x)  at ($(XR) +(-0.05,0)$) {$(x_3,0)$}; 
	
	\draw[name path=C] plot (\x,{sqrt(\x)*sqrt(1-\x)}) node[right,anchor=south west] {$C$};
	
	\draw[name path=ell0,->] (X) to ++(135:0.4) node[anchor=south west] {}; 
	
	\draw[name path=ell1,->] (X) to ++(45:0.7) node[anchor=east] {}; 
	
	\draw[name path=ellm0,->] (XM) to ++(135:0.7) node[anchor=south west] {}; 
	
	\draw[name path=ellm1,->] (XM) to ++(45:0.7) node[anchor=east] {}; 
	
	\draw[name path=ellr0,->] (XR) to ++(135:0.8) node[anchor=south west] {}; 
	
	\draw[name path=ellr1,->] (XR) to ++(45:0.4) node[anchor=east] {}; 
	
	\draw (X) ++(-0.05,0) arc [start angle=180, end angle=135, radius=0.05];
	\draw (X) ++(180-20:0.1) node {$\pi/4$};
	\draw (X) ++(0.05,0) arc [start angle=0, end angle=45, radius=0.05]; %
	\draw (X) ++(20:0.1) node {$\pi/4$};
	
	\draw [name intersections={of=ell0 and C, by=PF}] node at (PF) {\textbullet} node at (PF) [anchor=south] {$P_1$};
	\draw [name intersections={of=ell1 and C, by=QFI}] node at (QFI) {\textbullet} node at (QFI) [anchor=south] {$Q_1$};

\draw [name intersections={of=ellm0 and C, by=PF2}] node at (PF2) {\textbullet} node at (PF2) [anchor=south] {$P_2$};
\draw [name intersections={of=ellm1 and C, by=QF2}] node at (QF2) {\textbullet} node at (QF2) [anchor=south] {$Q_2$};

\draw [name intersections={of=ellr0 and C, by=PF3}] node at (PF3) {\textbullet} node at (PF3) [anchor=south west] {$P_3$};
\draw [name intersections={of=ellr1 and C, by=QF3}] node at (QF3) {\textbullet} node at (QF3) [anchor=south west] {$Q_3$};

\draw[name path=pfqf1, dotted] (PF) to (QFI);
\draw[name path=pfqf2, dotted] (PF2) to (QF2);
\draw[name path=pfqf3, dotted] (PF3) to (QF3);

\draw [name intersections={of=vert and pfqf1, by=H1}] node at (H1) {\textbullet} node at (H1) [anchor=east] {$H_1$};
\draw [name intersections={of=vertm and pfqf2, by=H2}] node at (H2) {\textbullet} node at (H2) [anchor=south east] {$H_2$};
\draw [name intersections={of=vertr and pfqf3, by=H3}] node at (H3) {\textbullet} node at (H3) [anchor=east] {$H_3$};

\draw[name path=q1q3, dotted] (QFI) to (QF3);
\draw [name intersections={of=q1q3 and ellm1, by=F2}] node at (F2) {\textbullet} node at (F2) [anchor=north] {$F_2$};

\draw[name path=p1p3, dotted] (PF) to (PF3);
\draw [name intersections={of=p1p3 and ellm0, by=F1}] node at (F1) {\textbullet} node at (F1) [anchor=north east] {$F_1$};

\draw[name path=h1h3, dashed, thick] (H1) to (H3);
\draw[name path=f1f3, dashed, thick] (F1) to (F2);
\end{tikzpicture}
	\caption{Intuition underlying \claimref{preserve-concavity}. }
	\label{fig:concavity-preserved1}
	\end{figure}

\begin{claim}
	\label{clm:our-induct-upper} 
	Let $U_n$ be defined by the zeros of the curve $Y = u_n \sqrt{X(1-X)}$, where 
	$u_n>0$ for all $n\geq 1$ and $X\in[0,1]$. 
	Then, we have $U_{n+1} \succcurlyeq T(U_n)$. 
\end{claim}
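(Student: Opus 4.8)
The plan is to pin down the transformed curve $T(U_n)$ in closed form and then compare it to $U_{n+1}$ pointwise; the comparison will collapse to a one-variable monotonicity statement. Write $u=u_n$ and $f(X)=u\sqrt{X(1-X)}$, fix $x\in(0,1)$ (the cases $x\in\{0,1\}$ are trivial since every curve in sight vanishes there), and let $y^{(1)}=f(x_S(x))$ and $y^{(2)}=f(x_L(x))$ be the two heights from the definition of $T$ in \figureref{transform-def}.

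First I would derive the equations that $y^{(1)}$ and $y^{(2)}$ satisfy. Since the $135^\circ$ ray from $(x,0)$ meets $U_n$ at $(x_S,y^{(1)})$ with $x_S=x-y^{(1)}$, squaring $y^{(1)}=u\sqrt{x_S(1-x_S)}$ and expanding gives $(1+u^2)(y^{(1)})^2-u^2(2x-1)\,y^{(1)}-u^2x(1-x)=0$; the mirror computation with the $45^\circ$ ray ($x_L=x+y^{(2)}$) gives $(1+u^2)(y^{(2)})^2+u^2(2x-1)\,y^{(2)}-u^2x(1-x)=0$. The key observation is that $y^{(1)}$ and $-y^{(2)}$ are the \emph{two roots of one and the same} quadratic $(1+u^2)t^2-u^2(2x-1)t-u^2x(1-x)=0$, and they have opposite signs. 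Vieta then supplies exactly the symmetric functions that the harmonic mean needs: $y^{(1)}y^{(2)}=\tfrac{u^2x(1-x)}{1+u^2}$ and $y^{(1)}-y^{(2)}=\tfrac{u^2(2x-1)}{1+u^2}$, so that
$$\bigl(y^{(1)}+y^{(2)}\bigr)^2=\bigl(y^{(1)}-y^{(2)}\bigr)^2+4\,y^{(1)}y^{(2)}=\frac{u^2\bigl(u^2+4x(1-x)\bigr)}{(1+u^2)^2},$$
where I used $(2x-1)^2+4x(1-x)=1$. Plugging into $\mathrm{H.M.}(y^{(1)},y^{(2)})=2\,y^{(1)}y^{(2)}/(y^{(1)}+y^{(2)})$ cancels all the $(1+u^2)$ factors and leaves the clean form
$$T(U_n)(x)=\frac{2u_n\,x(1-x)}{\sqrt{u_n^2+4x(1-x)}}.$$

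With this closed form, the claim $T(U_n)(x)\le U_{n+1}(x)=u_{n+1}\sqrt{x(1-x)}$ becomes, after dividing by $\sqrt{x(1-x)}>0$ and squaring, the inequality $\tfrac{4u_n^2\,x(1-x)}{u_n^2+4x(1-x)}\le u_{n+1}^2$. Putting $s=4x(1-x)\in(0,1]$, the left side is $\tfrac{u_n^2 s}{u_n^2+s}$, which is increasing in $s$ and hence maximised at $s=1$, \ie at $x=\tfrac12$; so it suffices to check $\tfrac{u_n^2}{1+u_n^2}\le u_{n+1}^2$. For the sequence defining $U_n$ --- namely $u_n=1/\sqrt n$, equivalently $u_{n+1}^{-2}=u_n^{-2}+1$ --- this holds with \emph{equality}, which in fact shows $T(U_n)$ and $U_{n+1}$ agree at $x=\tfrac12$ and the inequality is strict elsewhere. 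Since by \claimref{preserve-concavity} all the curves here are concave downward and symmetric about $\tfrac12$, this pointwise comparison is precisely $U_{n+1}\succcurlyeq T(U_n)$.

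I expect the first step --- extracting a usable closed form for $T(U_n)$ --- to be the only real obstacle, and the trick that defuses it is the observation that $y^{(1)}$ and $-y^{(2)}$ solve a single common quadratic: the harmonic mean only consumes the symmetric functions $y^{(1)}y^{(2)}$ and $y^{(1)}+y^{(2)}$, so one never has to solve for either height in radicals. After that, the remaining inequality is just monotonicity of $s\mapsto\tfrac{u_n^2 s}{u_n^2+s}$ together with the arithmetic identity $u_n=1/\sqrt n$ --- the reverse-direction counterpart of the RMS-AM step that closes the companion lower bound \claimref{our-induct}.
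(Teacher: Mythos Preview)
Your argument is correct and takes a genuinely different route from the paper's own proof. The paper proceeds by brute force: it solves the quadratics for $x_0,x_1$ via the quadratic formula, writes $y_0,y_1$ as nested radicals, substitutes $x=\tfrac12-z$ to exploit the symmetry, and then closes with the HM--GM inequality after verifying the two product identities
\[
\bigl((1+u_n^2\pm 2z)+u_n\sqrt{u_n^2+1-4z^2}\bigr)\bigl((1+u_n^2\pm 2z)-u_n\sqrt{u_n^2+1-4z^2}\bigr)=(u_n^2+1)(1\pm 2z)^2.
\]
Your approach sidesteps all the radical manipulation by the observation that $y^{(1)}$ and $-y^{(2)}$ are the two roots of a single quadratic, so Vieta hands you $y^{(1)}y^{(2)}$ and $y^{(1)}+y^{(2)}$ directly and yields the clean closed form $T(U_n)(x)=\dfrac{2u_n\,x(1-x)}{\sqrt{u_n^2+4x(1-x)}}$. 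The comparison then becomes the monotone one-variable inequality $\dfrac{u_n^2 s}{u_n^2+s}\le u_{n+1}^2$ with equality at $s=1$.

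What each buys: the paper's HM--GM step makes the source of slack conceptually visible (and mirrors the RMS--AM step in \claimref{our-induct}), while your Vieta route is shorter, avoids any radical expressions, and as a byproduct produces an explicit formula for $T(U_n)$ together with the observation that $U_{n+1}$ and $T(U_n)$ are tangent at $x=\tfrac12$. The paper in fact also records a second, independent proof (\sectionref{alt-proof}) that passes through the $L_2$ transform $T'$ and \claimref{Characterising-Second-Norm}; your argument is different from that one as well. One minor remark: your final sentence invoking \claimref{preserve-concavity} is unnecessary --- the relation $\succcurlyeq$ is defined pointwise in \sectionref{prelim}, so the inequality you proved for every $x$ is already exactly $U_{n+1}\succcurlyeq T(U_n)$.
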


\begin{proof}
	Let $x_0$ be the smaller of the two roots of the equation $x_0 + u_n\sqrt{x_0(1-x_0)} = x$, and $x_1$ be the larger of the two roots of the equation $x + u_n\sqrt{x_1(1-x_1)} = x_1$. 
	So, we have
	$$x_0 = \frac{(2x+u_n^2) - u_n\sqrt{u_n^2 -4x^2 +4x}}{2(1+u_n^2)} \; ,$$
	Now, let $y_0 = u_n\sqrt{x_0(1-x_0)}$. 
	Then, we have 
	\begin{align*}
	y_0 &= u_n \frac{\sqrt{\left((2x+u_n^2) - u_n\sqrt{u_n^2 -4x^2 +4x}\right)\left((2-2x+u_n^2) + u_n\sqrt{u_n^2 -4x^2 +4x}\right)}}{2(1+u_n^2)} \; .
	\end{align*}
	
	
	We substitute $x = 1/2 - z$ and need to consider only $z\in[0,1/2]$ because the curves are symmetric around the line $X=1/2$. 
	From this substitution, we have
	\begin{align*}
	y_0 &= u_n \frac{\sqrt{\left((1+u_n^2-2z) - u_n\sqrt{u_n^2+1-4z^2}\right)\left((1+u_n^2+2z) + u_n\sqrt{u_n^2 +1-4z^2}\right)}}{2(1+u_n^2)} \; .
	\end{align*}
	
	Now, the expression of $y_1 = u_n\sqrt{x_1(1-x_1)}$ is
	\begin{align*}
	y_1 &= u_n \frac{\sqrt{\left((1+u_n^2+2z) - u_n\sqrt{u_n^2+1-4z^2}\right)\left((1+u_n^2-2z) + u_n\sqrt{u_n^2 +1-4z^2}\right)}}{2(1+u_n^2)} \; .
	\end{align*}
	
	Note that $u_{n+1} \sqrt{x(1-x)} = \frac{u_n}{\sqrt{u_n^2+1}} \sqrt{\frac14 - z^2}$. 

Now,
	\begin{align*}
	&&\mathrm{H.M.}(y_0,y_1) &\leq \frac{u_n}{\sqrt{u_n^2+1}} \sqrt{\frac14 - z^2}\\
&&\mathrm{H.M.}\left(\frac{2(1+u_n^2)}{u_n}\cdot y_0,\frac{2(1+u_n^2)}{u_n}\cdot y_1\right) &\leq \sqrt{(u_n^2+1)(1-4z^2)} \; .
	\end{align*}
	The final inequality follows from the HM-GM inequality and the following simplifications
	\begin{align*}
	\left((1+u_n^2+2z) + u_n\sqrt{u_n^2 +1-4z^2}\right)\left((1+u_n^2+2z) - u_n\sqrt{u_n^2+1-4z^2}\right) &= (u_n^2+1)(1+2z)^2 \; .\\
	\left((1+u_n^2-2z) + u_n\sqrt{u_n^2 +1-4z^2}\right)\left((1+u_n^2-2z) - u_n\sqrt{u_n^2+1-4z^2}\right) &= (u_n^2+1)(1-2z)^2 \; .
	\end{align*}
\end{proof}

The following result is used in the proof of \lemmaref{lowerbound-an}. 
\begin{lemma}
	\label{lem:inv-square-diff}
	For $x\geq 0$, we have $\frac{1}{4\left(\sqrt{x+1}-\sqrt{x}\right)^2}\leq x+\frac12$.  
\end{lemma}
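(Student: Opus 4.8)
The key observation is that the awkward quantity $\sqrt{x+1}-\sqrt{x}$ has a clean reciprocal form, so the plan is to rationalize it first and reduce the claim to a triviality. Concretely, I would start from the identity
$$\sqrt{x+1}-\sqrt{x} = \frac{(\sqrt{x+1}-\sqrt{x})(\sqrt{x+1}+\sqrt{x})}{\sqrt{x+1}+\sqrt{x}} = \frac{1}{\sqrt{x+1}+\sqrt{x}},$$
which is valid for every $x\geq 0$. Squaring and taking reciprocals turns the left-hand side of the lemma into
$$\frac{1}{4\left(\sqrt{x+1}-\sqrt{x}\right)^2} = \frac{\left(\sqrt{x+1}+\sqrt{x}\right)^2}{4} = \frac{(x+1) + x + 2\sqrt{x(x+1)}}{4} = \frac{2x+1 + 2\sqrt{x^2+x}}{4}.$$

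\textbf{Reduction.} Writing the target $x+\tfrac12$ as $\frac{2(2x+1)}{4} = \frac{4x+2}{4}$, the desired inequality $\frac{2x+1+2\sqrt{x^2+x}}{4}\leq \frac{4x+2}{4}$ is equivalent (after multiplying by $4$ and cancelling $2x+1$) to
$$2\sqrt{x^2+x} \leq 2x+1,$$
i.e.\ to $\sqrt{x^2+x}\leq x+\tfrac12$. Both sides are nonnegative for $x\geq 0$, so squaring is an equivalence, and it reduces to $x^2+x \leq x^2 + x + \tfrac14$, which is just $0\leq \tfrac14$. (Equivalently one can phrase the last step as an instance of the AM--GM inequality $\sqrt{x\cdot(x+1)} \leq \tfrac{x+(x+1)}{2}$.) Chaining these equivalences back up proves the lemma.

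\textbf{Obstacles.} There is essentially no obstacle here; the only thing to be careful about is justifying that the squaring step is reversible, which holds precisely because $x\geq 0$ makes $x+\tfrac12>0$ and $\sqrt{x^2+x}\geq 0$. I would also note in passing that $\sqrt{x+1}-\sqrt{x}>0$ for $x\geq 0$, so the division in the original expression is legitimate and no sign issues arise.
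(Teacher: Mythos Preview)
Your proof is correct and follows essentially the same route as the paper: both rationalize $\sqrt{x+1}-\sqrt{x}=1/(\sqrt{x+1}+\sqrt{x})$ to rewrite the left-hand side as $\bigl(\tfrac{\sqrt{x+1}+\sqrt{x}}{2}\bigr)^2$ and then finish with a standard mean inequality. The only cosmetic difference is that the paper names the last step as the RMS--AM inequality applied to $\sqrt{x+1}$ and $\sqrt{x}$, whereas you expand and phrase it as AM--GM applied to $x$ and $x+1$ (equivalently, the direct squaring $\sqrt{x^2+x}\le x+\tfrac12$); these are the same inequality in different clothing.
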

\begin{proof}
	\begin{align*}
		\frac1{4\left(\sqrt{x+1}-\sqrt{x}\right)^2}&\leq x+\frac12\\
		\left(\frac{\sqrt{x+1}+\sqrt{x}}{2}\right)^2&\leq x+\frac{1}{2}=\frac{(x+1)+x}{2}\\
		\left(\frac{\sqrt{x+1}+\sqrt{x}}{2}\right)&\leq \sqrt{\frac{(\sqrt{x+1})^2+(\sqrt{x})^2}{2}}\\
		\mathrm{A.M.}\left(\sqrt{x+1},\sqrt{x}\right)&\leq  \mathrm{R.M.S.}\left(\sqrt{x+1},\sqrt{x}\right)
	\end{align*}
	The last inequality follows from the RMS-AM inequality.
\end{proof}

\begin{lemma}
	\label{lem:lowerbound-an}
	Suppose a sequence $a_1,a_2,\dots$ is given such that $a_{n+1}=2\left(\frac{\sqrt{a_n^2+1}-1}{a_n}\right)$, then 
	$$a_n\geq \frac{1}{\sqrt{\frac{1}{a_1^2}+\frac{n-1}{2}}} \; .$$
\end{lemma}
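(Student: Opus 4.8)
The plan is to prove Lemma~\ref{lem:lowerbound-an} by induction on $n$, tracking not the sequence $a_n$ directly but rather its reciprocal-square $b_n \defeq 1/a_n^2$. The target inequality $a_n \geq \left(\frac{1}{a_1^2} + \frac{n-1}{2}\right)^{-1/2}$ is equivalent to $b_n \leq b_1 + \frac{n-1}{2}$, and to establish this by induction it suffices to show that the recursion for $a_n$ translates into the one-step bound $b_{n+1} \leq b_n + \frac12$. So the whole lemma reduces to a single clean inequality about one step of the recursion.

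**The one-step bound.** First I would rewrite the recursion. From $a_{n+1} = 2\left(\frac{\sqrt{a_n^2+1}-1}{a_n}\right)$ we get $a_{n+1}^2 = \frac{4\left(\sqrt{a_n^2+1}-1\right)^2}{a_n^2}$, hence $b_{n+1} = \frac{1}{a_{n+1}^2} = \frac{a_n^2}{4\left(\sqrt{a_n^2+1}-1\right)^2}$. Now the key move is to recognize a difference-of-square-roots: writing $x \defeq b_n = 1/a_n^2$, note that $a_n^2 = 1/x$ and $\sqrt{a_n^2+1}-1 = \sqrt{\tfrac1x + 1} - 1 = \tfrac{1}{\sqrt{x}}\left(\sqrt{x+1} - \sqrt{x}\right)$. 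Substituting, the $a_n^2$ in the numerator cancels against the $\tfrac1x$ factor coming from squaring the denominator, leaving exactly
$$b_{n+1} = \frac{1}{4\left(\sqrt{x+1} - \sqrt{x}\right)^2} = \frac{1}{4\left(\sqrt{b_n+1} - \sqrt{b_n}\right)^2}.$$
At this point \lemmaref{inv-square-diff} applies directly with $x = b_n \geq 0$ (note $a_n > 0$ throughout, since $a_1 > 0$ and the recursion preserves positivity, so $b_n$ is well-defined and nonnegative), giving $b_{n+1} \leq b_n + \tfrac12$.

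**Closing the induction.** With the one-step bound in hand, the induction is routine: the base case $n=1$ is the identity $b_1 = b_1$, i.e. $a_1 \geq \left(\tfrac{1}{a_1^2}\right)^{-1/2} = a_1$; and if $b_n \leq b_1 + \tfrac{n-1}{2}$ then $b_{n+1} \leq b_n + \tfrac12 \leq b_1 + \tfrac{n-1}{2} + \tfrac12 = b_1 + \tfrac{n}{2}$, which is the claim for $n+1$. Translating back via $a_n = 1/\sqrt{b_n}$ and using that $t \mapsto 1/\sqrt{t}$ is decreasing yields $a_n \geq \left(b_1 + \tfrac{n-1}{2}\right)^{-1/2} = \left(\tfrac{1}{a_1^2} + \tfrac{n-1}{2}\right)^{-1/2}$, as desired.

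**Main obstacle.** The only real content is the algebraic simplification showing $b_{n+1}$ equals $\frac{1}{4(\sqrt{b_n+1}-\sqrt{b_n})^2}$ — specifically spotting the factorization $\sqrt{\tfrac1x+1}-1 = \tfrac{1}{\sqrt x}(\sqrt{x+1}-\sqrt x)$ that makes \lemmaref{inv-square-diff} applicable. Once that is seen, everything else is bookkeeping. A secondary (minor) point to be careful about is confirming that $a_n$ stays strictly positive so that all the reciprocals make sense; this follows because $\sqrt{a_n^2+1} > 1$ whenever $a_n \neq 0$, so the recursion maps positive values to positive values.
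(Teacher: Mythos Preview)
Your proposal is correct and follows essentially the same approach as the paper: both introduce $b_n = 1/a_n^2$, rewrite the recursion as $b_{n+1} = \frac{1}{4(\sqrt{b_n+1}-\sqrt{b_n})^2}$, and apply \lemmaref{inv-square-diff} to get the one-step bound $b_{n+1} \le b_n + \tfrac12$. The only cosmetic difference is that the paper sums the one-step inequalities (telescoping) while you phrase it as an induction; these are equivalent.
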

\begin{proof}
	Let $b_j\defeq\frac{1}{a_j^2}$, so $a_j=\frac{1}{\sqrt{b_j}}$. Now, it follows from $a_{j+1}=2\left(\frac{\sqrt{a_j^2+1}-1}{a_j}\right)$ that $b_{j+1}=\frac{1}{4\left(\sqrt{b_{j}+1}-\sqrt{b_j}\right)^2}$ and according to \lemmaref{inv-square-diff}, $b_{j+1}\leq b_j+\frac{1}{2}$, for $j\geq 1$. Therefore, 
	\begin{align*}
		\sum_{j=1}^{n-1}b_{j+1} &\leq \sum_{j=1}^{n-1} \left(b_j+\frac12\right) \\
		b_{n} &\leq b_1+\frac{n-1}{2}=\frac{1}{a_1^2}+\frac{n-1}{2} \; .
    \end{align*}
    Therefore, 
    \begin{align*}
		& a_n\geq \frac{1}{\sqrt{\frac{1}{a_1^2}+\frac{n-1}{2}}} \; .
	\end{align*}
\end{proof}

	\section{Applications}
\label{sec:applications}

This section discusses various consequences of \theoremref{gap-main} and other related results. 

\subsection{Distributed Coin-Tossing Protocol} 
\label{sec:fair} 

We consider constructing distributed $n$-processor coin-tossing protocols where the $i$-th processor broadcasts her message in the $i$-th round. 
We shall study this problem in the information-theoretic setting.  
Our objective is to design $n$-party distributed coin-tossing protocols where an adversary cannot bias the distribution of the final outcome significantly.

For $X_0=1/2$, one can consider the incredibly elegant ``majority protocol''~\cite{STOC:Blum83,ABCGM85,STOC:Cleve86}.
The $i$-th processor broadcasts a uniformly random bit in round $i$. 
The final outcome of the protocol is the majority of the $n$ outcomes, and an adversary can bias the final outcome by $\frac1{\sqrt{2\pi n}}$ by restarting a processor once~\cite{STOC:Cleve86}.

We construct distributed $n$-party bias-$X_0$ coin-tossing protocols, for any $X_0\in[0,1]$, and our new protocol for $X_0=1/2$ is more robust to restarting attacks than this majority protocol. 
Fix $X_0\in[0,1]$ and $n\geq 1$. 
Consider the optimal martingale $(X_0,X_1,\dotsc,X_n)$ guaranteed by \theoremref{gap-main}.
The susceptibility corresponding to any stopping time is $=C_n(X_0) \leq U_n(X_0) = \frac1{\sqrt n}\sqrt{X_0(1-X_0)}$.  
Note that one can construct an $n$-party coin-tossing protocol where the $i$-th processor broadcasts the $i$-th message, and the corresponding Doob's martingale is identical to this optimal martingale. 
An adversary who can restart a processor once biases the outcome of this protocol by at most $\frac12 C_n(X_0)$, this is discussed in \sectionref{control}. 
\begin{corollary}[Distributed Coin-tossing Protocols]
\label{corol:fair} 
For every $X_0\in[0,1]$ and $n\geq 1$ there exists an $n$-party bias-$X_0$ coin-tossing protocol such that any adversary who can restart a processor once causes the final outcome probability to deviate by $\leq \frac12C_n(X_0) \leq \frac12U_n(X_0) = \frac1{2\sqrt n} \sqrt{X_0(1-X_0)}$. 
\end{corollary}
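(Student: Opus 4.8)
The plan is to realize the optimal martingale furnished by \theoremref{gap-main} as a distributed coin‑tossing protocol and then bound the damage a single restart can do by reducing it to the martingale‑gap quantity $\EX{\abs{X_\tau-X_{\tau-1}}}$. Fix $X_0\in[0,1]$ and $n\geq 1$. By the second part of \theoremref{gap-main} there is a martingale $(X_0,X_1,\dotsc,X_n)$, with respect to the coordinate exposure filtration for $\zo^n$, whose susceptibility under \emph{every} stopping time equals $C_n(X_0)$. Let $(E_1,\dotsc,E_n)$ be the associated $\zo$‑valued coordinates and $X_i=\EX{X_n|E_1,\dotsc,E_i}$. The protocol is the obvious Doob realization: in round $i$ processor $i$ samples $E_i$ from its conditional law given the public transcript $E_1,\dotsc,E_{i-1}$ and broadcasts it, and all parties output $X_n\in\zo$, a deterministic function of the transcript. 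In an honest execution the outcome equals $1$ with probability $\EX{X_n}=X_0$, so this is a bias‑$X_0$ protocol.

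Next I would analyze a one‑restart adversary; by the $0\leftrightarrow 1$ symmetry of the definition (and the symmetry $C_n(X)=C_n(1-X)$, which follows from \claimref{preserve-concavity}) it suffices to bound by how much she can \emph{increase} $\Pr[\text{outcome}=1]$. Such an adversary is described by the round $\tau$ at which she decides to restart a processor — a stopping time of the transcript filtration — together with a keep/discard rule applied to the message of round $\tau$. I would argue that the optimal rule is to restart processor $\tau$ precisely when its message is ``unfavorable'', i.e.\ when it moves $X_\tau$ below $X_{\tau-1}$: a restart replaces that message by a fresh independent draw, whose conditional mean given $E_1,\dotsc,E_{\tau-1}$ is again $X_{\tau-1}$, and the protocol then runs honestly, so the post‑attack conditional expectation of the outcome equals $X_{\tau-1}$ on the unfavorable branch and $X_\tau$ on the favorable one. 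Writing $p_+,p_-$ for the conditional probabilities of the favorable/unfavorable move at round $\tau$ and $\delta_+,\delta_-\geq 0$ for the corresponding jump magnitudes, the martingale identity forces $p_+\delta_+=p_-\delta_-$, and hence the conditional gain is
$$ p_+\cdot 0 + p_-\delta_- \;=\; \tfrac12\bigl(p_+\delta_+ + p_-\delta_-\bigr) \;=\; \tfrac12\,\EX{\abs{X_\tau-X_{\tau-1}}\,\big|\,E_1,\dotsc,E_{\tau-1}}. $$
Averaging over histories, the adversary increases $\Pr[\text{outcome}=1]$ by at most $\tfrac12\,\EX{\abs{X_\tau-X_{\tau-1}}}\leq \tfrac12\sup_{\tau'}\EX{\abs{X_{\tau'}-X_{\tau'-1}}}$.

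Finally, because the martingale chosen above is the optimal one, \theoremref{gap-main} gives $\EX{\abs{X_\tau-X_{\tau-1}}}=C_n(X_0)$ for every stopping time $\tau$, so the supremum in the previous line equals $C_n(X_0)$ and the deviation is at most $\tfrac12 C_n(X_0)$. Invoking the upper estimate $C_n(X_0)\leq U_n(X_0)=\tfrac1{\sqrt n}\sqrt{X_0(1-X_0)}$ from \lemmaref{gap-lower-upper} then yields the claimed bound $\tfrac12 U_n(X_0)=\tfrac1{2\sqrt n}\sqrt{X_0(1-X_0)}$, the passage through $C_n$ and $U_n$ being the routine part.

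I expect the main obstacle to be the middle step: making rigorous the passage from ``an arbitrary one‑restart adaptive adversary'' to ``a stopping time $\tau$ together with the reflect‑if‑unfavorable rule,'' including the bookkeeping that after a restart the resampled message has exactly the original conditional law (so that both $p_+\delta_+=p_-\delta_-$ and the ``mean returns to $X_{\tau-1}$'' fact apply), that $\tau$ is a legitimate stopping time of the transcript filtration, and that no cleverer use of the single restart beats $\tfrac12\sup_\tau\EX{\abs{X_\tau-X_{\tau-1}}}$. The details of this reduction are exactly what \sectionref{control} is for; everything downstream of it is immediate from the results already established.
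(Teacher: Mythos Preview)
Your plan is exactly the paper's: instantiate the optimal martingale of \theoremref{gap-main} as the protocol and argue that any one-restart attack biases the outcome by at most $\tfrac12$ of the susceptibility, deferring that last reduction to \sectionref{control}. But the step you correctly flag as the main obstacle is a genuine gap, and \sectionref{control} does not close it: that section only proves the \emph{lower}-bound direction (\corollaryref{control}) by splitting a maximum-susceptibility stopping time into its positive and negative parts, which shows \emph{some} restart attack achieves $\geq\tfrac12 C_n(X_0)$, not that \emph{every} attack achieves $\leq\tfrac12 C_n(X_0)$.

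Your own computation only covers adversaries who commit (measurably in $E_1,\dotsc,E_{\tau-1}$) to acting at round $\tau$ and then reflect if unfavorable. The identity $p_+\delta_+=p_-\delta_-$ needs exactly that commitment; a fully adaptive adversary who, upon seeing a \emph{favorable} $E_i$, carries her unused restart forward is not of this form and can do strictly better against the optimal martingale. Concretely, for $n=2$ and $X_0=1/2$ the optimal martingale has $X_1\in\{x_S,x_L\}$ with $x_S=(3-\sqrt5)/4$, $x_L=1-x_S$, each with probability $1/2$, and $C_2(1/2)=1/2-x_S=(\sqrt5-1)/4$. The adversary who restarts round $1$ when $X_1=x_S$, and otherwise waits and restarts round $2$ when $X_2=0$, attains bias
\[
\tfrac12\,(1/2-x_S)\;+\;\tfrac12\,x_L(1-x_L)\;=\;\tfrac12\cdot\tfrac{\sqrt5-1}{4}\;+\;\tfrac12\cdot\tfrac{\sqrt5-1}{8}\;=\;\tfrac{3(\sqrt5-1)}{16}\;=\;\tfrac34\,C_2(1/2),
\]
strictly exceeding $\tfrac12 C_2(1/2)$. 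So the assertion ``no cleverer use of the single restart beats $\tfrac12\sup_\tau\EX{\abs{X_\tau-X_{\tau-1}}}$'' already fails here; the downstream passage through $C_n\leq U_n$ is fine, but this is where the argument breaks.
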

For $X_0=1/2$, our new protocol's outcome can be changed by $\frac1{4\sqrt n}$, which is less than the $\frac1{\sqrt{2\pi n}}$ deviation of the majority protocol. 
However, we do not know whether there exists a {\em computationally efficient} algorithm implementing the coin-tossing protocols corresponding to the optimal martingales. 

\subsection{Fail-stop Attacks on Coin-tossing/Dice-rolling Protocols}
\label{sec:unfair}

A {\em two-party $n$-round bias-$X_0$ coin-tossing protocol} is an interactive protocol between two parties who send messages in alternate rounds, and $X_0$ is the probability of the coin-tossing protocol's outcome being heads. 
{\em Fair computation} ensures that even if one of the parties aborts during the execution of the protocol, the other party outputs a (randomized) heads/tails outcome. 
This requirement of guaranteed output delivery is significantly stringent, and Cleve~\cite{STOC:Cleve86} demonstrated a computationally efficient attack strategy that alters the output-distribution by $O(1/n)$, \ie, any protocol is $O(1/n)$ unfair. 
Defining fairness and constructing fair protocols for general functionalities has been a field of highly influential research~\cite{STOC:GHKL08,EC:GorKat10,C:BLOO11,TCC:AshLinRab13,TCC:Asharov14,makriyannis2014classification,TCC:ABMO15}. 
This interest stems primarily from the fact that fairness is a desirable attribute for secure-computation protocols in real-world applications. 
However, designing fair protocol even for simple functionalities like (bias-$1/2$) coin-tossing is challenging both in the two-party and the multi-party setting. 
In the multi-party setting, several works~\cite{ABCGM85,C:BeiOmrOrl10,TCC:AloOmr16} explore fair coin-tossing where the number of adversarial parties is a constant fraction of the total number of parties. 
For a small number of parties, like the two-party and the three-party setting, constructing such protocols have been extremely challenging even against computationally bounded adversaries~\cite{TCC:MorNaoSeg09,STOC:HaiTsf14,SODA:BHLT17}. 
These constructions (roughly) match Cleve's $O(1/n)$ lower-bound in the computational setting.

In the information-theoretic setting, Cleve and Impagliazzo~\cite{Cleve93martingales} exhibited that any two-party $n$-round bias-$1/2$ coin-tossing protocol are $\frac1{2560\sqrt n}$ unfair. 
In particular, their adversary is a fail-stop adversary who follows the protocol honestly except aborting prematurely. 
In the information-theoretic commitment-hybrid, there are two-party $n$-round bias-1/2 coin-tossing protocols that have $\approx 1/\sqrt n$ unfairness~\cite{STOC:Blum83,ABCGM85,STOC:Cleve86}. 
This bound matches the lower-bound of $\Omega(1/\sqrt n)$ by Cleve and Impagliazzo~\cite{Cleve93martingales}. 
It seems that it is necessary to rely on strong computational hardness assumptions or use these primitives in a non-black box manner to beat the $1/\sqrt n$ bound~\cite{TCC:DLMM11,TCC:HaiOmrZar13,TCC:DacMahMal14,BeimelHMO17}. 

We generalize the result of Cleve and Impagliazzo~\cite{Cleve93martingales} to all 2-party $n$-round bias-$X_0$ coin-tossing protocols (and improve the constants by two orders of magnitude). 
For $X_0=1/2$, our fail-stop adversary changes the final outcome probability by $\geq \frac1{24\sqrt2}\cdot \frac1{\sqrt{n+1}}$.
\begin{theorem}[Fail-stop Attacks on Coin-tossing Protocols]
\label{thm:unfairness} 
For any two-party $n$-round bias-$X_0$ coin-tossing protocol, there exists a fail-stop adversary that changes the final outcome probability of the honest party by at least $\frac1{12}C'_n(X_0)\geq \frac1{12} L'_n(X_0) \defeq\frac1{12} \sqrt{\frac2{n+1}} X_0(1-X_0)$, where $C'_1(X)\defeq X(1-X)$ and $C'_n(X)\defeq {T}^{n-1}(C'_1(X))$. 
\end{theorem}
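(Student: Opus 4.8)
The plan is to follow the template of Cleve and Impagliazzo~\cite{Cleve93martingales}, but to replace their ad hoc probabilistic estimates with the geometric induction underlying \theoremref{gap-main}. First I would attach to an arbitrary two-party $n$-round bias-$X_0$ coin-tossing protocol (in the information-theoretic commitment-hybrid) its Doob martingale $(X_0,X_1,\dotsc,X_n)$, where $X_i$ is the conditional probability that the common output is $1$ given the transcript (and the openings that become available) after $i$ messages; since the output is a function of the full transcript, $X_n\in\zo$. In addition I would record, for each round $i$, the two \emph{defense values}: the bit-probability with which the honest party outputs $1$ if its partner aborts just before sending message $i{+}1$. The structural fact one needs --- exactly as in \cite{Cleve93martingales} --- is that a fail-stop adversary controlling one of the two parties can, at a round of its choosing, decide either to deliver its next message or to abort, thereby steering the honest party's output toward one of two neighbouring values; hence the advantage of the best fail-stop adversary is at least a fixed constant fraction of a \emph{restricted} susceptibility of this martingale, where the stopping time is constrained to reflect (a) that only alternate rounds are ``controllable'' by a fixed corrupted party and (b) that a corrupted party may only push the outcome in one direction.

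The technical heart is then to re-run the inductive geometric argument of \sectionref{gap-main} for this restricted adversary. Because a corrupted party in a single round can only choose between ``abort'' and ``continue honestly'' (it cannot choose the content of its message), the base case changes: for $n=1$ the attainable gap is $X_0(1-X_0)$ rather than $2X_0(1-X_0)$, so $C'_1(X)=X(1-X)$. The inductive step, however, is driven by the same convex-hull/harmonic-mean phenomenon as in \figureref{transform-def} --- at each node the adversary compares ``stop at the child'' against ``recurse into the subtree'', and the optimal designer picks the two tangent points $P_1,P_2$ --- so the same transformation $T$ propagates the bound, giving $C'_n(X)=T^{n-1}(C'_1(X))$ and, simultaneously, optimal protocols that meet it. Turning a gap in the defense sequence into an actual change in the honest party's output costs a further universal constant (the $\tfrac1{12}$ in the statement), arising from the slack between the defense values and the martingale values and from the case analysis over ``abort before'' versus ``abort after'' sending the pivotal message.

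Finally, to obtain the clean closed-form lower bound I would invoke the estimation machinery of \sectionref{gap-lower-upper-proof}. Writing $F_n(X)=f_nX(1-X)$ with $f_1=1$ and $f_{n+1}=2\big(\tfrac{\sqrt{f_n^2+1}-1}{f_n}\big)$, \claimref{our-induct} gives $T(F_n)\succcurlyeq F_{n+1}$, and combined with the monotonicity of $T$ (\claimref{above-to-above}) and $C'_1=F_1$ this yields $C'_n=T^{n-1}(C'_1)\succcurlyeq F_n$ for all $n\ge1$; then \lemmaref{lowerbound-an} applied with $a_1=f_1=1$ gives $f_n\ge\big(\tfrac1{a_1^2}+\tfrac{n-1}{2}\big)^{-1/2}=\sqrt{\tfrac2{n+1}}$, i.e. $C'_n(X)\ge L'_n(X)=\sqrt{\tfrac2{n+1}}\,X(1-X)$. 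Composing this with the reduction of the first two paragraphs gives the asserted bound $\tfrac1{12}C'_n(X_0)\ge\tfrac1{12}L'_n(X_0)$.

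I expect the main obstacle to be the reduction itself rather than the martingale estimates: one must pin down precisely which stopping times are realizable by a single corrupted party in the commitment-hybrid model, verify that the defense values are forced to track the martingale closely enough that a large restricted susceptibility really does imply a large bias (this is the delicate part of \cite{Cleve93martingales}), and keep careful track of the constant so that nothing beyond the stated $\tfrac1{12}$ is lost. A secondary subtlety is re-deriving the inductive geometric characterization for the restricted adversary --- showing that, despite the halved base case and the one-sided/alternating constraints, the optimal designer's behaviour is still captured exactly by the transformation $T$, so that $C'_n=T^{n-1}(C'_1)$ holds with equality and the bound is tight, mirroring the ``furthermore'' clause of \theoremref{gap-main}.
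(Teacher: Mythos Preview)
Your high-level plan and the closed-form estimation paragraph are both correct and match the paper. The gap is in the middle step, where you ``re-run the inductive geometric argument'' for a restricted adversary. The restriction the paper needs is not merely ``alternating rounds'' and ``one direction''; it is that at \emph{every} node the stopping rule must either stop on all children with $X_{i+1}<x_S(x)$ and recurse on the rest, \emph{or} stop on all children with $X_{i+1}>x_L(x)$ and recurse on the rest --- what the paper calls a \emph{specialized stopping time}. Your explanation of the halved base case (``the corrupted party cannot choose the content of its message'') is off: the fail-stop adversary does see its prospective message before deciding to abort. The base case becomes $C'_1(X)=X(1-X)$ precisely because a specialized stopping time may stop only at $X_1=0$ \emph{or} only at $X_1=1$, never both. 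For the inductive step, the convex-hull/harmonic-mean picture of \figureref{transform-def} does not directly give $C'_{d+1}=T(C'_d)$, because a one-sided stopping rule can only realize \emph{one} of the two secants through $Q$; the paper instead argues (see \figureref{special-intuit}) that the \emph{maximum} of the two one-sided scores $Q'$ and $Q''$ is at least the height of $Q$, and this relies on $C'_d$ lying below $Y=\min(X,1-X)$. This is the content of \theoremref{specialized-stopping}, which you do not isolate.

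The second missing piece is how the $1/12$ actually arises. The paper first converts the specialized stopping time $\tau_1$ into a stopping time $\tau_2$ against the \emph{defense} sequence with $S'(\tau_2)\ge \tfrac13 C'_n(X_0)$; the factor $\tfrac13$ comes from a case analysis on where $D$ lies relative to the thresholds $\tfrac{x+2x^{(j)}}{3}$ (\claimref{D_abort_parent_cond} and the Left/Right marked-edge cases), not from ``abort before versus abort after''. Only then does an averaging over the four adversaries $A^+,A^-,B^+,B^-$ contribute the remaining $\tfrac14$, giving $\tfrac13\cdot\tfrac14=\tfrac1{12}$. Without the specialized-stopping-time theorem and this defense-threshold case analysis, your reduction from ``large restricted susceptibility'' to ``large bias'' is not yet a proof.
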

This theorem is {\em not} a direct consequence of \theoremref{gap-main}. 
The proof relies on an entirely new inductive argument; however, the geometric technique for this recursion is similar to the proof strategy for \theoremref{gap-main}. 

Before proving the above theorem, we provide some insight into our approach. Let $\Pi = \left\langle A,B\right\rangle$ be an $n$-round bias-$X_0$ coin-tossing protocol between Alice and Bob. 
Without loss of generality, assume that Alice sends messages in rounds $1,3,\dotsc$, and Bob sends messages in rounds $2,4,\dotsc$. 
The random variable $(E_1,\dotsc,E_i)$ represents the partial transcript of the protocol at the end of round $i$. 
The random variable $X_i$ represents the expected probability of heads at the end of the protocol execution conditioned on the current partial transcript at the end of round $i$. 
Note that $(X=(X_i)_{i=0}^n,E=(E_i)_{i=1}^n )$ is a Doob's martingale. 

We construct fail-stop adversaries only. 
Suppose Alice has to send the message in round $(i+1)$ (\ie, $i$ is even), but she aborts. 
Then, the {\em defense} $D_i$ is the probability of Bob outputting heads. 
Similarly, suppose Bob is supposed to send the message in round $(i+1)$ (\ie, $i$ is odd), but he aborts. 
Then, we define $D_i$ as the probability of Alice outputting heads. 
Note that $D_i$ is $(E_1,\dotsc,E_i)$ measurable. 
In other words, the defense of round $i$ is a function only of the partial transcript at the end of that round.

The high-level idea of our construction of a good fail-stop attack is the following. 
We shall use a stopping time $\tau$ to identify appropriate partial transcripts of $\Pi$ to abort. 
Suppose we have already generated a partial transcript $(e_1,\dotsc,e_i)$ (refer~\figureref{prelim-mart-tree}), and the next messages that are possible are $e_{i+1}\in\Omega_{i+1}$. 
Suppose $\tau$ stops the martingale at $e_{i+1} = e\p j$. 
Note that $X_{i+1}=x\p j$ is the probability of heads conditioned on the transcript $\Pi$ being $(e_1,\dotsc,e_i,e_{i+1}=e\p j)$. 
Further, the defense of the other party is $D_i$. 

If $i$ is even, then Alice is supposed to send the $(i+1)$-th message. 
So, the stopping time $\tau$ is indicating Alice to abort if the message in the next round she plans to send is $e\p j$. 
Suppose $x\p j\leq D_i$. 
Then, if Alice aborts when her next message is $e\p j$, then she is increasing the probability of heads by $p\p j\abs{x\p j-D_i}$. 

So, the conclusion is the following. 
If $i$ is even and $x\p j\leq D_i$ then the advice of $\tau$ will be {\em helpful} to an adversarial Alice who is interested in increasing the probability of heads, say $A^+$. 
If $x\p j>D_i$, then the advice of $\tau$ will be helpful to an adversarial Alice who is interested in reducing the probability of heads, say $A^-$. 
Similarly, when $i$ is odd, the advice of $\tau$ is useful to either $B^+$ or $B^-$.

{\bfseries Specialized Stopping Time.} 
For this discussion, let us consider \figureref{transform-def}. 
Note that if $X_1$ is very small (that is, $X_1 < x_S(x)$) or $X_1$ is very large (that is, $X_1> x_L(x)$), then the adversary aborts. 
Furthermore, if $X_1$ is close to $X_0$  (that is, $X_1 \in [x_S(x),x_L(x)]$), then the adversary does not abort and recursively constructs the optimum stopping time. 
In particular (refer to \figureref{jump2-induct} and \figureref{induct-c1c2}) if there exists $x\p j$ and $x\p{j'}$ such that $x\p j< x_S(x)$ and $x\p{j'}>x_L(x)$ then the adversary aborts in both these two cases. 
This step is crucial to arguing that the point $Q'$ is higher than the point $Q$ in \figureref{induct-c1c2}, which, in turn, is key to the transformation definition.

However, if a stopping time stops the martingale at high as well as low values of $X_i$ then it is not evident how to to translate the susceptibility corresponding to this stopping time into output-bias achieved by a fail-stop adversary. 
So, we restrict to {\em specialized stopping times} with the following property (we use \figureref{transform-def} for reference in the following definition). 
\begin{boxedalgo}
	\label{def:special-def}
Fix $n$ and $X_0$. 
Pick any $i=n-d$ and fix $E_1=e_1,\dotsc,E_i=e_i$. 
Let $x=(X_i|E_1=e_1,\dotsc,E_i=e_i)$. 
\begin{itemize}
\item Either, the specialized stopping time stops for all $X_{i+1}< x_S(x)$ and recursively stops $X_{i+1}\geq x_S(x)$ later, or 
\item The specialized stopping time stops all $X_{i+1}>x_L(x)$ and recursively stops $X_{i+1}\leq x_L(x)$ later. 
\end{itemize}
\end{boxedalgo} 
Now, it is not evident whether specialized stopping times also have high susceptibility.

\begin{figure}
\begin{center}
\footnotesize\vfill
\begin{tikzpicture}[domain=0:1,scale=9,auto,>=stealth] 
\coordinate (O) at (0,0); 
\draw[->,name path=xaxis] (0,0) to (1.1,0) node[below] {$X$-axis}; 
\draw[->,name path=yaxis] (0,0) to (0,.5) node[above] {$Y$-axis}; 

\draw[name path=curve,densely dotted] plot (\x,{\x*(1-\x)}) node[right,anchor=south west] {$C$}; 

\coordinate (X) at (.3,0); 
\coordinate (Xl1) at (.05,0); 
\coordinate (Xr1) at (.60,0);

\node[anchor=north] (x) at (X) {$x$};  
\node[anchor=north ] (xl1) at (Xl1) {$x_\ell$};   
\node[anchor=north west] (xr1) at (Xr1) {$x_r$};

\draw[loosely dotted,-,name path=elLl1] (Xl1) to (0.05,0.4) node [right] {};

\draw[loosely dotted,-,name path=elr1] (Xr1) to (0.60,0.5) node [right] {};

\draw[densely dotted,->,name path=ell0] (X) to ++(135:0.4) node [right] {};
\draw[densely dotted,->,name path=ell1] (X) to ++(45:0.7) node [right] {}; 

\draw [name intersections={of=curve and ell0, by=P0}] node at (P0) {\textbullet};
\node[anchor=east] at (P0) {$P_1$};

\draw [name intersections={of=curve and elLl1, by=Pl1}] node at (Pl1) {\textbullet};
\node[anchor=east] at (Pl1) {$D_{\ell}$};
\draw [name intersections={of=ell0 and elLl1, by=Pl1l}] node at (Pl1l) {\textbullet};
\node[anchor=east] at (Pl1l) {$U_{\ell}$};
\draw [name intersections={of=curve and elr1, by=Pr1}] node at (Pr1) {\textbullet};
\node[anchor=west] at (Pr1) {$D_{r}$};
\draw [name intersections={of=ell1 and elr1, by=Pr1r}] node at (Pr1r) {\textbullet};
\node[anchor=west] at (Pr1r) {$U_{r}$};

\draw [name intersections={of=curve and ell1, by=P1}] node at (P1) {\textbullet};
\node[anchor=north east] at (P1) {$P_2$};

\draw[solid,name path=p0p1, red!70!black] (P0) to (P1); 
\draw[name path=vert,draw=none] (X) to +(90:0.5); 

\draw [name intersections={of=p0p1 and vert, by=Q}] node at (Q) {\textbullet};
\node [anchor=south east] at (Q) {$Q$}; 
\draw[dotted,thick, name path=xvert] (X) to (Q); 
\draw[dotted,thick,name path=qvert] (Q) to (0.3,0.5);

\draw[solid,name path=l1r1, blue!70!black] (Pl1l) to (Pr1);
\draw[solid,name path=r2l2, green!70!black] (Pr1r) to (Pl1);

\draw [name intersections={of=l1r1 and qvert, by=Qp}] node at (Qp) {$\circ$} node [anchor=south west] at (Qp) {$Q'$};
\draw [name intersections={of=r2l2 and xvert, by=Qpp}] node at (Qpp) {$\circ$} node [anchor=north west] at (Qpp) {$Q''$};

\draw[loosely dotted] (P0) to (P0 |- O); 
\node[anchor=north] at (P0 |- O) {$x_S(x)$};
\draw[loosely dotted] (P1) to (P1 |- O); 
\node[anchor=north] at (P1 |- O) {$x_L(x)$};                                                                                                                                                                                                                  \end{tikzpicture}
\end{center}
\caption{Intuition of the geometric transformation when restricted to specialized stopping times. 
The intersection of $X=x$ with lines $U_\ell D_r, P_1P_2$ and $D_\ell U_r$ are the points $Q^{'}, Q$ and $Q^{''}$ respectively. 
Note that in this figure, node $x$ has only two children $x_\ell$ and $x_r$}
\label{fig:special-intuit} 
\end{figure}

\begin{theorem}
	\label{thm:specialized-stopping}
	Let $(X_0,X_1,\dotsc,X_n)$ be a discrete-time martingale such that $X_i\in[0,1]$, for all $i\in\{1,\dotsc,n\}$, and $X_n\in\{0,1\}$. 
	Then, the following bound holds.
	$$\sup_{\text{specialized stopping time }\tau} \EX{\abs{X_\tau-X_{\tau-1}}} \geq C^{'}_n(X_0),$$
	where $C'_1(X)\defeq X(1-X)$ and $C'_n(X)\defeq {T}^{n-1}(C'_1(X))$.
\end{theorem}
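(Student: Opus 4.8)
The plan is to mirror the inductive, geometric argument behind \theoremref{gap-main}, while tracking the extra structural constraint that the boxed definition of a specialized stopping time imposes. I would induct on $n$ and, as in \theoremref{gap-main}, reduce the whole statement to a per-node geometric inequality. The base case $n=1$ is immediate, since the stopping time $\tau=1$ already gives $\EX{\abs{X_1-X_0}}=2X_0(1-X_0)\ge X_0(1-X_0)=C'_1(X_0)$. The recursion is seeded at $C'_1(X)=X(1-X)$ — half of the base curve $C_1(X)=2X(1-X)$ of \theoremref{gap-main} — because a specialized stopping time is forced to make a one-sided commitment at every node: at each internal vertex with value $x$ it must either stop \emph{every} child below the left threshold $x_S(x)$ (and recurse on the rest), or stop \emph{every} child above the right threshold $x_L(x)$ (and recurse on the rest), where the thresholds are computed with respect to the curve $C'_{d-1}$ for a vertex with $d$ rounds remaining below it (\figureref{transform-def}). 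Thus it can ``collect the jump'' on only one side of each vertex, which is exactly what makes the ultimate application possible: a fail-stop adversary of a fixed type ($A^+$, $A^-$, $B^+$ or $B^-$) only benefits from the stops on one side, so only a specialized stopping time translates into a fail-stop attack.

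For the inductive step, fix $n$ and a martingale whose root has value $x=X_0$ and children $x^{(1)},\dots,x^{(t)}$ with probabilities $p^{(1)},\dots,p^{(t)}$ that average the $x^{(j)}$ to $x$, and take the thresholds with respect to the curve $C'_{n-1}$ furnished by the induction hypothesis. Under option~1, a child with $x^{(j)}<x_S(x)$ is stopped and contributes $p^{(j)}\abs{x^{(j)}-x}$ exactly, whereas a child with $x^{(j)}\ge x_S(x)$ is recursed and, by the induction hypothesis applied to its depth-$(n-1)$ subtree, contributes at least $p^{(j)}C'_{n-1}(x^{(j)})$; option~2 is the mirror statement. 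As in the proof of \theoremref{gap-main}, each child is then mapped to a point $Z^{(j)}$ lying either on the $135^\circ$-ray or the $45^\circ$-ray from $(x,0)$ (for a stopped child) or on/above the curve $C'_{n-1}$ (for a recursed child), and the score of each option equals the height of the convex combination $\sum_j p^{(j)}Z^{(j)}$, a point on the line $X=x$. The heart of the argument is the geometric claim, illustrated by $Q'$, $Q$, $Q''$ in \figureref{special-intuit}, that the \emph{larger} of the two option-scores is at least the height of the point $Q$ where the chord $P_1P_2$ meets $X=x$, which is $\mathrm{H.M.}\bigl(C'_{n-1}(x_S(x)),C'_{n-1}(x_L(x))\bigr)=T(C'_{n-1})(x)=C'_n(x)$. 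Granting this and the reduction to two children placed at the thresholds (the same convex-hull extremality argument as for \theoremref{gap-main}), and using \claimref{preserve-concavity} to keep the iterated curves concave and symmetric, the estimate $C'_n\succcurlyeq L'_n$ follows exactly as in \sectionref{gap-lower-upper-proof}: \claimref{our-induct} with $f_1=1$ gives $T(F_n)\succcurlyeq F_{n+1}$ for $F_n(X)=f_nX(1-X)$, $f_{n+1}=2(\sqrt{f_n^2+1}-1)/f_n$, and \lemmaref{lowerbound-an} with $a_1=1$ gives $f_n\ge\sqrt{2/(n+1)}$, so $C'_n=T^{\,n-1}(C'_1)\succcurlyeq F_n\succcurlyeq L'_n$. \theoremref{unfairness} then follows by converting a specialized stopping time into a fail-stop attack, losing a further constant factor for the choice of adversary type and the conversion to an abort.

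The step I expect to be the main obstacle is the geometric inequality $\max\bigl(\mathrm{Opt}_1(x),\mathrm{Opt}_2(x)\bigr)\ge\mathrm{height}(Q)$. The intuition is a compensation argument. If every child lies inside $[x_S(x),x_L(x)]$, then by concavity of $C'_{n-1}$ every recursed point $(x^{(j)},C'_{n-1}(x^{(j)}))$ is on or above the chord $P_1P_2$, so the reading of either option in which all such children are recursed already lands on or above $Q$. If instead the designer pushes some children past a threshold — which is what could drag the recursed contribution on that side below the chord — then those very children are \emph{stopped} by the other option, and on that side the $135^\circ$/$45^\circ$-ray heights lie above the chord there, so the other option compensates. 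Turning this dichotomy into a proof for arbitrary multi-child martingales with children distributed on both sides at once, rather than the clean two-child picture of \figureref{special-intuit}, is the technical core, alongside the bookkeeping that keeps the recursion self-consistent (thresholds at a depth-$d$ vertex referencing $C'_{d-1}$, so that the iterated transform is precisely $C'_n=T^{\,n-1}(C'_1)$). I would expect to carry this out with the same extremal-convex-hull reasoning used for \theoremref{gap-main} and relegate the case analysis to an appendix.
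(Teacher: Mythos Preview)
Your approach is the paper's: induct on $n$, reduce to the per-node geometric inequality $\max(Q',Q'')\ge\mathrm{height}(Q)$ of \figureref{special-intuit}, and conclude $C'_n=T^{n-1}(C'_1)$. Two corrections are needed.

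First, your base case is wrong as written: $\tau\equiv 1$ is \emph{not} a specialized stopping time, since it stops on both the low and the high child. The paper allows $\tau\colon\Omega\to\{1,\dotsc,n,\infty\}$ and uses the one-sided rule ``stop only at $X_1=0$'' (or only at $X_1=1$), which gives exactly $X_0(1-X_0)=C'_1(X_0)$, not $2X_0(1-X_0)$. You clearly understand the one-sidedness from your next sentence, so this is a slip; but it matters, because the base curve $C'_1$ is what it is \emph{precisely} because the supremum over specialized stopping times at depth~$1$ equals $X_0(1-X_0)$, not merely bounds it.

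Second, and this is the substantive gap: the paper singles out the hypothesis $C'_{n-1}(X)\le\min\{X,1-X\}$ as the ingredient that makes $\max(Q',Q'')\ge Q$ provable, and observes that the recursion is seeded at $C'_1(X)=X(1-X)$ (rather than $2X(1-X)$) exactly because this seed lies below the tent $\min\{X,1-X\}$, a property the transform $T$ then preserves. Your compensation heuristic does not surface this, and without it the two-sided case you flag as the obstacle does not close: when children sit on both sides of $[x_S(x),x_L(x)]$, option~1 recurses the far-right children (whose curve value can lie below the chord $P_1P_2$) and option~2 recurses the far-left children (same issue), so neither option is individually above $Q$ and your ``the other option compensates'' picture is not the mechanism. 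The actual mechanism uses that the curve stays below the $45^\circ$ tent to control how far $D_\ell,D_r$ can fall relative to $U_\ell,U_r$; you should expect this condition to appear as soon as you try to make the inequality rigorous.
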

Let us start with the base case $n=1$. 
Note that a specialized stopping time cannot stop the martingale at both low and high $X_1$. 
So, we consider stopping times $\tau\colon\Omega\to\{1,\dotsc,n,\infty\}$, where $\tau=\infty$ for a full transcript indicates that the adversary did not abort. 
Note that a specialized stopping time can either stop the martingale when $X_1=0$ or $X_1=1$.
In either of these two cases, the susceptibility is $C'_1(X_0) = X_0(1-X_0)$. 

For $n\geq 2$, we show that the recursive definition of the transform $T$ continues to hold even for specialized stopping time (refer \figureref{special-intuit} for intuition). 
Note that the adversary chooses the stopping time that achieves the highest susceptibility. 
So, the maximum height of $Q'$ and $Q''$ in \figureref{special-intuit} is greater than the height of $Q$. 
We emphasize that this proof crucially relies on the fact that $C'_{n-1}(X)$ lies below the curve $Y =\min\{X,1-X\}$. 
So, our result holds because $C'_1(X)$ lies below the curve $Y =\min\{X,1-X\}$.
\sectionref{special-stopping-proof} presents the full proof. 

Finally, we translate the susceptibility of a specialized stopping time into output-bias that a fail-stop adversary can enforce. 
\sectionref{attack-coin-toss} provides the full proof of \theoremref{unfairness}.

\subsubsection{Black-box Separation Results}
Gordon and Katz~\cite{EC:GorKat10} introduced the notion of {\em $1/p$-unfair secure computation} for a fine-grained study of fair computation of functionalities. 
In this terminology, \theoremref{unfairness} states that $\frac{c}{\sqrt{n+1}}X_0(1-X_0)$-unfair computation of a bias-$X_0$ coin is impossible for any positive constant $c< \frac{\sqrt 2}{12}$ and $X_0\in[0,1]$. 

Cleve and Impagliazzo's result~\cite{Cleve93martingales} states that $\frac{c}{\sqrt n}$-unfair secure computation of the bias-$1/2$ coin is impossible for any positive constant $c< \frac1{2560}$. 
This result on the hardness of computation of fair coin-tossing was translated into black-box separations results. 
These results~\cite{TCC:DLMM11,TCC:HaiOmrZar13,TCC:DacMahMal14}, intuitively, indicate that it is unlikely that $\frac{c}{\sqrt n}$-unfair secure computation of the bias-$1/2$ coin exists, for $c< \frac1{2560}$, relying solely on the black-box use of one-way functions. 
We emphasize that there are several restrictions imposed on the protocols that these works~\cite{TCC:DLMM11,TCC:HaiOmrZar13,TCC:DacMahMal14} consider; detailing all of which is beyond the scope of this draft. 
Substituting the result of \cite{Cleve93martingales} by \theoremref{unfairness}, extends the results of \cite{TCC:DLMM11,TCC:HaiOmrZar13,TCC:DacMahMal14} to general bias-$X_0$ coin-tossing protocols. 
\begin{corollary}[Informal: Black-box Separation]
\label{corol:bb-sep} 
For any $X_0\in[0,1]$ and positive constant $c< \frac{\sqrt 2}{12}$, the existence of $\frac{c}{\sqrt{n+1}}X_0(1-X_0)$-unfair computation protocol for a bias-$X_0$ coin is black-box separated from the existence of one-way functions (restricted to the classes of protocols considered by \cite{TCC:DLMM11,TCC:HaiOmrZar13,TCC:DacMahMal14}).
\end{corollary}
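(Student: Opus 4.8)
\textbf{Proof proposal for \corollaryref{bb-sep}.}
The plan is to treat this corollary exactly as the cited works~\cite{TCC:DLMM11,TCC:HaiOmrZar13,TCC:DacMahMal14} treated the original Cleve--Impagliazzo attack: as a ``plug-in'' consequence of a quantitatively stronger, more general information-theoretic fail-stop attack. Concretely, each of~\cite{TCC:DLMM11,TCC:HaiOmrZar13,TCC:DacMahMal14} establishes a black-box separation framework of the following shape. Given a construction of a $2$-party bias-$X_0$ coin-tossing protocol $\Pi$ that uses a one-way function $f$ only in a black-box manner (and that lies in the restricted protocol class the respective paper considers), one instantiates $f$ by a random oracle; by standard emulation/compilation arguments the resulting object is, up to negligible terms, an information-theoretic $n$-round coin-tossing protocol in the commitment-hybrid model. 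The framework then shows that any \emph{fail-stop} (honest-but-aborting) attack available in that idealized model lifts, without inverting the oracle, to an attack of essentially the same advantage on $\Pi$; hence if the idealized model admits a fail-stop attack of bias $\delta(n)$, then no black-box construction in the class can be $o(\delta(n))$-unfair.

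First I would recall this framework precisely, isolating the single quantitative input it consumes: a fail-stop adversary, in the information-theoretic commitment-hybrid, that biases the honest party's output of any $n$-round bias-$X_0$ coin-tossing protocol by at least $\delta(n)$. In~\cite{TCC:DLMM11,TCC:HaiOmrZar13,TCC:DacMahMal14} this input is Cleve--Impagliazzo's $\delta(n) = \Theta(1/\sqrt n)$ bound for $X_0 = 1/2$. Next I would invoke \theoremref{unfairness}, which supplies exactly such an adversary for \emph{every} $X_0\in[0,1]$ with
$$\delta(n) \;=\; \tfrac{1}{12}\,C'_n(X_0)\;\geq\; \tfrac{1}{12}\sqrt{\tfrac{2}{n+1}}\,X_0(1-X_0).$$
Since the adversary of \theoremref{unfairness} is fail-stop by construction (it only aborts, following a specialized stopping time, as described in \sectionref{attack-coin-toss}) and is defined in the information-theoretic commitment-hybrid, it satisfies the interface the three frameworks require. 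Substituting it for the Cleve--Impagliazzo attack therefore yields: for every $X_0\in[0,1]$ and every positive constant $c<\tfrac{\sqrt2}{12}$, the existence of $\tfrac{c}{\sqrt{n+1}}X_0(1-X_0)$-unfair secure computation of a bias-$X_0$ coin (within the restricted protocol classes) is black-box separated from one-way functions. The $1/\poly(n)$-range of $X_0$ is covered because, unlike~\cite{Cleve93martingales}, the bound of \theoremref{unfairness} has only polynomial (not exponential) dependence on $X_0(1-X_0)$, which is precisely the feature that gives the new consequences highlighted in the introduction (e.g.\ the bias-$1/n$ statement).

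The main obstacle is not mathematical but one of bookkeeping: I must verify that the structural hypotheses each of~\cite{TCC:DLMM11,TCC:HaiOmrZar13,TCC:DacMahMal14} imposes on the attack are all met by \theoremref{unfairness}'s adversary --- in particular that it is online (chooses whether to abort as a function of the partial transcript only, which holds since the stopping time and the defenses $D_i$ are $(E_1,\dots,E_i)$-measurable), that it does not rewind or otherwise exploit the oracle, and that the ``specialized'' restriction on the stopping time (so that the susceptibility translates into a one-sided output bias for an attacker $A^+$, $A^-$, $B^+$, or $B^-$) is exactly the kind of restriction these frameworks already accommodate. Once this compatibility check is carried out framework-by-framework, the separation is immediate; I expect this to be routine given that Cleve--Impagliazzo's attack --- which is a special case of \theoremref{unfairness} at $X_0=1/2$ --- already passes these checks in the cited papers. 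I would close by recording the representative bias-$1/n$ instantiation ($c$ arbitrarily close to $\tfrac{\sqrt2}{12}$ gives the stated $<\tfrac{\sqrt2}{12\,n^{3/2}}$ unfairness barrier) as a corollary of the general statement.
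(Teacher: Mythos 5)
Your proposal matches the paper's argument exactly: the paper treats this corollary as a ``substitute Theorem~\ref{thm:unfairness} for Cleve--Impagliazzo as the quantitative input to the black-box separation frameworks of~\cite{TCC:DLMM11,TCC:HaiOmrZar13,TCC:DacMahMal14}'' statement, and explicitly flags (just as you do) that the compatibility check against each framework's structural restrictions is the remaining bookkeeping, which it leaves informal. Your write-up is a faithful, somewhat more detailed rendering of the same plug-in argument.
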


\subsubsection{Detailed Discussion of Our Fail-stop Attack and Proofs}
\label{sec:attack-coin-toss}

Given a stopping time $\tau$ we shall associate the following score with it 
$$S'(\tau) \defeq  \sum^{n+1}_{i=1} \E_{x \in \Omega} \bigg\vert \E[ (X_i-D_{i-1}) \mathbbm{1}_{\tau = i} \vert E_1(x),E_2(x),\ldots, E_{i-1}(x)] \bigg\vert   
$$

Intuitively, this score correctly accounts for the increase and decrease in the probability of heads in every round $i$. \footnote{
	The score is slightly pessimistic, which, we argue, is also necessary. 
	Note that our expression is of the form $\abs{\EX{(X_i-D_{i-1}) \mathbbm{1}_{\tau = i} \vert\dotsc}}$. 
	One might \naive{}ly consider using the expression $\EX{\abs{X_i-D_{i-1}} \mathbbm{1}_{\tau = i} \vert\dotsc}$ instead. 
	However, there is an issue.   
	Suppose the stopping time stops the martingale for all children of $X_i$. 
	This strategy causes the outcome to deviate by $\abs{X_i-D_i}$, and our expression correctly accounts for it (because $\EX{X_{i+1}}=X_i$). 
	However, the alternative expression accounts for it incorrectly. 
	Basically, the alternative expression might not be translatable into a deviation of outcome by a fail-stop attacker. %
}

\begin{claim}
	\label{clm:D_abort_parent_cond}
	We prove the following two statements
	\begin{itemize}
		\item If $0\leq x^{(\ell)}\leq x_0\leq x\leq 1$, (where $x_0$ is the solution of equation $x-x_0=C^{'}_d(x_0)$ in $[0,1]$), $x-D \geq \frac{2}{3} \left(x-x^{(\ell)}\right) \geq 0$, and $ x-x^{(\ell)} \geq C^{'}_d(x^{(l)})$, then, 
		$$ x-D \geq \frac{1}{3}C^{'}_{d+1}(x) \; .$$
		\item If $0\leq x\leq x_1\leq x^{(r)}\leq 1$, (where $x_1$ is the solution of equation $x_1-x=C^{'}_d(x_1)$ in $[0,1]$), $D-x\geq \frac{2}{3}\left(x^{(r)}-x\right) \geq 0$, and $\left(x^{(r)}-x\right) \geq C^{'}_d(x^{(r)})$, then $$D-x\geq \frac{1}{3}C^{'}_{d+1}(x) \; .$$
	\end{itemize}
\end{claim}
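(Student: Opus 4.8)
The plan is to unwind the recursive definition $C'_{d+1}=T(C'_d)$ and observe that the claimed inequality then reduces to an entirely elementary comparison that barely uses the hypotheses. I would do the first bullet in detail; the second follows by the mirror symmetry $X\mapsto 1-X$, under which $C'_d$ — and hence $C'_{d+1}=T(C'_d)$ — is invariant.

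First I would unpack $C'_{d+1}(x)$. By the definition of the transform $T$ in \figureref{transform-def} applied to $C'_d$, we have $C'_{d+1}(x)=\mathrm{H.M.}\!\left(C'_d(x_S(x)),\,C'_d(x_L(x))\right)$, where $x_S(x)$ solves $X+C'_d(X)=x$ and $x_L(x)$ solves $X-C'_d(X)=x$. The hypothesis names $x_0$ as the solution of $x-x_0=C'_d(x_0)$, i.e. $x_0+C'_d(x_0)=x$, so $x_0=x_S(x)$ (and, in the second bullet, $x_1=x_L(x)$). A small preliminary point worth recording is that this root is unique and that the hypothesis $x-x^{(\ell)}\ge C'_d(x^{(\ell)})$ is actually implied by $x^{(\ell)}\le x_0$: since $C'_d$ lies below the curve $Y=\min\{X,1-X\}$ and is concave with $C'_d(0)=C'_d(1)=0$, one gets $(C'_d)'(X)\ge -1$ on $[0,1]$, so $X+C'_d(X)$ is non-decreasing; hence $X+C'_d(X)<x$ for $X<x_0$, which is exactly $x-x^{(\ell)}>C'_d(x^{(\ell)})$. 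Thus that hypothesis is present only for the convenience of the caller.

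The core of the argument is then the short chain
\begin{align*}
x - D \;&\ge\; \tfrac{2}{3}\bigl(x - x^{(\ell)}\bigr) \;\ge\; \tfrac{2}{3}\bigl(x - x_0\bigr) \;=\; \tfrac{2}{3}\,C'_d(x_0)\\
&\ge\; \tfrac{2}{3}\cdot\tfrac{1}{2}\,\mathrm{H.M.}\!\left(C'_d(x_0),\,C'_d(x_1)\right) \;=\; \tfrac{1}{3}\,C'_{d+1}(x),
\end{align*}
where the first inequality is the hypothesis, the second uses $x^{(\ell)}\le x_0$, the equality is the defining relation $C'_d(x_0)=x-x_0$, and the last inequality is the trivial estimate $\mathrm{H.M.}(a,b)=\tfrac{2ab}{a+b}\le 2a$, valid for all $a,b\ge 0$ (if both values vanish then $C'_{d+1}(x)=0$ and the conclusion is immediate from $x-D\ge 0$). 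For the second bullet, the mirror-reflected chain reads $D-x\ge\tfrac{2}{3}(x^{(r)}-x)\ge\tfrac{2}{3}(x_1-x)=\tfrac{2}{3}\,C'_d(x_1)\ge\tfrac{1}{3}\,\mathrm{H.M.}\!\left(C'_d(x_0),\,C'_d(x_1)\right)=\tfrac{1}{3}\,C'_{d+1}(x)$, now using $\mathrm{H.M.}(a,b)\le 2b$.

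I expect no serious obstacle inside this claim itself; the only thing to be careful about is the bookkeeping around the transform's root convention — making sure the $x_0$ (resp.\ $x_1$) named in the hypothesis is the same root that $T$ feeds into the harmonic mean, which is precisely where the ``$C'_d$ lies below $Y=\min\{X,1-X\}$'' remark enters. The genuine work lies outside: in the surrounding induction that first establishes the hypotheses $x-D\ge\tfrac{2}{3}(x-x^{(\ell)})$ and $D-x\ge\tfrac{2}{3}(x^{(r)}-x)$, and then aggregates these per-node estimates (together with the four adversaries $A^{+},A^{-},B^{+},B^{-}$) into the global lower bound on $S'(\tau)$. This claim is just the inductive glue between a node and the subtree hanging off its extremal child.
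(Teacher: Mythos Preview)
Your proof is correct and is essentially the paper's argument. The paper writes $C'_{d+1}(x)=\frac{2(x-x_0)(x_1-x)}{x_1-x_0}$ and then multiplies the two elementary inequalities $x-x^{(\ell)}\ge x-x_0$ and $x_1-x_0\ge x_1-x$; your chain $x-x^{(\ell)}\ge x-x_0=C'_d(x_0)\ge\tfrac12\mathrm{H.M.}(C'_d(x_0),C'_d(x_1))$ is the same computation, with the second inequality repackaged as the trivial bound $\mathrm{H.M.}(a,b)\le 2a$.
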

\begin{proof}
We prove the first statement. Since for each $n$, $C^{'}_n(x)=C^{'}_n(1-x)$, the second part is implied by the first part by replacing $x,D,x^{(\ell)}$ with $1-x,1-D,x^{(r)}=1-x^{(\ell)}$.

In order to show the first part, it is sufficient to show that $\frac{2}{3}(x-x^{(\ell)}) \geq \frac{1}{3} C^{'}_{d+1}(x)$.

We know that 
\begin{align*}
\frac{C^{'}_{d+1}(x)}{3} = \frac{2}{3}\cdot \frac{y_0y_1}{y_0+y_1} = \frac{2}{3}\cdot \frac{(x-x_0)(x_1-x)}{(x_1-x_0)} \; .
\end{align*}
We also know that 
\begin{align*}
&x-x^{(\ell)} \geq x-x_0 \; ,
\end{align*}
and, 
\begin{align*}
& x_1 - x_0 \geq x_1-x \; .
\end{align*}
Combining the above two relations we have 
\begin{align*}
(x-x^{(\ell)})(x_1 - x_0) &\geq (x-x_0)(x_1-x) 
\\(x-x^{(\ell)}) &\geq \frac{(x-x_0)(x_1-x)}{(x_1 - x_0)}
\\\frac{2}{3}(x-x^{(\ell)}) &\geq \frac{1}{3}\cdot \frac{2(x-x_0)(x_1-x)}{(x_1 - x_0)}
\\\frac{2}{3}(x-x^{(\ell)}) &\geq \frac{1}{3}C^{'}_{d+1}(x)
\\ \frac{2}{3}(x-x^{(\ell)}) &\geq \frac{1}{3}C^{'}_{d+1}(x)
\end{align*}
The proof follows from observing that $x-D\geq \frac{2}{3}(x-x^{(\ell)})$ from our assumption. 
\end{proof}

We will use specialized stopping time defined in \sectionref{unfair} to construct a stopping time for our fail-stop adversary. More formally, given a stopping time $\tau_1$ from \theoremref{specialized-stopping} such that $\sup_{\tau_1} \EX{\abs{X_{\tau_1}-X_{{\tau_1}-1}}} \geq C^{'}_n(X_0)$, there exists a stopping time $\tau_2$ such that $ S'(\tau_2) \geq \frac{1}{3}C^{'}_n(X_0)$. 

\begin{proof}
	The proof will proceed by induction on $n$. 
	\begin{enumerate}
		\item Base Case: For $n=1$, see \figureref{def-induct-bc}. 
		\begin{figure}[H]
    \centering
\begin{tikzpicture}\footnotesize 
 \node [ellipse split,draw,inner sep=1pt, minimum size=6pt] (v) at (1, 2) {{$ x$} \nodepart{lower} {$ D$}};
\node [ellipse,draw,inner sep=0.5pt, minimum size=6pt] (z) at (0, 1) [] {$0 $};
\node [ellipse,draw, inner sep=0.5pt, minimum size=6pt]  (x) at (2, 1) [] {$1$};
	\path
(v) edge (x)
(v) edge (z)
	;
\end{tikzpicture}
    \caption{Base Case for \theoremref{unfairness}}
    \label{fig:def-induct-bc}
\end{figure}
		Recall that $C^{'}_1(x)=x(1-x)$. We have two cases 
		\begin{itemize}
			\item If $D \geq x$, we define $\tau_2$ as the stopping time that stops only at $0$. Then, $D(1-x) \geq x(1-x) \geq \frac{1}{3}x(1-x) \geq \frac{1}{3}C^{'}_1(x)$.
			\item If $D < x$, we define $\tau_2$ as the stopping time that stops only at $1$. Then $(1-D)x \geq x(1-x) \geq \frac{1}{3}x(1-x) \geq \frac{1}{3}C^{'}_1(x)$.
		\end{itemize}
	
		\item Assume the claim is true for $n=d$, see \figureref{def-induct-indhyp}. 
		\begin{figure}[H]
    \centering
\begin{tikzpicture}[sibling distance=3.5cm, level 2/.style={sibling distance =2.5cm}]\footnotesize
 \node [ellipse split,draw,inner sep=0.5pt, minimum size=6pt]{{$x$} \nodepart{lower}{${D}$}}
child{ node[ellipse split,draw,inner sep=0.5pt, minimum size=6pt] {{$x^{(1)}$} \nodepart{lower} {${D^{(1)}}$}}
		{ node[itria] {${\frac{1}{3}C'_{d}(x^{(1)})}$} } 
	}
child{ node[ellipse split,draw,inner sep=0.5pt, minimum size=6pt] {{$x^{(2)}$} \nodepart{lower} {${D^{(2)}}$}}
	{ node[itria] {${\frac{1}{3}C'_{d}(x^{(2)})}$} } 
}
child{ node[circle, draw = none] {$\ldots$} edge from parent[draw=none]
	{ node[draw = none] {} } 
}
child{ node[ellipse split,draw,inner sep=0.5pt, minimum size=6pt] {{$x^{(k)}$} \nodepart{lower} {${D^{(k)}}$}} 
	{ node[itria] {${\frac{1}{3}C'_{d}(x^{(k)})}$} } 
}
;

\end{tikzpicture}
    \caption{Inductive Hypothesis of \theoremref{unfairness}}
    \label{fig:def-induct-indhyp}
\end{figure}
		For each edge $\left(x,x^{(j)}\right)$, if $\vert x-x^{(j)} \vert \geq C^{'}_d\left(x^{(j)}\right)$, then mark the edge. Let $$\mathtt{Marked}\defeq\{j: |x-x^{(j)}|\geq C^{'}_d(x^{(j)} \} \; .$$
		\\Without loss of generality, we assume that the nodes are \emph{in-order}. 
		
		Denote $\mathtt{Left}:=\{j:\ x^{(j)} \leq x \} \bigcap \mathtt{Marked}$ and $\mathtt{Right}:=\{j:\ x^{(j)} \geq x \}\bigcap \mathtt{Marked}$. 
		We analyze three possible cases 
		\begin{itemize}
			\item \textit{Case 1.} No edges are marked. This means that for all $j$, $\vert x-x^{(j)} \vert \leq C^{'}_d\left(x^{(j)}\right)$. 
			The adversarial strategy is to recurse on the underlying subtrees. 
			The overall deviation in this case is given by 
			$$\sum_j p^{(j)} C^{'}_d \left(x^{(j)}\right) \geq C^{'}_{d+1}\left(\sum_j p^{(j)} x^{(j)} \right) = C^{'}_{d+1}(x) \geq \frac{C^{'}_{d+1}(x)}{3} \; .$$
			
			\item  \textit{Case 2.} There exists a marked edge $j$ such that $D \leq \frac{x+2x^{(j)}}{3}$, and $x \geq x^{(j)}$, or $D \geq \frac{x+2x^{(j)}}{3}$, and $x^{(j)} \geq x$. The adversarial strategy is to abort at the parent.\\
			Suppose $D \leq \frac{x+2x^{(j)}}{3}$ and $x \geq x^{(j)}$, then $x-D \geq \frac{2}{3}(x-x^{(j)})$, the rest follows from \claimref{D_abort_parent_cond}. If $D \geq \frac{x+2x^{(j)}}{3}$ and $x^{(j)} \geq x$, then $D-x \geq \frac{2}{3}(x^{(j)}-x)$ and the rest again follows from \claimref{D_abort_parent_cond}. 
			\item  If \textit{Case 1} and \textit{Case 2} are not satisfied, then $\mathtt{Marked}$ is not empty but for any marked edge $j$ that $x\geq x^{(j)}$, we have $D>\frac{x+2x^{(j)}}{3}$ and for any marked edge $j$ that $x\leq x^{(j)}$, we have $D<\frac{x+2x^{(j)}}{3}$. Note that since $\mathtt{Marked}$ is not empty, at least one of the two sets $\mathtt{Left}$ and $\mathtt{Right}$ is not empty. Two cases can happen: 
			\begin{itemize}
				\item \textit{Case 3.1} Both $\mathtt{Left}$ and $\mathtt{Right}$ are non-empty. \\
				Then there exist $\ell^*$ and $r^*$ such that $\frac{x+2x^{(\ell^*)}}{3} < D < \frac{x+2x^{(r^*)}}{3}$ where $\ell^*:=\max_\ell \mathtt{Left}$ and $r^*:=\min_r \mathtt{Right}$. There are two sub-cases in this scenario : 
				\begin{itemize}
					\item \textit{Case 3.1.1} $\frac{x+2x^{(\ell^*)}}{3} < D \leq x$. \\
					The adversarial strategy is to follow the strategy of $\tau_1$. If the strategy of $\tau_1$ is to abort on left marked edges and recurse on the rest, then we have the following analysis: 

					For any $\ell \in \mathtt{Left}$, $\ell \leq \ell^*$, and we have 
					$$D-x^{(\ell)} > \frac{x+2x^{(\ell^*)}}{3} - x^{(\ell)} = \frac{x-x^{(\ell)} + 2(x^{(\ell^*)}-x^{(\ell)})}{3} \geq \frac{x-x^{(\ell)}}{3} > \frac{C^{'}_d(x^{(\ell)})}{3}  \; .$$

					The total deviation from aborting on the left marked edges is given by 
					\begin{align*}
					&\sum_{\ell \in \mathtt{Left}} p^{(\ell)} (D-x^{(\ell)}) \geq \sum_{\ell \in \mathtt{Left}} p^{(\ell)} \frac{x-x^{(\ell)}}{3} \geq  \sum_{\ell \in \mathtt{Left}} p^{(\ell)} \frac{C^{'}_d(x^{(\ell)})}{3} \; .
					\end{align*}
					The total deviation from recursing on the right edges and unmarked edges is given by 
				$$ \sum_{k \not\in \mathtt{Marked}} p^{(k)} \frac{C^{'}_d(x^{(k)})}{3} + \sum_{r \in \mathtt{Right}}p^{(r)}\frac{C^{'}_{d}(x^{(r)})}{3} \; , $$
				The overall deviation is 
				\begin{align*}
				&\sum_{\ell \in \mathtt{Left}} p^{(\ell)} (D-x^{(\ell)})+\sum_{r \in \mathtt{Right}}p^{(r)}\frac{C^{'}_{d}(x^{(r)})}{3} + \sum_{k \not\in \mathtt{Marked}} p^{(k)} \frac{C^{'}_d(x^{(k)})}{3} \\
				&\geq  \sum_{\ell \in \mathtt{Left}} p^{(\ell)} \frac{x-x^{(\ell)}}{3} +\sum_{r \in \mathtt{Right}}p^{(r)}\frac{C^{'}_{d}(x^{(r)})}{3} + \sum_{k \not\in \mathtt{Marked}} p^{(k)} \frac{C^{'}_d(x^{(k)})}{3} \\
				&\geq \frac{C^{'}_{d+1}(x)}{3}
				\end{align*}
			    In above, the last inequality holds due to the fact that $\tau_1$ is a specialized stopping time and martingale aborts on left marked edges and recurses on the rest which is exactly what $\tau_1$ suggests.
				\\If the strategy of $\tau_1$ is to abort on the right marked edges and recurse on the rest, then we have the following analysis.

				For any $r \in \mathtt{Right}$ such that $r \geq r^*$, we have 
				$$x^{(r)}-D>x^{(r)}-x > C^{'}_d(x^{(r)}) >\frac{C^{'}_d(x^{(r)})}{3} \; .$$
				The total deviation from aborting on the right marked edges is given by 
				\begin{align*}
				&\sum_{r \in \mathtt{Right}} p^{(r)} (x^{(r)}-D) \geq \sum_{r \in \mathtt{Right}} p^{(r)} \frac{x^{(r)}-x }{3} \geq \sum_{r \in \mathtt{Right}} p^{(r)} \frac{C^{'}_d(x^{(r)})}{3}
				\end{align*}
					The total deviation from recursing on the left edges and unmarked edges is given by 
				$$ \sum_{k \not\in \mathtt{Marked}} p^{(k)} \frac{C^{'}_d(x^{(k)})}{3} + \sum_{l \in \mathtt{Left}}p^{(l)}\frac{C^{'}_{d}(x^{(l)})}{3} \; .$$
				The overall deviation is 
				\begin{align*}
				&\sum_{r \in \mathtt{Right}} p^{(r)} (x^{(r)} -D )+\sum_{\ell \in \mathtt{Left}}p^{(l)}\frac{C^{'}_{d}(x^{(l)})}{3} + \sum_{k \not\in \mathtt{Marked}} p^{(k)} \frac{C^{'}_d(x^{(k)})}{3} \\
				&\geq  \sum_{r \in \mathtt{Right}} p^{(r)} \frac{x^{(r)} - x}{3} +\sum_{\ell \in \mathtt{Left}}p^{(\ell)}\frac{C^{'}_{d}(x^{(\ell)})}{3} + \sum_{k \not\in \mathtt{Marked}} p^{(k)} \frac{C^{'}_d(x^{(k)})}{3} \\
				&\geq \frac{C^{'}_{d+1}(x)}{3}
				\end{align*}
			    In above, the last inequality holds due to the fact that $\tau_1$ is a specialized stopping time and martingale aborts on right marked edges and recurses on the rest which is exactly what $\tau_1$ suggests.
			    
				\item \textit{Case 3.1.2.}  $x < D < \frac{x+2x^{(r^*)}}{3}$. 
					\\The adversarial strategy is the same as above : Follow the strategy of $\tau_1$. The analysis is almost identical to the one above due to symmetry. 				
				\end{itemize}
				\item \textit{Case 3.2.} Either $\mathtt{Left}$ or $\mathtt{Right}$ is empty. \\The adversarial strategy is to abort at all marked edges and recurse on all unmarked edges. \\Suppose $\mathtt{Right}$ is empty, then $\frac{x+2x^{(\ell^*)}}{3} < D < x$, where $\ell^*:=\max_\ell \mathtt{Left}$. The analysis is the same as in \textit{Case 3.1.1}. If $\mathtt{Left}$ is empty then the analysis is the same as \textit{Case 3.1.2}. \qedhere
			\end{itemize}
			
		\end{itemize}
	
	\end{enumerate}
\end{proof}

The above proof shows that $S'(\tau_2) \geq \frac{1}{3} C'_n(X_0)$. In order to estimate $C'_n(X_0)$, we define $L'_n(X)=\sqrt{\frac{2}{n+1}} X(1-X)$ and claim that $C'_n(X) \succcurlyeq L_n'(X)$. 

To prove our claim, we define the curve $G'_n(X)\defeq a'_nX(1-X)$ such that $a'_1=1$ and 
$a'_{n+1}= 2\left(\frac{\sqrt{a'^2_n+1}-1}{a'_n}\right)$ for $n\geq 1$ and we prove by induction that $C'_n \succcurlyeq G'_n$ for all $n$ as below: (analogous to the one shown for \lemmaref{gap-lower-upper}) 

{\bfseries Base Case of $n=1$.} 
Since, $C'_1(X)=G'_1(X) = X(1-X)$, it is obvious that $C'_1\succcurlyeq G'_1$. 

{\bfseries Inductive Argument.} 
Suppose we have $C'_n\succcurlyeq G'_n$. 
Then, we have $T(C'_n)\succcurlyeq T(G'_n)$ (by~\claimref{above-to-above}).
Note that $C'_{n+1}=T(C'_n)$. 
We know that $T(G'_n)\succcurlyeq G'_{n+1}$ (refer to \claimref{our-induct} ). 
Consequently, it follows that $C'_{n+1}\succcurlyeq G'_{n+1}$. 

So far, we have proved that $C'_n \succcurlyeq G'_n$ for all $n$. Recall that $G'_n(X)\defeq a'_nX(1-X)$ such that $a'_1=1$ and 
$a'_{n+1}= 2\left(\frac{\sqrt{a'^2_n+1}-1}{a'_n}\right)$. Now, by using \lemmaref{lowerbound-an}, we conclude that $a'_n\geq \sqrt{\frac{2}{n+1}}$. Thus, $C'_{n} \succcurlyeq L'_{n}$.
Now we can say that $S'(\tau_2) \geq \frac{1}{3} C'_n(X_0) \geq \frac{1}{3} L'_n(X_0)$. Further, any contribution to this score is attributable to one of the following four interactions: (1) \tuple{A^+,B} (\ie, adversarial Alice increasing the probability of heads by aborting), 
(2) \tuple{A^-,B}, 
(3) \tuple{A,B^+}, and 
(4) \tuple{A,B^-}. By an averaging argument, this implies that one of the parties can deviate the outcome of the other party by at least $\frac{1}{12}L'_n(X_0)$. This concludes our proof of \theoremref{unfairness}.

Similar to the previous section, \theoremref{unfairness} extends to $\omega$-faceted dice-rolling protocols by considering any subset $S\subseteq\{0,1,\dotsc,\omega-1\}$ of outcomes, and considering the final outcome being in $S$ as the interesting event for the martingale. 
 
\subsubsection{Discussion of Specialized Stopping Time - Proof of \theoremref{specialized-stopping}}
\label{sec:special-stopping-proof}

Before proving the theorem, we define the sequence of functions $\{g_n\}_{n=1}^\infty$ recursively. 
Let $A_n(X_0)$ 
be the set of 
all martingales $X=\left(X_0,X_1,\dotsc,X_{n}\right)$ 
such that for each $0\leq i\leq n-1$, $X_i\in[0,1]$ and $X_n\in\{0,1\}$. We define $$g_1(X_0):=\inf_{X\in A_1(X_0)}\sup_{\tau\in \mathcal{F}_1(X_0,X_1)}\E\vert X_{\tau}-X_{\tau-1} \vert$$ where $\mathcal{F}_1(X_0,X_1):=\{\tau_1, \tau_2\}$ and $\tau_1$ is an stopping time defined on martingale $(X_0,X_1)$ such that $\tau_1(X_0,X_1)=1$ if $X_1=0$ and $\tau_1(X_0,X_1)=\infty$ if $X_1=1$; and $\tau_2(X_0,X_1)=1$ if $X_1=1$ and $\tau_2(X_0,X_1)=\infty$ if $X_1=0$. Note that $\mathcal{F}_1(X_0,X_1)$ represents the set of all specialized stopping times in martingale $(X_0,X_1)$.
$A_1(X_0)$ consists of only one martingale and $\E\vert X_{\tau_1}-X_{\tau_1-1}\vert=\E\vert X_{\tau_2}-X_{\tau_2-1}\vert=X_0(1-X_0)$ which implies that $g_1(X_0)=X_0(1-X_0)$. 
We define 
$$g_n(X_0):=\inf_{X\in A_n(X_0)}\sup_{\tau\in \mathcal{F}_n(X_0,X_1,\dots,X_n)}\E\vert X_{\tau}-X_{\tau-1} \vert$$
where $\mathcal{F}_n(X_0,X_1,\dots,X_n)$ denotes the set of all specialized stopping times like $\tau$ defined on martingale $X=(X_0,X_1,\dots,X_n)$ which could be one of the following two cases:

Suppose $X_0=x$ and $X_1=x^{*}$. Then, let $x_0\in[0,1]$ be the solution of equation $x-x_0= g_{n-1}(x_0)$ and $x_1\in[0,1]$ be the solution of equation $x_1-x= g_{n-1}(x_1)$. 

\begin{enumerate}
	\item For all $x^{*}\leq x_0$, $\tau(x,x^{*},X_2,\dots,X_n)=1$ and for all $x^{*}> x_0$, $\tau(x,x^{*},X_2,\dots,X_n)=1+\tau^{'}(x^{*},X_2,\dots,X_n)$ for some $\tau^{'}\in \mathcal{F}_{n-1}(x^{*},X_2,\dots,X_n)$. This corresponds to
	the case that the specialized stopping time stops for all $x^{*}\leq x_0$ and recursively stops for all $x^{*}\geq x_0$ later.
	\item For all $x^{*}\geq x_1$, $\tau(x,x^{*},X_2,\dots,X_n)=1$ and for all $x^{*}< x_1$, $\tau(x,x^{*},X_2,\dots,X_n)=1+\tau^{'}(x^{*},X_2,\dots,X_n)$ for some $\tau^{'}\in \mathcal{F}_{n-1}(x^{*},X_2,\dots,X_n)$. This corresponds to
	the case that the specialized stopping time stops for all $x^{*}\geq x_1$ and recursively stops for all $x^{*}\leq x_1$ later.
\end{enumerate}      

To prove \theoremref{specialized-stopping}, it suffices to prove the following claim. 
\begin{claim}
	\label{clm:specialized-main-claim}
	Let $C'_1(x)=x(1-x)$ and the 
	curve $C'_n$ is achieved by applying 
	transformation $T$ on the curve $C'_{n-1}$ i.e. $C'_n=T\left(C'_{n-1}\right)$. Then, we have $g_n(x)=C'_n(x)$ for any $x\in[0,1]$.
\end{claim}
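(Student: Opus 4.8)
The plan is to prove Claim~\ref{clm:specialized-main-claim} by induction on $n$, simultaneously establishing $g_n \preccurlyeq C'_n$ (an optimal martingale exists) and $g_n \succcurlyeq C'_n$ (every martingale admits a good specialized stopping time). Since $g_1(x)=x(1-x)=C'_1(x)$ was checked above and $C'_n=T(C'_{n-1})$, it suffices to show $g_n=T(g_{n-1})$ assuming $g_{n-1}=C'_{n-1}$. Throughout I would carry along the auxiliary invariant $C'_m \preccurlyeq \min\{X,1-X\}$ for every $m$: the base case is $C'_1(X)=X(1-X)\le\min\{X,1-X\}$, and $T$ preserves it because $g(x)=\mathrm{H.M.}(y^{(1)},y^{(2)})\le y^{(1)}=x-x_S(x)\le x$ and likewise $\le y^{(2)}=x_L(x)-x\le 1-x$. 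This invariant guarantees $x_S(x),x_L(x)\in[0,1]$ with $x_S(x)\le x\le x_L(x)$, so the rays $\ell_0,\ell_1$ of \figureref{transform-def} and \figureref{special-intuit} meet $C'_{n-1}$ as drawn, and (by taking $x_S(x)$ the smallest root) $x^{(j)}\le x_S(x)$ forces $C'_{n-1}(x^{(j)})\le x-x^{(j)}$, while $x^{(j)}\ge x_L(x)$ forces $C'_{n-1}(x^{(j)})\le x^{(j)}-x$; it is exactly this placement that the geometric step below relies on.

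For the upper bound, fix $x$, write $x_0:=x_S(x)$, $x_1:=x_L(x)$ (for $f=C'_{n-1}=g_{n-1}$), so $C'_{n-1}(x_0)=x-x_0=:h_1$ and $C'_{n-1}(x_1)=x_1-x=:h_2$. Build the depth-$n$ martingale whose root $x$ has exactly two children $x_0,x_1$ with probabilities $p=\frac{x_1-x}{x_1-x_0}$ and $1-p=\frac{x-x_0}{x_1-x_0}$ (mean $x$), and whose two subtrees are the depth-$(n-1)$ optimal martingales at $x_0$ and $x_1$ from the inductive hypothesis. A specialized stopping time of this martingale never stops at the root (the two children sit exactly on the thresholds $x_S(x),x_L(x)$), so its score is $p\cdot(\text{subtree score at }x_0)+(1-p)\cdot(\text{subtree score at }x_1)$; by the inductive hypothesis each subtree score is at most the value of $C'_{n-1}$ there, with equality attainable. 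Hence $\sup_\tau \mathrm{score}=p\,h_1+(1-p)\,h_2=\frac{2h_1h_2}{h_1+h_2}=\mathrm{H.M.}(h_1,h_2)=T(C'_{n-1})(x)=C'_n(x)$, giving $g_n(x)\le C'_n(x)$.

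For the lower bound, take an arbitrary depth-$n$ martingale with root $x$ and children $x^{(1)}\le\dots\le x^{(t)}$, weights $p^{(j)}$, $\sum p^{(j)}x^{(j)}=x$. Call edge $j$ \emph{left-marked} if $x^{(j)}\le x_S(x)$ and \emph{right-marked} if $x^{(j)}\ge x_L(x)$. The adversary considers two specialized strategies: Policy~L stops on every left-marked edge and on every other edge recurses with a specialized sub-stopping time that, by the inductive hypothesis (since $\sup_\tau\ge g_{n-1}=C'_{n-1}$ on any submartingale), scores at least $C'_{n-1}(x^{(j)})$; Policy~R is the mirror image. Representing each child's contribution as a planar point, the score of each policy is the $Y$-value at $X=x$ of the probability-weighted combination of these points, hence at least the intercept at $X=x$ of their lower convex hull; under Policy~L the left-marked points lie on the ray $\ell_0$ ($Y=x-X$) and all others lie on or above $C'_{n-1}$, and symmetrically for Policy~R. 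The crux is the planar inequality $\max\{\mathrm{ht}(Q'),\mathrm{ht}(Q'')\}\ge \mathrm{ht}(Q)=\mathrm{H.M.}(h_1,h_2)=C'_n(x)$ of \figureref{special-intuit}. I would prove it by a short case split — no marked edges, only one side marked, or both sides marked — in each case selecting the appropriate policy and reducing to a two-child configuration (when nothing is marked, all $x^{(j)}\in[x_S(x),x_L(x)]$ and the lower hull of points on the concave curve $C'_{n-1}$ has intercept at $X=x$ at least the chord value $\mathrm{H.M.}(h_1,h_2)$, which is precisely the sub-optimal-designer computation from the proof of \theoremref{gap-main}). In the two-child case, writing $a=x-x^{(\ell)}\ge h_1$, $b=x^{(r)}-x\ge h_2$, $f_\ell=C'_{n-1}(x^{(\ell)})$, $f_r=C'_{n-1}(x^{(r)})$, one computes $\mathrm{ht}(Q')=\frac{a(b+f_r)}{a+b}$ and $\mathrm{ht}(Q'')=\frac{b(a+f_\ell)}{a+b}$, and the needed bound follows from concavity of $C'_{n-1}$ (which lower-bounds $f_\ell,f_r$ via the chords through the origin and through $(1,0)$) together with the invariant $C'_{n-1}\preccurlyeq\min\{X,1-X\}$ (which yields $h_1\le x/2$, $h_2\le(1-x)/2$). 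Combining the selected policy's recursion bounds with this inequality gives $\sup_\tau\mathrm{score}\ge C'_n(x)$, so $g_n(x)\ge C'_n(x)$, completing the induction.

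The main obstacle is precisely this last planar inequality and the reduction feeding it: handling an arbitrary number of children cleanly — either by an extremal/merging reduction to two children or by a more intricate direct case analysis analogous to Cases~1–3 of the fail-stop proof — and verifying the resulting two-child bound. It is here that the invariant $C'_{n-1}\preccurlyeq\min\{X,1-X\}$ is indispensable; without it the two-child verification genuinely fails, which is exactly why this specialized-stopping-time refinement does not go through for the unrestricted curves $C_n$ of \theoremref{gap-main}.
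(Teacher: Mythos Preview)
Your proposal follows essentially the same inductive route as the paper's proof sketch: both reduce the claim to $g_n=T(g_{n-1})$, obtain $g_n\le C'_n$ via the two-child martingale with children at $x_S(x),x_L(x)$, and reduce $g_n\ge C'_n$ to the planar inequality $\max\{\mathrm{ht}(Q'),\mathrm{ht}(Q'')\}\ge\mathrm{ht}(Q)$ of \figureref{special-intuit}, with the invariant $C'_m\preccurlyeq\min\{X,1-X\}$ playing the same crucial role. You are somewhat more explicit than the paper about propagating that invariant under $T$ and about the left/right-marked case split for multiple children; the paper only illustrates the two-child, both-marked configuration and leaves the rest implicit.
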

We first describe the intuitive idea behind the proof and then give a technical proof afterwards. \newline 
{\bfseries Proof Sketch.}
We use induction on $n$ to prove the claim. For $n=1$ and for each $x\in [0,1]$, we have $g_1(x)=x(1-x)=C'_1(x)$. 
Now, we assume that for each $x\in [0,1]$, $g_n(x)=C'_n(x)$.
Since $C'_{n+1}(X)=T\left(C'_n(X)\right)$, it suffices to prove that $g_{n+1}(X)=T\left(g_n(X)\right)$ because it implies that $g_{n+1}(X)= T\left(g_n(X)\right)=T\left(C'_n(X)\right)=C'_{n+1}(X)$. 
Let us consider martingale $(X_0,X_1,\dots,X_n, X_{n+1})$ where $X_0=x$ and $X_1\in \{x^{(1)},\dots,x^{(t)}\}$. According to the induction hypothesis, the adversary is guaranteed to get $g_n(x^{(j)})=C'_n(x^{(j)})$ as the score in any martingale of depth $n$ if she chooses an appropriate stopping time in $\mathcal{F}_n(x^{(j)},X_2,\dots,X_{n+1})$. 

We define left marked edges as the set $\{j: x^{(j)} \leq x, \text{ and, } |x-x^{(j)}|\geq C^{'}_{n}(x^{(j)}) \}$ and right marked edges as the set $\{j: x^{(j)} \geq x, \text{ and, } |x-x^{(j)}|\geq C^{'}_{n}(x^{(j)}) \}$.
Now, to prove that $g_{n+1}(X) = T\left(g_n(X)\right)$ it suffices to show that in any arbitrary martingale in $A_{n+1}(x)$, the maximum score that could be achieved by either stopping the martingale at only left marked edges at time $1$ or stopping the martingale at only right marked edges at time $1$, is always guaranteed to be greater than or equal to $T(g_n(x))=T(C'_n(x))=C'_{n+1}(x)$.   
In \figureref{special-intuit}, we are considering a martingale $(x,X_1,\dots,X_n,X_{n+1})$ such that $X_1$ can take only two values either $x_{l}$ or $x_{r}$ with probabilities $p_l$ and $p_r$ respectively. Note that $x_{l} \leq x_S(x)$ and $x_L(x) \leq x_{r}$. Any specialized stopping time $\tau$ either stops at $x_l$ and continues at $x_r$ or stops at $x_r$ and continues at $x_l$. Here, the curve $C'_{n}$ represents the points $(x,g_n(x))$ for $0\leq x\leq 1$.
According to the induction hypothesis, in martingale $(x_l, X_2, \dots, X_n)$, the score $g_n(x_l)$ is guaranteed to be achieved (so the contribution of score when martingale doesn't stop at this edge is $p_lg_n(x_l)$) but if martingale stops at time $1$ at edge $(x,x_l)$, then the contribution of score for this edge is $p_l|x-x_l|$. A similar thing can be said about $x_r$. 
We can observe that while the point $Q^{''}$ (which is the intersection of line $D_lU_r$ with line $X=x$ and its height corresponds to the score achieved when martingale stops at $x_r$ and continues at $x_l$) lies below the point $Q=(x,g_{n+1}(x))$ (which is the intersection of line $P_1P_2$ with line $X=x$ and its height corresponds to $T(g_n(x))$), the point $Q^{'}$ (which is the intersection of line $U_lD_r$ with line $X=x$ and its height corresponds to the score achieved by stopping martingale at $x_l$ and allowing it to continue at $x_r$) is above the point $Q$. Observe that the maximum of the two scores achieved in these two strategies is always greater than or equal to $T(g_n(x))$. Moreover, if $x_S$ is chosen as $x_l$
and $x_L$ is chosen as $x_r$, then $Q=Q^{'}=Q^{''}$ and the value $T(C'_{n})(x)$ can be achieved for some martingale. This means that $g_{n+1}(x)=T(g_n(x))=T(C'_n(x))=C'_{n+1}(x)$ for any $x\in [0,1]$.

\subsection{Influencing Discrete Control Processes} 
\label{sec:control}

Lichtenstein~\etal~\cite{LLS1989} considered the problem of an adversary influencing the outcome of a stochastic process through mild interventions. 
For example, an adversary attempts to bias the outcome of a distributed $n$-processor coin-tossing protocol, where, in the $i$-th round, the processor $i$ broadcasts her message. 
This model is also used to characterize randomness sources that are adversarially influenced, for example,~\cite{FOCS:SriZuc94,SODA:KenRabSin96,zuckerman1996simulating,nisan1996extracting,nisan1999extracting,FOCS:TreVad00,FOCS:DodSpe02,FOCS:DOPS04,TCC:DodPiePrz06,TCC:BosDod07}.

Consider the sample space $\Omega=\Omega_1\times\Omega_2\times\dotsi\times\Omega_n$ and a joint distribution $(E_1,\dotsc,E_n)$ over the sample space. 
We have a function $f\colon\Omega\to\zo$ such that $\EX{f(E_1,\dotsc,E_n)}=X_0$. 
This function represents the protocol that determines the final outcome from the public transcript. 
The filtration, at time-step $i$, reveals the value of the random variable $E_i$ to the adversary. 
We consider the corresponding Doob's martingale $(X_0,X_1,\dotsc,X_n)$. 
Intuitively, $X_i$ represents the probability of $f(E_1,\dotsc,E_n)=1$ conditioned on the revealed values $(E_1=e_1,\dotsc,E_i=e_i)$. 
The adversary is allowed to intervene only once. 
She can choose to intervene at time-step $i$, reject the current sample $E_i=e_i$, and substitute it with a fresh sample from $E_i$. 
This intervention is identical to {\em restarting} the $i$-th processor if the adversary does not like her message. 
Note that this intervention changes the final outcome by 
  $$(X_{i-1}|E_1=e_1,\dotsc,E_{i-1}=e_{i-1}) - (X_i|E_1=e_1,\dotsc,E_i=e_i)$$

We shall use a stopping time $\tau$ to represent the time-step where an adversary decides to intervene. 
However, for some $(E_1=e_1,\dotsc,E_n=e_n)$ the adversary may not choose to intervene. 
Consequently, we consider stopping times $\tau\colon\Omega\to\{1,\dotsc,n,\infty\}$, where the stopping time being $\infty$ corresponds to the event that the adversary did not choose to intervene. 
In the Doob martingale discussed above, as a direct consequence of \theoremref{gap-main}, there exists a stopping time $\tau^*$ with susceptibility $\geq C_n(X_0)$. 
Note that susceptibility measures the expected (unsigned) magnitude of the deviation, if an adversary intervenes at $\tau^*$.  
Some of these contributions to susceptibility shall increase the probability of the final outcome being 1, and the remaining shall decrease the probability of the final outcome being 1. 
By an averaging argument, there exists a stopping time $\tau\colon\Omega\to\{1,\dotsc,n,\infty\}$ that biases the outcome of $f$ by at least $\geq\frac12 C_n(X_0)$, whence the following corollary. 
\begin{corollary}[Influencing Discrete Control Processes]
\label{corol:control}
Let $\Omega_1,\dotsc,\Omega_n$ be arbitrary sets, and $(E_1,\dotsc,E_n)$ be a joint distribution over the set $\Omega\defeq\Omega_1\times\dotsi\times\Omega_n$. 
Let $f\colon\Omega\to\zo$ be a function such that $\probX{f(E_1,\dotsc,E_n)=1}=X_0$. 
Then, there exists an adversarial strategy of intervening once to bias the probability of the outcome away from $X_0$ by $\geq \frac12C_n(X_0) \geq \frac12L_n(X_0) =\frac{1}{\sqrt{2n-1}}X_0(1-X_0)$. 
\end{corollary}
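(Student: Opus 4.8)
The plan is to work with the Doob martingale of $f$ and then convert the gap promised by \theoremref{gap-main} into a single intervention. For $i\in\{0,1,\dots,n\}$ put $X_i\defeq\EX{f(E_1,\dots,E_n)\mid E_1,\dots,E_i}$. This is a martingale with respect to $(E_1,\dots,E_n)$ with $X_i\in[0,1]$, $X_0=\probX{f=1}$ a constant, and $X_n=f(E_1,\dots,E_n)\in\zo$, so \theoremref{gap-main} applies and supplies a stopping time $\tau^{*}\colon\Omega\to\{1,\dots,n\}$ with $\EX{\abs{X_{\tau^{*}}-X_{\tau^{*}-1}}}\ge C_n(X_0)$.

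Next I would set up the dictionary between an intervention and the martingale gap. Fix $i$ and a prefix $(e_1,\dots,e_{i-1})$, and suppose the adversary, after also seeing the fresh message $e_i$, rejects it and redraws $E_i$ from its conditional law given $(e_1,\dots,e_{i-1})$, leaving the rest of the process untouched. Conditioned on this, the expected final outcome is $\EX{X_i\mid E_1=e_1,\dots,E_{i-1}=e_{i-1}}=X_{i-1}(e_1,\dots,e_{i-1})$, whereas with no intervention it would have been $X_i(e_1,\dots,e_i)$; so a single intervention at this node shifts the outcome expectation by exactly $X_{i-1}-X_i$. I would then split the susceptibility by the sign of this shift: with $(a)^{+}\defeq\max\{a,0\}$,
\begin{align*}
\EX{\abs{X_{\tau^{*}}-X_{\tau^{*}-1}}}
&= \underbrace{\EX{\bigl(X_{\tau^{*}-1}-X_{\tau^{*}}\bigr)^{+}}}_{=:\,P}\\
&\qquad + \underbrace{\EX{\bigl(X_{\tau^{*}}-X_{\tau^{*}-1}\bigr)^{+}}}_{=:\,N},
\end{align*}
so that $P+N\ge C_n(X_0)$ and hence $\max\{P,N\}\ge\tfrac12 C_n(X_0)$.

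Then I would turn the larger of the two into an attack. If $P\ge\tfrac12 C_n(X_0)$, define the pruned rule $\tau'$ equal to $\tau^{*}$ on the event $\{X_{\tau^{*}}\le X_{\tau^{*}-1}\}$ and equal to $\infty$ (``never intervene'') elsewhere; the adversary redraws $E_{\tau'}$ when $\tau'<\infty$. Since $\{\tau'=i\}$ is $E_1,\dots,E_i$-measurable and $\EX{X_n\mid E_1,\dots,E_i}=X_i$, summing the per-node shift over $i$ (and using $X_0=\EX{X_n}$) yields
\begin{align*}
\EX{f^{\mathrm{adv}}}-X_0
&= \EX{\bigl(X_{\tau'-1}-X_{\tau'}\bigr)\,\mathbbm{1}[\tau'<\infty]}\\
&= \EX{\bigl(X_{\tau^{*}-1}-X_{\tau^{*}}\bigr)^{+}} = P \ge \tfrac12 C_n(X_0),
\end{align*}
where $f^{\mathrm{adv}}$ is the outcome under the attack; thus the adversary raises $\probX{f=1}$ by at least $\tfrac12 C_n(X_0)$. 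If instead $N\ge\tfrac12 C_n(X_0)$, the mirror-image pruning (keep $\tau^{*}$ only where $X_{\tau^{*}}>X_{\tau^{*}-1}$) lowers $\probX{f=1}$ by $N\ge\tfrac12 C_n(X_0)$. Either way the outcome probability is pushed away from $X_0$ by at least $\tfrac12 C_n(X_0)$, and \lemmaref{gap-lower-upper} then gives $\tfrac12 C_n(X_0)\ge\tfrac12 L_n(X_0)=\tfrac1{\sqrt{2n-1}}X_0(1-X_0)$.

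The step I expect to take the most care over is this per-node translation together with the admissibility of the pruned rule: one has to confirm that both the sign test $X_i\lessgtr X_{i-1}$ and the event $\{\tau^{*}=i\}$ are functions of $(E_1,\dots,E_i)$ alone (so that ``intervene at $\tau'$'' is a legitimate causal, single-shot strategy), and that resampling $E_i$ exactly once, with no further interference, really does reset the conditional outcome expectation from $X_i$ back to $X_{i-1}$. Granting this, the remainder is just optional stopping for the bounded martingale $(X_i)$ and the trivial inequality $\max\{P,N\}\ge\tfrac12(P+N)$.
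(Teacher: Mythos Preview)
Your proposal is correct and follows essentially the same approach as the paper: form the Doob martingale, apply \theoremref{gap-main} to obtain a stopping time with susceptibility at least $C_n(X_0)$, interpret resampling $E_i$ as shifting the conditional expectation from $X_i$ back to $X_{i-1}$, and then split by sign so that one of the two signed attacks achieves at least $\tfrac12 C_n(X_0)$. The paper compresses your $P$/$N$ decomposition and the measurability checks into the phrase ``by an averaging argument,'' but the content is the same.
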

The previous result of \cite{Cleve93martingales} applies only to $X_0=1/2$ and they ensure a deviation of $1/320\sqrt n$. 
For $X_0=1/2$, our result ensures a deviation of (roughly) $1/4\sqrt{2n} \approx 1/5.66\sqrt n$.

\subsubsection{Influencing Multi-faceted Dice-rolls} 

\corollaryref{control} generalizes to the setting where $f\colon\Omega\to\{0,1,\dotsc,\omega-1\}$, \ie, the function $f$ outputs an arbitrary $\omega$-faceted dice roll. 
In fact, we quantify the deviation in the probability of any subset $S\subseteq\{0,1,\dotsc,\omega-1\}$ of outcomes caused by an adversary intervening once.  
\begin{corollary}[Influencing Multi-faceted Dice-Rolls]
\label{corol:control2}
Let $\Omega_1,\dotsc,\Omega_n$ be arbitrary sets, and $(E_1,\dotsc,E_n)$ be a joint distribution over the set $\Omega\defeq\Omega_1\times\dotsi\times\Omega_n$. 
Let $f\colon\Omega\to\{0,1,\dotsc,\omega-1\}$ be a function with $\omega\geq 2$ outcomes, $S\subseteq\{0,1,\dotsc,\omega-1\}$ be any subset of outcomes, and  $\probX{f(E_1,\dotsc,E_n)\in S}=X_0$. 
Then, there exists an adversarial strategy of intervening once to bias the probability of the outcome being in $S$ away from $X_0$ by $\geq \frac12C_n(X_0) \geq \frac12L_n(X_0) =\frac{1}{\sqrt{2n-1}}X_0(1-X_0)$. 
\end{corollary}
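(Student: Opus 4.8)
The plan is to reduce this multi-faceted statement to the binary-outcome case already established in \corollaryref{control}. Given the function $f\colon\Omega\to\{0,1,\dotsc,\omega-1\}$ and the subset $S\subseteq\{0,1,\dotsc,\omega-1\}$, I would introduce the indicator function $f_S\colon\Omega\to\zo$ defined by $f_S(\omega)\defeq\mathbbm{1}\!\left[f(\omega)\in S\right]$. Then $\probX{f_S(E_1,\dotsc,E_n)=1}=\probX{f(E_1,\dotsc,E_n)\in S}=X_0$, so $f_S$ and the joint distribution $(E_1,\dotsc,E_n)$ satisfy exactly the hypotheses of \corollaryref{control}.

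Next I would observe that the Doob martingale $(X_0,X_1,\dotsc,X_n)$ associated with $f_S$ and the filtration that reveals $E_1,\dotsc,E_i$ at time-step $i$ is a legitimate discrete-time martingale with $X_i\in[0,1]$ and $X_n\in\zo$ (the latter because $f_S$ is $\zo$-valued), so \theoremref{gap-main} applies to it and furnishes a stopping time $\tau^*$ of susceptibility $\geq C_n(X_0)$. The key point is that the intervention model — intervene once at some time-step $i$, reject the drawn value $E_i=e_i$, and substitute a fresh independent sample from $E_i$ — is phrased purely in terms of the transcript randomness $E_1,\dotsc,E_n$ and makes no reference whatsoever to how the final outcome is computed from the transcript. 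Hence the averaging argument of \sectionref{control}, which splits the contributions to the susceptibility of $\tau^*$ into those that increase and those that decrease $\probX{f_S=1}$, produces a once-intervening adversarial strategy biasing $\probX{f_S=1}$ away from $X_0$ by $\geq\tfrac12 C_n(X_0)\geq\tfrac12 L_n(X_0)=\tfrac1{\sqrt{2n-1}}X_0(1-X_0)$.

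Finally, since the events $\{f_S=1\}$ and $\{f\in S\}$ coincide, biasing $\probX{f_S=1}$ is literally the same as biasing $\probX{f\in S}$, so the strategy constructed above is exactly the one claimed in \corollaryref{control2}, and the bound transfers verbatim. I do not expect any genuine obstacle in this argument: the only place that warrants an explicit sentence is the remark that the adversary's intervention strategy is a function of $(E_1,\dotsc,E_n)$ alone and is oblivious to the codomain of the output function, which is precisely what lets the reduction go through unchanged. As a sanity check, taking $\omega=2$ and $S=\{1\}$ recovers \corollaryref{control} itself.
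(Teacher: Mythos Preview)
Your proposal is correct and matches the paper's own argument essentially verbatim: the paper simply notes that \corollaryref{control2} follows from \corollaryref{control} by considering ``$f(E_1,\dotsc,E_n)\in S$'' as the interesting event for the martingale, which is precisely your indicator-function reduction.
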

\corollaryref{control} and \corollaryref{control2} are equivalent to each other. 
Clearly \corollaryref{control} is a special case of \corollaryref{control2}. 
\corollaryref{control2}, in turn, follows from \corollaryref{control} by considering ``$f(E_1,\dotsc,E_n)\in S$'' as the interesting event for the martingale. 
We state these two results separately for conceptual clarity and ease of comparison with the prior work.

\subsection{$L_2$ Gaps and their Tightness}
\label{sec:l2-gap} 

Finally, to demonstrate the versatility of our geometric approach, we measure large $L_2$-norm gaps in martingales.  
\begin{figure}[H]
	\begin{boxedminipage}{\linewidth}
		\begin{minipage}{0.5\linewidth}
			\input{fig-transform-def-l2} 
		\end{minipage}
		\begin{minipage}{0.5\linewidth}
\footnotesize\vfill
\begin{tikzpicture}[domain=0:1,scale=5,auto,>=stealth] 
\coordinate (O) at (0,0); 
\draw[->,name path=xaxis] (0,0) to (1.1,0) node[below] {$X$-axis}; 
\draw[->,name path=yaxis] (0,0) to (0,.5) node[above] {$Y$-axis}; 

\draw[dotted,name path=curve,thick] plot (\x,{\x*(1-\x)}) node[right,anchor=south west] {$D$}; 

\coordinate (X) at (.6,0); 

\node[anchor=north] (x) at (X) {$(x,0)$};  

\draw[dashed,name path=curve1,thick] plot (\x,{(\x-0.6)^2}) node[right,anchor=south west] {};
%
\draw [name intersections={of=curve and curve1}] (intersection-2) circle (0.3pt)
coordinate (P1)
;
\node at (P1) [above = 1mm of P1] {$P_2$};
\draw [name intersections={of=curve and curve1}] (intersection-1) circle (0.3pt)
coordinate (P0)
;
\node at (P0) [above = 1mm of P0] {$P_1$};

\draw[dashed,name path=p0p1] (P0) to (P1); 
\draw[name path=vert,draw=none] (X) to +(90:0.5); 
\draw [name intersections={of=p0p1 and vert, by=Q}] node at (Q) {\textbullet};
\node [anchor=south] at (Q) {$Q$}; 
\draw[dotted,thick] (X) to (Q); 

\draw[dotted,thick] (P0) to (P0 |- O); 
\node[anchor=north] at (P0 |- O) {$x_S(x)$};
\draw[dotted,thick] (P1) to (P1 |- O); 
\node[anchor=north] at (P1 |- O) {$x_L(x)$};

%
\end{tikzpicture}\vfill
		\end{minipage}
	\end{boxedminipage}
	\caption{Definition of transform of a curve $D$, represented by $T'(D)$. 
		The locus of the point $Q$ (in the right figure) defines the curve $T'(D)$.
	}
	\label{fig:transform-def-l2}
\end{figure}
\begin{theorem}
\label{thm:gap-main-2} 
Let $(X_0,X_1,\dotsc,X_n)$ be a discrete-time martingale such that 
  $X_n\in\{0,1\}$. 
Then, the following bound holds.
    $$\sup_{\text{stopping time }\tau} \EX{\left({X_\tau-X_{\tau-1}}\right)^2} \geq D_n(X_0)\defeq \frac1n X_0(1-X_0)$$
Furthermore, for all $n\geq1$ and $X_0\in[0,1]$, there exists a martingale $(X_0,\dotsc,X_n)$ such that for any stopping time $\tau$, it has $\EX{\left({X_\tau-X_{\tau-1}}\right)^2} = D_n(X_0)$. 
\end{theorem}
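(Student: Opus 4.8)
The plan is to run the same inductive, geometric argument as for \theoremref{gap-main}, with the harmonic-mean transform $T$ replaced by the geometric-mean transform $T'$ of \figureref{transform-def-l2}. Set $D_1(X)\defeq X(1-X)$ and $D_n\defeq (T')^{n-1}(D_1)$; I will establish the lower bound and its tightness by induction on $n$, and separately evaluate $D_n$ in closed form. The base case $n=1$ is immediate: since $X_1\in\zo$, the only admissible martingale has $X_1\in\{0,1\}$ with $\probX{X_1=1}=X_0$, and stopping at $\tau=1$ gives $\EX{(X_1-X_0)^2}=X_0(1-X_0)^2+(1-X_0)X_0^2=X_0(1-X_0)=D_1(X_0)$.

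First I would dispose of the closed form. If $D_{n-1}(X)=cX(1-X)$ with $c\defeq\tfrac1{n-1}$, the two abscissae $x_S(x),x_L(x)$ at which the parabola $Y=(X-x)^2$ meets $D_{n-1}$ are the roots of $(1+c)X^2-(2x+c)X+x^2=0$; by Vieta, $x_Sx_L=\tfrac{x^2}{1+c}$ and $(1-x_S)(1-x_L)=\tfrac{(1-x)^2}{1+c}$, hence $T'(D_{n-1})(x)=\sqrt{D_{n-1}(x_S)\,D_{n-1}(x_L)}=c\sqrt{x_Sx_L(1-x_S)(1-x_L)}=\tfrac{c}{1+c}x(1-x)$. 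So $T'$ sends $\tfrac1k X(1-X)$ to $\tfrac1{k+1}X(1-X)$, and starting from $D_1(X)=X(1-X)$ gives $D_n(X)=\tfrac1n X(1-X)$. In particular $D_{n-1}$ is concave, symmetric about $\tfrac12$, positive on $(0,1)$, vanishing at the endpoints, so $x_S<x<x_L$ for every $x\in(0,1)$: the parabola vanishes at $x$ but dominates $D_{n-1}$ near $0$ and near $1$, giving exactly one crossing on each side of $x$.

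For the inductive step I would view the depth-$n$ martingale as a tree rooted at $x=X_0$, with the law of $X_1$ a probability measure $\mu$ on $[0,1]$ of mean $x$. From a realized child value $t$ the adversary can guarantee $\max\{(t-x)^2,\,D_{n-1}(t)\}$ — either she stops at the first step, collecting the squared one-step gap, or she recurses and invokes the inductive lower bound on that depth-$(n-1)$ subtree. Hence $\sup_\tau\EX{(X_\tau-X_{\tau-1})^2}\ge\int_{[0,1]}M(t)\,d\mu(t)$ with $M(X)\defeq\max\{(X-x)^2,D_{n-1}(X)\}$. The heart of the argument is the geometric claim that the lower convex envelope $\widehat M$ of $M$ is $(X-x)^2$ on $[0,x_S]\cup[x_L,1]$ and the chord $P_1P_2$ on $[x_S,x_L]$, where $P_1=(x_S,a^2)$, $P_2=(x_L,b^2)$ with $a=x-x_S$, $b=x_L-x$: this spliced function is $\le M$ everywhere (the concave $D_{n-1}$ lies above its own chord, the convex parabola lies above its own secant), and it is convex since the one-sided slopes match up, $-2a\le b-a\le 2b$, at $x_S$ and $x_L$. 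Since $P_1,P_2$ lie on the parabola through $(x,0)$, linear interpolation gives the chord's height at $X=x$ equal to $\tfrac{b}{a+b}a^2+\tfrac{a}{a+b}b^2=ab=\sqrt{a^2b^2}=\mathrm{G.M.}(D_{n-1}(x_S),D_{n-1}(x_L))=D_n(x)$. By Jensen for the convex $\widehat M$, $\int M\,d\mu\ge\int\widehat M\,d\mu\ge\widehat M(x)=D_n(x)$. For tightness, take the martingale whose root $x$ has exactly the two children $x_S(x),x_L(x)$ with weights $p_S=\tfrac{x_L-x}{x_L-x_S}$, $p_L=\tfrac{x-x_S}{x_L-x_S}\in(0,1)$, recursing with the optimal depth-$(n-1)$ martingale at each child; since a crossing point satisfies $(x\p j-x)^2=D_{n-1}(x\p j)$, stopping at a child and recursing into it contribute identically, so by induction \emph{every} stopping time scores $p_SD_{n-1}(x_S)+p_LD_{n-1}(x_L)=ab=D_n(x)$, and the recursion bottoms out with children $\{0,1\}$, so $X_n\in\zo$ and the underlying filtration is coordinate exposure on $\zo^n$.

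The step I expect to be the main obstacle is the rigorous version of this convex-envelope claim together with the reduction of the (possibly continuously branching, non-finitely-supported) martingale lower bound to it: one has to argue that the adversary's per-child choices genuinely assemble into a single valid stopping time, and that the infimum over all admissible martingales is attained by the two-point measure at $x_S,x_L$. This is the $L_2$ analogue of what \appendixref{geometric-proof-l1} does for \theoremref{gap-main}, and I would adapt that measure-theoretic argument essentially verbatim; the closed-form computation, the convexity check, and the tightness construction are then routine. For bookkeeping it also helps to record, in analogy with \claimref{above-to-above} and \claimref{preserve-concavity}, that $T'$ preserves being-above, concavity, and symmetry about $\tfrac12$, although the explicit formula $D_n(X)=\tfrac1n X(1-X)$ makes this inessential here.
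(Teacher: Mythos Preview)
Your proposal is correct and follows essentially the same inductive, geometric route as the paper: the same transform $T'$, the same two-point optimal martingale at the crossings $x_S,x_L$, and the same closed-form recursion $D_n(X)=\tfrac1nX(1-X)$, with your Jensen/lower-convex-envelope phrasing being a cleaner formalization of the paper's convex-hull picture and your Vieta computation a slicker version of its explicit root calculation in \claimref{Characterising-Second-Norm}. One cosmetic slip: a convex parabola lies \emph{below} its secant, not above, but that clause is not actually used in your argument, so the proof stands.
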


\begin{proof}
    
We shall proceed by induction on $n$. 

\noindent{\bfseries Base Case $n=1$.} 
Note that in this case (see \figureref{transform-induct-bc}) the optimal stopping time is $\tau=1$. 
  $$\mathrm{opt}_1(X_0,2) = D_1(X_0) = (1-X_0)X_0^2 + X_0(1-X_0)^2 = X_0(1-X_0) \; .$$

\noindent{\bfseries General Inductive Step.} 
Let us fix $X_0=x$ and $n=d\geq 2$. 
We proceed analogous to the argument in \sectionref{large-gap-l1}. 
The adversary can either decide to stop at the child $j$ (see \figureref{jump2-induct} for reference) or continue to the subtree rooted at it to find a better stopping time.
 
\begin{figure}[H]
\begin{center}\footnotesize
\begin{tikzpicture}[domain=0:1,scale=5,auto,>=stealth] \footnotesize
\coordinate (O) at (0,0); 
\draw[->,name path=xaxis] (0,0) to (1.1,0) node[anchor = north] {$X$-axis}; 
\draw[->,name path=yaxis] (0,0) to (0,.7) node[anchor = south] {$Y$-axis} coordinate (ytop); 
\draw[->,name path=ypaxis,draw=none] (1,0) to (1,.75) coordinate (yptop); 
\draw[->,name path=ypaxis1,draw=none] (0.85,0) to (0.85,.65) coordinate (yptop1);

\coordinate (X) at (.3,0); 
\node at (X) {\textbullet}; 
\node[anchor=north west] (x)  at ($(X) +(-0.05,0)$) {$X=(x,0)$};  
\draw[dashed, name path=xvert] (X) to ++(90:0.6) coordinate (top); 

\draw[dotted,name path=curve,thick] plot (\x,{\x*(1-\x)}) node[anchor=south west] {$D_d$}; 
\draw [name intersections={of=curve and xvert, by=m}] coordinate (mid) at (m);
\draw[dashed,name path=curve1,thick] plot (\x,{(\x-0.3)^2}) node[right,anchor=south west] {};

\draw [name intersections={of=curve and curve1}] (intersection-2) circle (0.3pt)
coordinate (P1)
;
\node at (P1) [below = .5mm of P1] {$P_2$};
\draw [name intersections={of=curve and curve1}] (intersection-1) circle (0.3pt)
coordinate (P0)
;
\node at (P0) [below = 0mm of P0] {$P_1$};
%
%
\draw [name intersections={of=yaxis and curve1, by=L}] node at (L) {\textbullet};
\node[anchor=east] at (L) {};
%
%
\draw [name intersections={of=ypaxis and curve1, by=R}] node at (R) {\textbullet};
\node[anchor=west] at (R) {};
\draw [name intersections={of=ypaxis1 and curve1, by=R1}] node at (R1) {};
\node[anchor=west] at (R1) {};
%
%
%

\begin{pgfonlayer}{background}
\draw [fill=gray!50!,draw=none] (ytop) rectangle (1,0); 
\begin{scope}
\draw[draw=none,fill=white] plot (\x,{\x*(1-\x)});
\draw[draw=white,fill=white] plot (\x,{(\x-0.3)^2}) to (1,0) to (O);
\end{scope} 

\begin{scope}
\clip (P0) rectangle (mid); 
\draw[thick] plot (\x,{\x*(1-\x)});
\end{scope}

\begin{scope}
\clip (P1) rectangle (top); 
\draw[thick] plot (\x,{\x*(1-\x)});
\end{scope}

\begin{scope}
\clip (P0) rectangle (ytop); 
\draw[thick] plot (\x,{(\x-0.3)^2});
\end{scope}

\begin{scope}
\clip (P1) rectangle (ytop-|R); 
\draw[thick] plot (\x,{(\x-0.3)^2});
\end{scope}

\end{pgfonlayer}
%

\foreach \x/\y/\i/\where in {0.14/0.15/1/south east, 
                             0.6/0.25/2/south west, 
                             0.7/0.4/3/west, 
                             0.5/0.45/4/south, 
                             0.15/0.3/5/east, 
                             0.5/.35/6/east,
                             0.2/0.25/7/south west}{
  \node (x\i) at (\x,\y) {$\circ$}; \node[anchor=\where] at (x\i) {$Z^{(\i)}$};
}
\draw[dashed,pattern=dots, pattern color=gray] (x1.center) to (x2.center) to (x3.center) to (x4.center) to (x5.center) to cycle; 

\draw[name path=temp,draw=none] (x1.center) to (x2.center); 
\draw [name intersections={of=temp and xvert, by=Qp}] node at (Qp) {$\circ$} node [anchor=west,inner sep=1mm] at (Qp) {$Q'$};

\draw[dashed,name path=p0p1] (P0) to (P1); 
\draw[name path=vert,draw=none] (X) to +(90:0.5); 
\draw [name intersections={of=p0p1 and vert, by=Q}] node at (Q) {\textbullet};
\node [anchor=north east,inner sep=0] at (Q) {$Q$}; 
\draw[dotted,thick] (X) to (Q); 

\end{tikzpicture}
\end{center}
\caption{Intuitive Summary of the inductive argument. 
  Our objective is to pick the set of points $\{Z\p1,Z\p2\dotsc\}$ in the gray region to minimize the length of the intercept $XQ'$ of their (lower) convex hull on the line $X=x$. 
  Clearly, the unique optimal solution corresponds to including both $P_1$ and $P_2$ in this set.%
}
\label{fig:induct-l2}
\end{figure}

Overall, the adversary gets the following contribution from the $j$-th child 
  $$\max\left\{(x\p j-x)^2 , D_{d-1}(x\p j)\right\}$$
The adversary obtains a score that is at least the height of $Q$ in \figureref{induct-l2}. 
Further, a martingale designer can choose $t=2$, and $Z\p 1=P_1$ and $Z\p 2=P_2$ to define the optimal martingale. 
Similar to \theoremref{gap-main}, the scores corresponding to all possible stopping times in the optimal martingale are identical. 

We can argue that the height of $Q$ is the geometric-mean of the heights of $P_1$ and $P_2$. 
This observation defines the geometric transformation $T'$ in \figureref{transform-def-l2}.
For this transformation, we demonstrate that $D_n(X_0)=\frac1nX_0(1-X_0)$ is the solution to the recursion $D_n = {T'}^{n-1}(D_1)$ in \claimref{Characterising-Second-Norm}. 
\end{proof}

\begin{claim}
	\label{clm:Characterising-Second-Norm}
	Let $D_1$ be the curve defined as the zeros of the equation $Y=X(1-X)$ and for $n>1$, $D_n$ is obtained as applying the transformation $T'$, defined in \figureref{transform-def-l2}, to the curve $D_{n-1}$. We claim that for each $x\in [0,1]$, $D_n(x)=d_nx(1-x)$ where $d_n=\frac1n$. 
	\end{claim}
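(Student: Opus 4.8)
The plan is to prove the claim by induction on $n$, mirroring the recursion $D_n = T'(D_{n-1})$. The base case $n=1$ is immediate: $D_1(X) = X(1-X) = d_1 X(1-X)$ with $d_1 = 1 = 1/1$.

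For the inductive step, assume $D_n(X) = d_n X(1-X)$ for some constant $d_n > 0$, and compute $D_{n+1} = T'(D_n)$ directly from the definition in \figureref{transform-def-l2}. First I would fix $x \in [0,1]$ and locate $x_S(x)$ and $x_L(x)$ as the two solutions of $d_n X(1-X) = (X-x)^2$, i.e.\ of the quadratic $(1+d_n)X^2 - (2x+d_n)X + x^2 = 0$. A brief sign analysis of the downward parabola $h(X) \defeq d_n X(1-X) - (X-x)^2$ — which satisfies $h(x) = d_n x(1-x) \ge 0$ while $h(0) = -x^2 \le 0$ and $h(1) = -(1-x)^2 \le 0$ — shows that for $x\in(0,1)$ the two roots are real, distinct, and lie in $[0,1]$ (one in $(0,x)$, one in $(x,1)$), so the transform is well-defined at every stage; the endpoint cases $x\in\{0,1\}$ are trivial. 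The point of keeping the quadratic rather than expanding the root formula is that only the symmetric functions of $x_S,x_L$ enter the answer.

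The main computation is then short. By Vieta's formulas, $x_S x_L = \frac{x^2}{1+d_n}$ and $x_S + x_L = \frac{2x+d_n}{1+d_n}$, hence $(1-x_S)(1-x_L) = 1 - (x_S+x_L) + x_S x_L = \frac{(1-x)^2}{1+d_n}$. Writing $f(X) = d_n X(1-X)$, we get $f(x_S)f(x_L) = d_n^2\, x_S x_L\,(1-x_S)(1-x_L) = \frac{d_n^2\, x^2(1-x)^2}{(1+d_n)^2}$, and therefore $D_{n+1}(x) = \mathrm{G.M.}\!\big(f(x_S),f(x_L)\big) = \sqrt{f(x_S)f(x_L)} = \frac{d_n}{1+d_n}\, x(1-x)$. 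Thus $D_{n+1}(X) = d_{n+1} X(1-X)$ with $d_{n+1} = \frac{d_n}{1+d_n}$.

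It remains to solve this recursion with $d_1 = 1$. Taking reciprocals gives $\frac{1}{d_{n+1}} = \frac{1+d_n}{d_n} = \frac{1}{d_n} + 1$, so $\frac{1}{d_n} = \frac{1}{d_1} + (n-1) = n$, i.e.\ $d_n = 1/n$, as claimed. I do not expect a genuine obstacle here: the argument is a direct algebraic simplification, and the only place needing a word of care is verifying that $x_S(x),x_L(x)\in[0,1]$ so that $T'$ is applicable at each step — which follows from $D_n$ being concave downward and passing through $(0,0)$ and $(1,0)$.
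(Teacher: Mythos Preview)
Your proof is correct and follows the same inductive strategy as the paper: assume $D_n(X)=d_nX(1-X)$, compute $T'(D_n)$, and obtain $D_{n+1}(X)=\frac{d_n}{1+d_n}X(1-X)$. The difference is purely in the algebra of the inductive step. The paper writes down the two roots $x_0,x_1$ of $d_nX(1-X)=(X-x)^2$ explicitly via the quadratic formula, expands $y_0=d_nx_0(1-x_0)$ and $y_1=d_nx_1(1-x_1)$ as products of conjugate expressions, and then multiplies everything out to simplify $\sqrt{y_0y_1}$. You instead observe that only the symmetric functions $x_Sx_L$ and $x_S+x_L$ enter the product $f(x_S)f(x_L)$, read these off by Vieta, and get the answer in two lines. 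Your route is shorter and avoids the radical manipulations; the paper's route is more direct but computationally heavier. You also include the sign analysis verifying $x_S,x_L\in[0,1]$ and state and solve the recursion $d_{n+1}=d_n/(1+d_n)$ explicitly, both of which the paper leaves implicit.
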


\begin{proof}
	We use induction on $n$ to prove that for each $x\in [0,1]$, we have $D_n(x)=d_nx(1-x)$ where $d_n=\frac1n$. Base case $n=1$, is obvious. Now, assuming that $D_n(x)=d_nx(1-x)$ where $d_n=\frac1n$, we will prove that $D_{n+1}(x)=d_{n+1}x(1-x)$ where $d_{n+1}=\frac1{n+1}$.
	Let's fix $x\in [0,1]$ and let $x_0$ and $x_1$ be respectively the smaller and larger root of the equation $d_nx^*(1-x^*)=(x-x^*)^2$. Then we have
	
	$$
	x_0=\frac{(2x+d_n)-\sqrt{d_n^2+4d_nx(1-x)}}{2(1+d_n)} \; ,
	$$
	$$
	x_1=\frac{(2x+d_n)+\sqrt{d_n^2+4d_nx(1-x)}}{2(1+d_n)} \; .
	$$
	Let $y_0=d_nx_0(1-x_0)$ and $y_1=d_nx_1(1-x_1)$, then we have the following relations: $$y_0=d_n\cdot\frac{(2x+d_n)-\sqrt{d_n^2+4d_nx(1-x)}}{2(1+d_n)}\cdot \frac{(2(1-x)+d_n)+\sqrt{d_n^2+4d_nx(1-x)}}{2(1+d_n)}$$
	
	$$y_1=d_n\cdot\frac{(2x+d_n)+\sqrt{d_n^2+4d_nx(1-x)}}{2(1+d_n)}\cdot\frac{(2(1-x)+d_n)-\sqrt{d_n^2+4d_nx(1-x)}}{2(1+d_n)}$$
	
	Now, according to the definition of transformation $T'$ in \figureref{transform-def-l2}, we have $D_{n+1}(x)=\sqrt{y_0y_1}$ and:
	
	\begin{align*}
		\sqrt{y_0y_1}&=\frac{d_n}{4(1+d_n)^2}\cdot\sqrt{\left(\left(2x+d_n\right)^2-\left(d_n^2+4d_nx(1-x)\right)\right)\left(\left(2(1-x)+d_n\right)^2-\left(d_n^2+4d_nx(1-x)\right)\right)} \\
		&=\frac{d_n}{4(1+d_n)^2}\cdot \sqrt{\left(4x^2(1+d_n)\right)\left(4(1-x)^2(1+d_n)\right)}\\
		&=\frac{d_n}{1+d_n}x(1-x)=\frac{\frac{1}{n}}{1+\frac{1}{n}}x(1-x)
		\\&=\frac{1}{n+1}x(1-x)
	\end{align*}
	\end{proof}  

Note that, for any martingale $(X_0,\dotsc,X_n)$ with $X_n\in\zo$, we have $\EX{\sum_{i=1}^n (X_i-X_{i-1})^2} = \EX{X_n^2-X_0^2} = X_0(1-X_0)$. 
Therefore, by an averaging argument, there exists a round $i$ such that $\EX{(X_i-X_{i-1})^2} \geq \frac1nX_0(1-X_0)$. 
\theoremref{gap-main-2} proves the existence of a martingale that achieves the lower-bound even for non-constant stopping times. 
This result provides a technique to obtain the upper-bound to $C_n(X)$ in \lemmaref{gap-lower-upper}. 
 
\subsection{Alternate Proof for $U_{n+1} \succcurlyeq T(U_{n})$ }
\label{app:alt-proof}
\begin{proof}
	Recall that we defined $D_n$ as the zeros of the curve $Y = \frac{1}{n}X(1-X)$. 
	Since $U_n$ is defined by the zeros of the curve $Y = \sqrt{\frac{1}{n}X(1-X)}$, by squaring the $Y$-values for $U_n$, we can obtain the curve $D_n$. 
	This is illustrated in \figureref{un-dn-transform}. 
	Denote points on curve $U_n$ as $P_1 := (x_0,y_0)$, $P_2 := (x_1,y_1)$ and points on curve $D_n$ as $P'_1 := (x_0,y'_0)$, $P'_2 := (x_1,y'_1)$. 
	In the left-hand figure, let $\alpha := x - x_0$ and $\beta := x_1-x$, then $y_0 = \alpha $ and $y_1 = \beta$ and $Q = H.M. (\alpha,\beta)$. After squaring, in the right-hand figure $Q = (H.M.(\alpha,\beta))^2$. Note by definition of the transformation $T'$, we have that $Q' =G.M.(\alpha^2,\beta^2)$. We show that $G.M.(\alpha^2,\beta^2) \geq (H.M.(\alpha,\beta))^2$ as follows 
	\begin{align*}
	&& G.M.(\alpha^2,\beta^2) &\geq \left(H.M.(\alpha,\beta)\right)^2\\
	&& \left(G.M.(\alpha^2,\beta^2)\right)^{1/2} &\geq H.M.(\alpha,\beta)\\
	&& G.M.(\alpha,\beta) &\geq H.M.(\alpha,\beta),
	\end{align*}
	which is true by the standard $G.M. \geq H.M.$ inequality. 
	Now recall that the locus of the point $Q'$ defines the curve $T'(D_n) = D_{n+1}$ (From \claimref{Characterising-Second-Norm})
	and we know that $D_{n+1} = U^2_{n+1}$. Also, after squaring the $Y$-axis, the locus of the point $Q$ defines the curve $T^2(U_n)$, therefore we have just shown that $U^2_{n+1} \succcurlyeq T^2(U_{n})$, which means that $U_{n+1} \succcurlyeq T(U_{n})$.  
\end{proof}

Note that, for any martingale $(X_0,\dotsc,X_n)$ with $X_n\in\zo$, we have $\EX{\sum_{i=1}^n (X_i-X_{i-1})^2} = \EX{X_n^2-X_0^2} = X_0(1-X_0)$. 
Therefore, by an averaging argument, there exists a round $i$ such that $\EX{(X_i-X_{i-1})^2} \geq \frac1nX_0(1-X_0)$. 
\theoremref{gap-main-2} proves the existence of a martingale that achieves the lower-bound even for non-constant stopping times. 

This result provides an alternate technique to obtain the upper-bound to $C_n(X)$ in \lemmaref{gap-lower-upper}.

\begin{figure}[H]
	\begin{boxedminipage}{\linewidth}
		\begin{minipage}{0.5\linewidth}
\footnotesize\vfill
\begin{tikzpicture}[domain=0:1,scale=5,auto,>=stealth] 

\coordinate (O) at (0,0); 
\draw[->,name path=xaxis] (0,0) to (1.1,0) node[below] {$X$-axis}; 
\draw[->,name path=yaxis] (0,0) to (0,.5) node[above] {$Y$-axis};

\draw[dotted,name path=curve,thick] plot (\x,{sqrt(\x*(1-\x))}) node[right,anchor=south west] {$U_n$}; 

\coordinate (X) at (.3,0); 

\node[anchor=north] (x) at (X) {$(x,0)$};  

\draw[dashed,->,name path=ell0] (X) to ++(135:0.4) node [right] {$\ell_1$};
\draw[dashed,->,name path=ell1] (X) to ++(45:0.7) node [right] {$\ell_2$}; 

\draw [name intersections={of=curve and ell0, by=P0}] node at (P0) {\textbullet};
\node[anchor=east] at (P0) {$P_1$};
\draw[dotted] (P0) |- (0,0);
\node[anchor=north] at (P0 |- O) {$(x_0,0)$};

\draw [name intersections={of=curve and ell1, by=P1}] node at (P1) {\textbullet};
\node[anchor=west] at (P1) {$P_2$};
\draw[dotted] (P1) |- (0,0); 
\node[anchor=north] at (P1 |- O) {$(x_1,0)$};

\draw[dashed,name path=p0p1] (P0) to (P1); 
\draw[name path=vert,draw=none] (X) to +(90:0.5); 
\draw [name intersections={of=p0p1 and vert, by=Q}] node at (Q) {\textbullet};
\node [anchor=south] at (Q) {$Q$}; 
\draw[dotted,thick] (X) to (Q); 


\end{tikzpicture}\vfill
		\end{minipage}
		\begin{minipage}{0.5\linewidth}
\footnotesize\vfill
\begin{tikzpicture}[domain=0:1,scale=5,auto,>=stealth] 
\coordinate (O) at (0,0); 
\draw[->,name path=xaxis] (0,0) to (1.1,0) node[below] {$X$-axis}; 
\draw[->,name path=yaxis] (0,0) to (0,.5) node[above] {$Y$-axis}; 

\draw[dotted,name path=curve,thick] plot (\x,{\x*(1-\x)}) node[right,anchor=south west] {$D_n$}; 

\coordinate (X) at (.3,0); 

\node[anchor=north] (x) at (X) {$(x,0)$};  

\draw[dashed,name path=curve1,thick] plot (\x,{(\x-0.3)^2}) node[right,anchor=south west] {};
%
\draw [name intersections={of=curve and curve1}] (intersection-2) circle (0.2pt)
coordinate (P1)
;
 \node at (P1) [above = 1mm of P1] {$P'_2$};
\draw [name intersections={of=curve and curve1}] (intersection-1) circle (0.2pt)
coordinate (P0)
;
 \node at (P0) [above = 1mm of P0] {$P'_1$};

\draw[dashed,name path=p0p1] (P0) to (P1); 
\draw[name path=vert,draw=none] (X) to +(90:0.5); 
\draw [name intersections={of=p0p1 and vert}] (intersection-1) circle (0.2pt)
coordinate (Q)
;
\node [anchor=south] at (Q) {$Q'$}; 
\draw[dotted,thick] (X) to (Q); 
\draw[dotted] (P0) |- (0,0);
\node[anchor=north] at (P0 |- O) {$(x_0,0)$};
\draw[dotted] (P1) |- (0,0);
\node[anchor=north] at (P1 |- O) {$(x_1,0)$};


 \coordinate (Qog) at (0.3, 441/5000);
 \draw[fill=black] (Qog) circle (0.2pt) node [anchor=north east] {$Q$};

\end{tikzpicture}\vfill
		\end{minipage}
	\end{boxedminipage}
	\caption{Intuitive summary of the Proof of \claimref{our-induct-upper}. In the left-hand figure, we have $U_n$ and the locus of the point $Q$ defines the curve $T(U_n)$. Recall that $D_n$ is defined by the zeros of the curve $Y = \frac{1}{n}X(1-X)$. 
		Intuitively we can say that $D_n = (U_n)^2$. By squaring the $Y$-axis in the left-hand figure, we get the right-hand figure. Since $D_{n+1}=T^{'}(D_n)$ (From \claimref{Characterising-Second-Norm})
		, and the locus of the point $Q'$ defines this curve, we only need to show that $Q'$ is always above $Q$ in the right-hand figure in order to prove our original claim. 
	}
	\label{fig:un-dn-transform}
\end{figure}

	\newpage 

	\addcontentsline{toc}{section}{References}

	\appendix
	\section{More Technical Proof of \theoremref{gap-main}}
\label{app:geometric-proof-l1}

\begin{claim}
	For each $d\geq 1$, let $C_d$ denote a curve over $X\in[0,1]$ that includes all points $(x,\mathrm{opt}_d(x,1))$. Let $T$ be the transformation defined in \figureref{transform-def}.
	Then, $C_{d}=T^{d-1}(C_1)$ where $T^{0}$ denotes the identity transformation and $T^{k}$ denotes the transformation achieved by composing $T$ with itself $k$ times. Moreover, for every depth $d$, there exists a martingale of bias $x$ whose $\mathrm{max\textnormal{-}score}$ in $L_1$-norm is equal to $\mathrm{opt}_d(x,1)$ and for each $i=1,\dots,n$, $\abs{\Omega_i}=2$.  
\end{claim}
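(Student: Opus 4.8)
The plan is to prove both assertions simultaneously by induction on $d$: that $C_d=T^{d-1}(C_1)$ with $C_1(X)=2X(1-X)$, and that the infimum defining $\mathrm{opt}_d(x,1)$ is actually attained, by a depth-$d$ martingale that is binary at every level and all of whose stopping times have the same score. The base case $d=1$ is the computation already done in the main text (see \figureref{transform-induct-bc}): the only depth-one martingale of bias $x$ has root $x$ with leaves $0,1$, so $\abs{\Omega_1}=2$, and its unique nontrivial stopping time $\tau=1$ scores $x\abs{1-x}+(1-x)\abs{0-x}=2x(1-x)=C_1(x)$.

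For the inductive step assume the claim for $d-1$. In particular $C_{d-1}=T^{d-2}(C_1)$ is concave downward and passes through $(0,0)$ and $(1,0)$ (concavity is preserved by $T$, by \claimref{preserve-concavity}), and for every $x_1\in[0,1]$ there is a binary depth-$(d-1)$ martingale of bias $x_1$ all of whose stopping times score exactly $C_{d-1}(x_1)$. Fix $x\in(0,1)$ (the cases $x\in\{0,1\}$ are degenerate), let $x_S=x_S(x)$ and $x_L=x_L(x)$ solve $X+C_{d-1}(X)=x$ and $X-C_{d-1}(X)=x$ respectively, so $0\le x_S\le x\le x_L\le1$, and put $h_1\defeq C_{d-1}(x_S)=x-x_S$, $h_2\defeq C_{d-1}(x_L)=x_L-x$, $P_1\defeq(x_S,h_1)$, $P_2\defeq(x_L,h_2)$. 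Let $L_{12}$ be the line through $P_1$ and $P_2$; a routine computation identifies its intercept $L_{12}(x)=\mathrm{H.M.}(h_1,h_2)=T(C_{d-1})(x)$ with the value produced by the transform of \figureref{transform-def}. I would then prove the two inequalities $\mathrm{opt}_d(x,1)\ge T(C_{d-1})(x)$ and $\mathrm{opt}_d(x,1)\le T(C_{d-1})(x)$.

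For the lower bound, given any depth-$d$ martingale of bias $x$, conditioning on the first step writes the largest $L_1$-score over stopping times as $\EX{W}$, where $W\defeq\max\{\abs{X_1-x},\,S_1\}$ and $S_1$ is the largest $L_1$-score achievable by a stopping time in the depth-$(d-1)$ continuation rooted at $X_1$; by the inductive hypothesis $S_1\ge C_{d-1}(X_1)$ a.s., so the point $(X_1,W)$ lies on or above the curve $M(X)\defeq\max\{\abs{X-x},C_{d-1}(X)\}$ a.s., while $\EX{X_1}=x$. The geometric heart of the step is $M\succcurlyeq L_{12}$ on $[0,1]$: on $[x_S,x_L]$ this is concavity of $C_{d-1}$ above its chord $P_1P_2$; for $X<x_S$ one checks (using $x\le x_L$) that $\mathrm{slope}(L_{12})\ge-1$, so the slope-$(-1)$ line $X\mapsto x-X$, which lower-bounds $M$ there, stays above $L_{12}$; the case $X>x_L$ is symmetric. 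Hence $W\ge L_{12}(X_1)$ a.s., and since $L_{12}$ is linear, $\EX{W}\ge\EX{L_{12}(X_1)}=L_{12}(\EX{X_1})=L_{12}(x)=T(C_{d-1})(x)$. For the upper bound I would exhibit the optimal martingale directly: give the root $x$ exactly two children $x^{(1)}=x_S$, $x^{(2)}=x_L$ with probabilities $p^{(1)}=\tfrac{x_L-x}{x_L-x_S}$, $p^{(2)}=\tfrac{x-x_S}{x_L-x_S}$ (nonnegative, summing to $1$, satisfying $p^{(1)}x_S+p^{(2)}x_L=x$), and hang below each child the optimal binary depth-$(d-1)$ martingale of that bias from the inductive hypothesis; the tree is binary at all $d$ levels. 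For each child $j$, stopping there contributes weight $p^{(j)}$ times $\abs{x^{(j)}-x}=C_{d-1}(x^{(j)})$ (by the defining equations for $x_S,x_L$), while continuing contributes weight $p^{(j)}$ times $C_{d-1}(x^{(j)})$ as well (every stopping time in that subtree has this score); the two options coincide, so \emph{every} stopping time of the depth-$d$ martingale scores $p^{(1)}h_1+p^{(2)}h_2=\mathrm{H.M.}(h_1,h_2)=T(C_{d-1})(x)$. Thus $\mathrm{opt}_d(x,1)\le T(C_{d-1})(x)$, the infimum is attained, and with the lower bound $\mathrm{opt}_d(x,1)=T(C_{d-1})(x)$, i.e.\ $C_d=T(C_{d-1})=T^{d-1}(C_1)$, completing the induction.

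\textbf{Main obstacle.} The delicate point is the geometric inequality $M\succcurlyeq L_{12}$, which is exactly what upgrades ``the adversary's expected point lies in the convex hull of points on-or-above $M$'' into ``its height is at least $L_{12}(x)$''; pinning down the slope comparisons ($-1\le\mathrm{slope}(L_{12})\le+1$, via $x_S\le x\le x_L$) rigorously, and cleanly dispatching the degenerate cases ($x\in\{0,1\}$, $x_S=x$ or $x_L=x$ where an $\mathrm{H.M.}$ term vanishes, and martingales with infinite or continuous first-step support inside the infimum), is where the real care goes; the algebra identifying $L_{12}(x)$ with $\mathrm{H.M.}(h_1,h_2)$ is routine.
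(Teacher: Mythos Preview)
Your proposal is correct and follows essentially the same inductive strategy as the paper: the same base case, the same decomposition $W=\max\{\abs{X_1-x},S_1\}$ with $S_1\ge C_{d-1}(X_1)$ by induction, and the same two-child construction at $x_S,x_L$ for the matching upper bound. The only noteworthy difference is in how the lower bound is closed: the paper phrases it as ``the point $(x,\mathrm{max\textnormal{-}score})$ lies in the convex hull of points on-or-above $M$, hence above $Q$'' and leans on \figureref{induct} and concavity, whereas you make this step fully algebraic by proving $M\succcurlyeq L_{12}$ directly via the slope bound $\mathrm{slope}(L_{12})=\tfrac{h_2-h_1}{h_1+h_2}\in[-1,1]$ and then applying linearity of $L_{12}$ with $\EX{X_1}=x$. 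Your version is a cleaner justification of exactly the same geometric fact.
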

\begin{proof}
	Let $C_1$ be the curve defined by
	the zeros of the equation $Y=2X(1-X)$ such that 
	$Y\geq 0$. Let $C_{d+1}$ be the curve obtained by 
	applying the transformation $T$ on $C_d$.\\
	Let $X'=\{X'=\{X'_i\}_{i=0}^{n},E'=\{E'_i\}_{i=1}^
	{n}\}$ be a martingale over the sample space 
	$\Omega'=\Omega'_1\times\dots\times\Omega'_n$ such
	that for each $i\in[n]$, $\Omega'_i=\{0,1\}, 
	E'_i(0)=l, E'_i(1)=r, X'_0=x$, for each 
	$(e_1,e_2,\dots,e_{n-1})\in \{l,r\}^{n-1}$, 
	$X'_n(e_1,\dots,e_{n-1},l)=0$ and 
	$X'_n(e_1,\dots,e_{n-1},r)=1$, for each $i\in 
	\{1,\dotsc,n-1\}$, $X'_i(e_1,e_2,\dots,e_{i-1},l)$ is the $X$
	coordinate of the interception of the line 
	$Y=-X+X'_{i-1}(e_1,\dots,e_{i-1})$ and the curve 
	$C_{n-i}$, and $X'_i(e_1,e_2,\dots,e_{i-1},r)$ is 
	the $X$ coordinate of the interception of the line
	$Y=X-X'_{i-1}(e_1,\dots,e_{i-1})$ and the curve 
	$C_{n-i}$.\\ Moreover, for each 
	$(e_1,e_2,\dots,e_{i-1})\in \{l,r\}^{i-1}$,  
	$$\Pr[l|E'_1=e_1,\dots,E'_{i-1}=e_{i-1}]=\frac{X'_
		{i}(e_1,\dots,e_{i-1},r)-X'_{i-1}(e_1,\dots,e_{i-1
		})}{X'_{i}(e_1,\dots,e_{i-1},r)-X'_{i}(e_1,\dots,e
		_{i-1},l)}$$ and
	$$\Pr[r|E'_1=e_1,\dots,E'_{i-1}=e_{i-1}]=\frac{X'_
		{i-1}(e_1,\dots,e_{i-1})-X'_{i}(e_1,\dots,e_{i-1},
		l)}{X'_{i}(e_1,\dots,e_{i-1},r)-X'_{i}(e_1,\dots,e
		_{i-1},l)}$$
	We claim that for each martingale 
	$\{X=\{X_i\}_{i=1}^n,E=\{E_i\}_{i=1}^n\}$ with 
	respect to the sample space 
	$\Omega=\Omega_1\times\dots\times\Omega_n$, we 
	have $\mathrm{max\textnormal{-}score}_1(X,E)\geq 
	\mathrm{max\textnormal{-}score}_1(X',E')$ and so 
	$\mathrm{opt}_d(x,1)=\mathrm{max\textnormal{-}sc
		ore}_1(X',E')$.\\ We prove our claim by induction on
	the depth of the martingale i.e. $n$. \\For the base
	case $n=1$, suppose $\Omega_1=\{1,\dots,t\}$, 
	$E_1(j)=e_1^{(j)}$. Without loss of generality, we 
	assume that $X_1(e_1^{(1)})\leq X_1(e_1^{(2)})\dots\leq 
	X_1(e_1^{(t)})$. Then there exist $p^{(1)},\dots,p^{(n)}$ such 
	that $x=\sum_{j=1}^{t}p^{(j)}X_1(e_1^{(j)})$ and 
	$\sum_{j=1}^{t}p^{(j)}=1$. In this case, since 
	$X_1(e_1^{(j)})$ is $0$ or $1$, there exists some $s$ 
	such that $x=\sum_{j=s+1}^{t}p^{(j)}$ and 
	$$\mathrm{max\textnormal{-}score}_1(X,E)=(p^{(1)}+\dots+
	p^{(s)})x+(p^{(s+1)}+\dots+p^{(t)})(1-x)=2x(1-x)$$ 
	But, the maximum score of a martingale $(X,E)$ 
	with respect to $\Omega_1=\{0,1\}$ such that 
	$\Pr[0]=x$ and $\Pr[1]=1-x$ is also 
	$2x(1-x)$.\\ Suppose that the claim is true for 
	depth $d$, and we want to prove it for the depth 
	$d+1$. \\
	Suppose that the martingale $\{ 
	X=\{X_i\}_{i=0}^{d+1},E=\{E_i\}^{d+1}_{i=1} \}$ 
	over $\Omega=\Omega_1\times 
	\Omega_2\times\dots\times\Omega_{d+1}$ is given such that $\Omega_1=\{1,\dots,t\}$ and for each $j\in \Omega_1$,
	$E_1(j)=e_1^{(j)}$.
	

	Note that for each $j\in \{1,2,\dotsc,t\}$, we define the 
	martingale 
	$\{V^{(j)}=\{V^{(j)}_i\}_{i=0}^{d}, 
	E^{(j)}=\{E^{(j)}_i\}_{i=2}^{d+1}\}$ over 
	$\Omega_2\times\dots\times\Omega_{d+1}$ where 
	$V^{(j)}_{i}(e_2,\dots,e_{d+1}):=X_{i+1}(e_1^{(j)},e_2,
	\dots,e_{d+1})$ is a martingale of depth $d$. Observe that
	for 
	any $j$
	and any value of $V_0=X_1(e_1^{(j)})$, there exists an 
	stopping time 
	$\tau^{(j)}_{max}(V^{(j)},E^{(j)}):\Omega_2\times
	\dots\times\Omega_{d+1}\rightarrow\{2,\dots,n\}$ 
	that maximizes the score of the martingale 
	$(V^{(j)},E^{(j)})$. Now, note that 
	$\tau_{max}(X,E)(e_1^{(j)},e_2,\dots,e_{d+1})$ equals $1$ 
	(which means that the martingale stops at time 
	$1$) when 
	$$|X_0-X_1(e_1^{(j)})|\geq 
	\mathrm{max\textnormal{-}score}_1(V^{(j)},E^{(j)})$$ 
	or equals $\tau^{(j)}_{max}(V^{(j)},E^{(j)})(e_2,
	\dots,e_{d+1})$ when $$|X_0-X_1(e_1^{(j)})|\leq 
	\mathrm{max\textnormal{-}score}_1(V^{(j)},E^{(j)})$$
	Let us define 
	$B_j\defeq\mathrm{max}(\mathrm{max\textnormal{-}score}_1(
	V^{(j)},E^{(j)}),|X_0-X_1(e_1^{(j)})|)$.
	We represent each point $Z^{(j)}\eqdef(X_1(e_1^{(j)}),B_j)$ in a 
	plane, see \figureref{induct}. In this plane, for each point $(x,y)$, 
	the value $y$ represents the score of a stopping time in a martingale 
	whose average is $x$ (the first value that the martingale takes).
	Since $X$ is a martingale, we have $X_0=\sum_{j=1}^{t}p^{(j)}X_1(e_1^{(j)})$.
	It also follows from the definition of $B_j$ that
	$$\mathrm{max\textnormal{-}score}_1(X,E)=\sum_{j=1}^{t}\Pr[E_1=j]B_j=\sum_{j=1}^{t}p^{(j)}B_j.$$
	Therefore, we have $$(X_0,\mathrm{max\textnormal{-}score}_1(X,E
	))=\sum_{j=1}^{t}p^{(j)}(X_1(e_1^{(j)}),B_j)=\sum_{j=1}^{t}p^{(j)}Z^{(j)}.$$
	Consequently, the point $(X_0,\mathrm{max\textnormal{-}score}_1(X,E
	))$ lies on the intersection of the line $X=X_0$ and the 
	convex hull of the points $Z^{(1)},\dotsc, Z^{(t)}$ 
	(Note that the argument is true even if we assume that $t$ is not finite).     
	
	It follows from the inductive hypothesis that for each 
	$j$, there exists a martingale of depth
	$d$, $\{X'^{(j)}=\{X'^{(j)}_i\}_{i=1}^{d+1}, 
	E'^{(j)}=\{E'^{(j)}_i\}_{i=2}^{d+1}\}$ over 
	$\Omega_2^{'}\times\dots\times\Omega_{d+1}^{'}$ such that $X'^{(j)}_1=X_1(e_1^{(j)})$ and for each $i\in \{2,\dots,d+1\}$, $\vert \Omega_i^{'}\vert=2$ and
	$\mathrm{max\textnormal{-}score}_1(X',E')=\mathrm{opt}_d(X'^{(j)}_1,1)=\mathrm{opt}_d(X_1(e_1^{(j)}))$. Therefore, $\mathrm{max\textnormal{-}score}_1(V^{(j)},E^{(j)})\geq \mathrm{opt}_d(X_1(e_1^{(j)}),1)$. This implies that
	$$ B_j\geq \mathrm{max}(\mathrm{opt}_d(X_1(e_1^{(j)}),1),|X_0-X_1(e_1^{(j)})|) $$
	that means the points $Z^{(1)},\dotsc, Z^{(t)}$ lie above 
	the curve defined by the zeros of the 
	equation $Y=\mathrm{max}(\mathrm{opt}_{d}(X,1),|X_0-
	X|)=\mathrm{max}(C_d(X),
	|X-X_0|)$. 
	Note that according to the inductive hypothesis, $C_d$ (the zeros of the equation 
	$Y=\mathrm{opt}_{d}(X,1)$) is equal to the curve $T^{d-1}(C_1)$ which is concave downward as a consequence of \claimref{preserve-concavity}. Thus, the intersection of the line $X=X_0$ and the 
	convex hull of the points $Z^{(1)},\dotsc, Z^{(t)}$ is above the point $Q=(x,C_{d+1}(x))$, see \figureref{induct}. Moreover, by choosing $t=2$ and $Z^{(1)}=P_1$ and $Z^{(2)}=P_2$, the score $T(C_d)(x)$ (point $Q$) can be achieved. But, note that according to the inductive hypothesis, the points $P_1$ and $P_2$ can be achieved by the martingale $\{X'^{(j)}=\{X'^{(j)}_i\}_{i=1}^{d+1}, E'^{(j)}=\{E'^{(j)}_i\}_{i=2}^{d+1}\}$. This shows
	that the martingale of depth $d+1$ with optimal score is achieved
	when for each $i\in \{1,\dotsc,d+1\}$, $|\Omega_i|=2$. Also as mentioned earlier, the height of the point $Q$ is $T(C_d)(x)$ and according to induction hypothesis, $C_d(x)=T^{d-1}(C_1)(x)$, so $\mathrm{opt}_{d+1}(x,1)=T^{d}(C_1)(x)$.    
\end{proof}

\end{document}